\documentclass[ecta,nameyear,draft]{econsocart} 

\usepackage{amsmath,blkarray,bm}
\usepackage{amsfonts} 
\usepackage{amsthm} 
\usepackage{amssymb}
\usepackage{array} %
\usepackage{bbm} 
\usepackage{booktabs}
\usepackage{caption}
\usepackage{centernot}
\usepackage{colortbl}
\usepackage{etoolbox}
\usepackage{enumitem}
\usepackage[mathscr]{euscript}
\usepackage{amssymb}
\usepackage{subcaption}
\usepackage{hhline}
\usepackage{pgfplots} 
\usepackage{pdfsync} 
\usepackage{longtable}
\usepackage{tabularx}
\usepackage{rotating}
\usepackage{threeparttable}
\usepackage{mathtools}
\usepackage{multirow}
\usepackage{setspace} 
\usepackage{scalerel}
\usepackage{stmaryrd}
\usetikzlibrary{shapes.geometric, arrows,positioning}
\usetikzlibrary{arrows}
\usepackage{tikz}
\usepackage{xr} 
\usepackage{xcolor}

\usetikzlibrary{positioning}

\usepackage{lscape}
\usepackage{arydshln}
\usepackage{amssymb}
\usepackage{algorithm}
\usepackage{algpseudocode}
\usepackage{pdflscape}
\makeatletter
\newcommand*{\addFileDependency}[1]{% argument=file name and extension
	\typeout{(#1)}% latexmk will find this if $recorder=0
	\@addtofilelist{#1}
	\IfFileExists{#1}{}{\typeout{No file #1.}}
}\makeatother

\newcommand*{\myexternaldocument}[1]{%
	\externaldocument{#1}%
	\addFileDependency{#1.tex}%
	\addFileDependency{#1.aux}%
}

\myexternaldocument{supplement}
\RequirePackage[colorlinks,citecolor=blue,linkcolor=blue,urlcolor=blue,pagebackref]{hyperref}

\startlocaldefs

\numberwithin{equation}{section}
\theoremstyle{plain}
\newtheorem{assumption}{Assumption} 
\newtheorem{theorem}{Theorem}
\newtheorem{lemma}{Lemma}
\newtheorem{proposition}{Proposition}[section] 
\newtheorem{corollary}{Corollary}[section]  

\theoremstyle{definition}
 
\numberwithin{assumption}{section}
\numberwithin{lemma}{section}
\numberwithin{definition}{section}

\theoremstyle{remark}

\def\argmax{\mathop{\rm arg\,max}}

\endlocaldefs

\begin{document}

	\renewcommand\thefootnote{\alph{footnote}}
	
	\begin{frontmatter} 
		\title{Penalized Likelihood for Dyadic Network Formation Models with Degree Heterogeneity}
		\runtitle{ } 
		\begin{aug}
			\author[id=au1,addressref={add1}]{\fnms{Zizhong}~\snm{Yan}\ead[label=e1]{helloyzz@gmail.com}}
			\author[id=au2,addressref={add2}]{\fnms{Jingrong}~\snm{Li}\ead[label=e2]{lijingrong@scau.edu.cn}}
			\author[id=au3,addressref={add1}]{\fnms{Yi}~\snm{Zhang}\ead[label=e3]{yzhang31@jnu.edu.cn}}
			\address[id=add1]{%
				\orgdiv{Institute for Economic and Social Research},
				\orgname{ Jinan University}}
			\address[id=add2]{%
				\orgdiv{College of Economics and Management},
				\orgname{South China Agricultural University}}
		\end{aug}
		
		\support{We are grateful to Mingli Chen, Qihui Chen, Iv\'{a}n Fern\'{a}ndez-Val, Bryan Graham, Lung-Fei Lee, Arthur Lewbel, Shuyang Sheng, Wei Shi, Ji-Liang Shiu, Xun Tang, Ao Wang, and Chao Yang for helpful comments and suggestions. We also thank seminar and conference participants at CUHK-Shenzhen, SHUFE, Warwick, Sun Yat-Sen, Jinan, and CIEP. 
			Yan acknowledges the support from National Natural Science Foundation of China (No. 72103079). 
			Li and Yan acknowledge the  Grant for Humanities and Social Sciences Research of the Ministry of Education (No. 25YJC91000 and 25JDSZ3124). 
			Zhang acknowledges  the National Natural Science Foundation of China (No. 72203077).
			All remaining errors are our own.
		}

		\begin{abstract}
			Estimating network formation models with degree heterogeneity raises two problems  in empirical networks. 
			First, agents that send no links, receive no links, or link to all remaining agents can make the fixed-effects MLE fail to exist. Trimming these agents changes the estimation sample and induces selection bias. 
			Second, the incidental-parameter problem biases common parameters and average partial effects. We resolve both issues through a \textit{penalized likelihood} approach.
			Our leading specification is a directed network model with reciprocity, nesting the standard undirected and non-reciprocal directed models.
			The penalty guarantees finite-sample existence and yields bias corrections for coefficients and partial effects. 
			We establish asymptotic results without imposing compactness on the fixed-effects.
			Allowing the  fixed effects to diverge at a logarithmic rate, our asymptotic framework accommodates the degree sparsity ubiquitous in large empirical networks.
			A global trade application demonstrates that our estimator avoids selection bias and recovers robust parameters where conventional methods fail.
		\end{abstract}
		\begin{keyword}
			\kwd{Dyadic network formation}
			\kwd{Degree heterogeneity}
			\kwd{Penalized likelihood}
			\kwd{Sparse networks}
			\kwd{Incidental parameter bias}
			\kwd{Average partial effects}
		\end{keyword}
	\end{frontmatter}
	
	%%%%%%%%%%%%%%%%%%%%%%%%%%%%%%%%%%%%%%%%%%%%%%%%%%%%%%%%%%%%%%%%%%%%%%%%%
	%%%% Main text entry area:
	%%%%%%%%%%%%%%%%%%%%%%%%%%%%%%%%%%%%%%%%%%%%%%%%%%%%%%%%%%%%%%%%%%%%%%%%%
	
	%\newpage
	\setcounter{footnote}{0}
	\renewcommand\thefootnote{\arabic{footnote}}

\section{Introduction}
Empirical analysis of network formation routinely relies on dyadic models to relate the presence of a link to observable characteristics of the pair.
Because link formation also depends on persistent agent specific attributes that are often unobserved by the researcher, it is standard practice to control for degree heterogeneity through sender and receiver fixed effects. In many empirical networks, however, this creates an estimation trilemma.

First, the fixed effects maximum likelihood estimator (MLE) frequently fails to exist in finite samples. Crucially, this existence failure is not confined to \textit{globally} sparse networks, it routinely occurs in moderately dense data whenever \textit{local} degree sparsity is present, such as agents sending no links, receiving no links, or connecting to everyone. 
In practice, researchers attempt to restore existence by trimming these boundary-degree nodes. However, in directed networks, this mechanical trimming is iterative and cascading. The resulting MLE is therefore computed on an endogenously selected subnetwork rather than the full observed network, inducing severe sample selection bias.
Second, even when the standard MLE exists, the \textit{incidental parameter problem} induces first order bias in the common parameters \citep{neyman1948consistent}. 
Third, while analytical bias corrections exist, extending them to structural functions such as average partial effects (APEs) typically requires cumbersome higher-order derivative calculations.

We address these  problems jointly with a \textit{penalized likelihood} (PL) approach to dyadic network formation models with degree heterogeneity. The PL estimator remains feasible in network samples with boundary-degree nodes, and yields bias corrected estimates of both coefficients and average structural functions.
Our leading specification is a directed network formation model with both unilateral incentives and bilateral reciprocity. Let $g_{ij}=1$ if agent $i$ forms a link to agent $j$, and let $g_{ij}=0$ otherwise. The payoff to agent $i$ takes the form
\begin{equation*}
	\mathcal{U}_{ij} = \underbrace{ g_{ij} \cdot X_{ij}'\beta}_{\text{directed utility}}  \quad  + \quad  \underbrace{ g_{ij} g_{ji} \cdot Z_{ij}'\rho }_{ \text{mutual utility} }  \quad  + \quad    \underbrace{g_{ij} \cdot ( \alpha_i + \gamma_j)}_{\text{degree heterogeneity}} ,
\end{equation*}
where $\alpha_i$ and $\gamma_j$ denote sender and receiver heterogeneity, and the mutual utility term captures reciprocity via bilateral dependence in link formation. Following the dynamic game approach of \citet{mele2017structural}, we show that, under an ergodicity condition, a stochastic best response dynamic yields a tractable stationary likelihood and a unique equilibrium distribution. This foundation motivates the likelihood, and our PL estimator is built on it.

The reciprocal directed specification serves as our leading case because it structurally nests the standard dyadic models in the literature. 
Setting the mutual utility parameter to zero ($\rho=0$) yields the directed model without reciprocity \citep{jochmans2018semiparametric, dzemski2019empirical, yan2019statistical, hughes2025estimating}, while equating the sender and receiver effects ($\alpha_i=\gamma_i$) under symmetric interactions recovers the undirected model \citep{graham2017econometric}. 
We develop the estimator and asymptotic theory for the general reciprocal case, and the same penalization logic applies to these benchmark models as special cases.

This paper makes four econometric contributions.
First, the PL approach guarantees the estimator exists in finite sample without trimming isolated or boundary-degree nodes. 
The penalty keeps the objective function well behaved when fixed effects drift toward the parameter boundary. 
This avoids the sample selection bias induced by repeated trimming the  network data.

	Second, the same likelihood-based penalty corrects the leading incidental parameter bias in the common parameters. The penalty is constructed from the block-diagonal part of the fixed-effects Hessian, with a \(2\times2\) sender-receiver Hessian block for each node in the reciprocal directed model.
	Our penalty is non-data-dependent and does not require the estimation of additional expectation terms. 
	By contrast, an estimator-level analytical correction can also be derived in the reciprocal setting. However, it still requires existence of the unpenalized estimator and therefore does not resolve the finite-sample nonexistence problem.
	Conditional likelihood methods provide useful benchmarks in nested settings. In the reciprocal directed model, however, they rely on six-node \textit{hexiad} cycles rather than tetrads or quadruples, which is computationally more demanding in large networks. In addition, because fixed effects are conditioned out, recovery of APEs is less direct.\footnote{For our reciprocal setting, we also develop a \textit{hexiad} conditional likelihood approach, which generalizes the \textit{tetrad} logit of \citet{graham2017econometric} and the \textit{quadruple} logit of \citet{jochmans2018semiparametric}. See Appendix \ref{sec:clogitappendix} for details.}

	Third, we extend the correction to APEs. This matters because empirical work typically reports partial effects, average link probabilities, and related structural functions rather than coefficients alone.
	After correcting the likelihood, we show the remaining leading bias in APEs comes from fixed effects uncertainty and can be removed with a simple adjustment based on the second derivatives of the APE function and the fixed-effects Hessian.

	Beyond the finite-sample existence and bias correction, we establish asymptotic existence, consistency, and asymptotic normality without standard compactness assumptions. In the existing literature, assuming a compact parameter space for fixed effects implicitly forces expected node degrees to grow proportionally with the network size, leading to a dense limiting network. We break this limitation by establishing asymptotic results while allowing the true fixed effects to diverge at a logarithmic rate. This growth condition accommodates the degree sparsity often observed in large empirical networks.
	
	Fourth, we approximate the inverse fixed-effects Hessian needed for sparse network asymptotics in the reciprocal directed model. 
	In benchmark undirected, non-reciprocal directed, and two-way fixed effects panel models, a diagonal approximation to the fixed-effects inverse Hessian is often sufficient \citep[see, e.g.,][]{fernandez2016individual,hughes2025estimating}. 
	Reciprocity breaks this simplification because sender and receiver effects are coupled within each node and across reciprocal dyads. 
	We derive a closed form hybrid approximation that combines \(2\times2\) node level Hessian blocks with a low-rank correction for two aggregate weak directions. The approximation remains valid when the network grows and some link probabilities shrink. It provides the uniform bounds used for consistency, asymptotic normality, and the APE correction without compactness assumptions on the fixed effects.
	
	We then apply the method to the global textiles trade network. This network is not globally sparse. Yet the standard fixed-effects MLE is still not available on the observed network. 11 countries have zero in-degree, and trimming them induces 5 major economies (France, Germany, Italy, Japan, and the UK) to become full out-degree countries in the remaining network. The application shows that PL estimation remains feasible on  the full observed network rather than on a selected subnetwork.\footnote{To facilitate empirical use, we provide a Python package, \texttt{NetworkFm}, that implements the PL estimator and the nested models used in the paper: \url{github.com/zizhongyan/NetworkFm}.}

	\textbf{Related literature.} This paper connects to the literatures on fixed-effects estimation in nonlinear panel and network formation models. In nonlinear panel models with two-way fixed effects, bias correction for parameters and average partial effects is extensively studied \citep[][among others]{fernandez2016individual, jochmans2019likelihood, chen2021nonlinear, leng2025debiased, yan2026robust}.\footnote{Because the PL correction acts on the likelihood itself, it preserves the extremum structure of the objective function. This makes likelihood-based inference a natural extension, in line with recent results for nonlinear panel models in \citet{leng2025debiased} and \citet{yan2026robust}. We do not develop that theory here.}
	In the network econometric literature, \citet{graham2017econometric}, \citet{jochmans2018semiparametric,dzemski2019empirical,hughes2025estimating,li2025bagging} develop fixed-effects estimators. From a statistical perspective, \citet{yan2016asymptotics,yan2019statistical} establish asymptotic properties for dyadic network models under diverging parameter spaces. \citet{ma2021determining} identify latent community structures via random block models rather than individual fixed effects. We build on this literature by simultaneously accommodating reciprocity, ensuring finite-sample existence without network trimming, and providing a unified likelihood-based correction for both coefficients and APEs. 
	
	A related literature studies network formation under broader strategic interactions and network externalities, including \cite{miyauchi2016structural, mele2017structural, boucher2017my, de2018identifying, leung2019inference, sheng2020structural, hoshino2022pairwise, ridder2025two, leung2026normal, menzel2026strategic}, and related work. We take a narrower focus on dyadic dependence generated by reciprocity, which yields a tractable estimator for sparse directed networks with unrestricted degree heterogeneity.

	\textbf{Outline of the paper.}
	Section \ref{sec:microUtility} introduces the reciprocal model and the likelihood foundation. 
	Section \ref{sec:MLE} presents the penalized likelihood estimator and the APE correction. 
	Section \ref{sec:asy} establishes the existence results and the large sample theory. 
	Section \ref{sec:montecarlo} reports the Monte Carlo evidence. 
	Section \ref{sec:empirical} presents the trade application. 
	The final section concludes. Proofs and additional computational details are collected in the appendices.

	\section{Dyadic Network Formation Model and Likelihood Foundation}\label{sec:microUtility}
	\subsection{ Payoff  Specification and Nested Models}
	Consider a network of $n$ agents (e.g., individuals, firms, or countries). The observed network is an $n\times n$ adjacency matrix $G=(g_{ij})$, where $g_{ij}\in\{0,1\}$ indicates whether agent $i$ forms a directed link to agent $j$. In the trade network, for example, a directed link indicates that one country exports to another. We rule out self-links, so $g_{ii}=0$ for all $i$.
	Let $\mathcal{U}_{ij}$ denote the payoff to agent $i$ from establishing a directed link to agent $j$. We propose a transferable utility function that incorporates both directed utility and mutual reciprocity:
	\begin{equation} 
		\mathcal{U}_{ij} = g_{ij}X_{ij}'\beta  + g_{ij}g_{ji}Z_{ij}'\rho + g_{ij}\left(\alpha_i + \gamma_j\right).
		\label{utility}
	\end{equation} 
	where the first directed utility term captures unilateral incentives to form a link from $i$ to $j$, while the second mutual utility term captures reciprocity through bilateral interdependence. $\alpha_i$ and $\gamma_j$ represent sender and receiver effects, respectively; $X_{ij}$ are covariates that enter the directed utility, and  $Z_{ij}$ are covariates that enter the mutual component.  
	The micro-foundation requires the covariates $Z_{ij}$ driving the reciprocal effect to be symmetric across dyads ($Z_{ij} = Z_{ji}$), whereas the directed covariates $X_{ij}$ can be asymmetric. We assume both $X_{ij}$ and $Z_{ij}$ include a constant term to capture \textit{baseline linking utilities}.

	The payoff specification in  \eqref{utility} provides a unified framework that encompasses two benchmark models that are standard in the literature:
	
	1) \textit{Directed networks without reciprocity.} If we assume that link formation relies purely on unilateral incentives, we shut down the mutual utility component by restricting $\rho = 0$. The payoff collapses to
	\begin{equation} 
		\mathcal{U}_{ij} = g_{ij}\left(X_{ij}'\beta + \alpha_i + \gamma_j\right),
		\label{utility2}
	\end{equation} 
	which yields the standard directed model with degree heterogeneity studied by \citet{jochmans2018semiparametric,yan2019statistical,dzemski2019empirical}, among others.
	Under this restriction, agent $i$'s decision to link with  $j$ is entirely independent of agent $j$'s linking decision.

	2) \textit{ Undirected networks.} 
	An undirected network requires that connections are symmetric by construction, so that $g_{ij} = g_{ji} \equiv g_{ij}^*$. 
	The same specification also arises as a special case of the directed model when link payoffs depend only on reciprocity.
	In this case, the dyadic outcome is whether two agents are mutually linked, which can be written as
	$g_{ij}g_{ji}\equiv g_{ij}^*$.
	Under symmetry, sender and receiver heterogeneity are no longer separately distinguished and collapse to a single node-specific effect, so that $\alpha_i=\gamma_i$.
	The payoff reduces to
	\begin{equation} 
		\mathcal{U}_{ij} = g_{ij}^* \left(Z_{ij}'\rho + \alpha_i + \alpha_j\right),
		\label{utility3}
	\end{equation} 
	which is the undirected specification studied by \citet{graham2017econometric}. 
	
	In Section  \ref{sec:MLE}, we develop the penalized likelihood  estimator for the generalized reciprocal model in  \eqref{utility} as the leading case, and our estimation and bias-correction methods automatically accommodates these benchmark cases.

	\subsection{ Stationary Equilibrium and Likelihood}
	
	In the directed model without reciprocity \eqref{utility2}, each link decision is unilateral, so the unique equilibrium is immediate. In the setting \eqref{utility3}, the dyad is summarized by a single symmetric link indicator, so the unique equilibrium is again straightforward. 
	
	However, once reciprocity is introduced to  the directed model \eqref{utility}, the pair $(g_{ij},g_{ji})$ becomes jointly determined. 
	The presence of the mutual utility term $g_{ij}g_{ji}Z_{ij}'\rho$ introduces strategic interdependence between agent $i$ and agent $j$. 
	As \citet{graham2020network} notes, combining both unilateral incentives and bilateral reciprocity in cross-sectional network data requires additional structure.  
	It is also well documented that in static, simultaneous-move games with such bilateral interactions, the network formation process generally suffers from multiple equilibria \citep{lewbel2007coherency, ciliberto2009market}. \citet{hoshino2022pairwise} highlights that such multiple equilibria render  an  incomplete model.
	
	To bypass this problem and obtain a well-defined likelihood, we follow the dynamic game approach of \citet{mele2017structural}. We model the observed cross-sectional network as the long-run realization of a stochastic best-response dynamic process \citep{blume1993statistical}. 
	Suppose that in each period $t$, agent $i$ meets $j$ and decides whether to update the connection $g_{ij}^t$ to maximize utility, taking $j$'s previous connection status $g_{ji}^{t-1}$ as given. Before updating, agent $i$ receives an idiosyncratic preference shock $\epsilon^t$. Agent $i$ sets $g_{ij}^t = 1$ if and only if $\mathcal{U}_{i}(g_{ij}^t = 1, g_{ji}^{t-1}, X_{ij}, Z_{ij}) + \epsilon_1^t \ge \mathcal{U}_{i}(g_{ij}^t = 0, g_{ji}^{t-1}, X_{ij}, Z_{ij}) + \epsilon_0^t$. We impose the following standard conditions on this dynamic process:
	\begin{assumption}[Dynamic network formation] \label{assumption1}
		The dynamic network formation game satisfies:
		\begin{enumerate}[label=(\roman*)]
			\item (Symmetry in mutual utility covariates) $Z_{ij} = Z_{ji}$ for all $i \neq j$.
			\item (Meeting technology) Every pair $(i, j)$ has a strictly positive probability of meeting in each period, independent of their existing links $g_{ij}$ and $g_{ji}$.
			\item (Idiosyncratic shocks) The preference shocks $\epsilon_0^t$ and $\epsilon_1^t$ are i.i.d.\ Type I extreme value distributed across all pairs and time periods.
		\end{enumerate}
	\end{assumption}
	Under Assumption \ref{assumption1}, the network formation process operates as a potential game \citep{monderer1996potential}. With strict positivity of meeting probabilities, the resulting Markov chain of network sequences is ergodic. Ergodicity ensures that the dynamic process converges to a unique stationary distribution, independent of the initial network configuration. This allows us to characterize the equilibrium distribution of the network as a standard exponential family model:
	\begin{proposition}[Equilibrium of the Dynamic Network Formation]\label{thm:micro}
		Suppose Assumption \ref{assumption1} holds. Then, the dynamic network formation game for any pair $(i,j)$ converges to a unique stationary distribution
		\begin{equation}
			\pi(g_{ij}, g_{ji} | \beta, \rho, \alpha_i, \gamma_j, \alpha_j, \gamma_i) = \frac{\exp(g_{ij}B_{ij} + g_{ji}B_{ji} + g_{ij}g_{ji}C_{ij})}{1 + \exp(B_{ij}) + \exp(B_{ji}) + \exp(B_{ij} + B_{ji} + C_{ij})}, \label{eqn:prop1pi}
		\end{equation}
		where $B_{ij} = X_{ij}'\beta + \alpha_i + \gamma_j$, $B_{ji} = X_{ji}'\beta + \alpha_j + \gamma_i$, and $C_{ij} = C_{ji} = Z_{ij}'\rho$.\footnote{
			\textit{Sketch of the proof.} The symmetry of $Z_{ij}$ enables the construction of a potential function that exactly maps the change in an individual's utility to a global change in the network state. Combined with the detailed balance condition, this ensures convergence to the unique Gibbs measure characterized by \eqref{eqn:prop1pi}. The formal proof is presented in  Appendix \ref{sec:appPropProof} of the Supplemental Material.
		}
	\end{proposition}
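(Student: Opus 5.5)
The plan is to treat the dyad $(g_{ij},g_{ji})$ as a finite-state Markov chain on $\{0,1\}^2$ generated by the stochastic best-response dynamic. I will exhibit a potential function, verify detailed balance for the Gibbs measure built from it, and then invoke ergodicity for uniqueness and convergence. Because the payoff in \eqref{utility} of agent $i$ from the link $ij$ depends only on $g_{ij}$ and $g_{ji}$, the update of each dyad depends only on that dyad's own state. Meetings are independent of links by Assumption \ref{assumption1}(ii), so the full network chain factorizes across unordered pairs, and it suffices to study one dyad.

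\textbf{Step 1 (potential).} Define
\[
Q(g_{ij},g_{ji}) = g_{ij}B_{ij} + g_{ji}B_{ji} + g_{ij}g_{ji}C_{ij}.
\]
Agent $i$'s utility difference from switching $g_{ij}$ from $0$ to $1$, holding $g_{ji}$ fixed, is $B_{ij}+g_{ji}Z_{ij}'\rho$. The corresponding change in $Q$ is $B_{ij}+g_{ji}C_{ij}$, so the two coincide. The analogous statement for agent $j$ requires $Z_{ji}'\rho = Z_{ij}'\rho$, which is exactly Assumption \ref{assumption1}(i). Hence $Q$ is an exact potential in the sense of \citet{monderer1996potential}.

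\textbf{Step 2 (transition probabilities and detailed balance).} Under Assumption \ref{assumption1}(iii), when $i$ is selected to revise, the logit choice gives
\[
P(g_{ij}^t=1\mid g_{ji}^{t-1}) = \frac{\exp(B_{ij}+g_{ji}^{t-1}C_{ij})}{1+\exp(B_{ij}+g_{ji}^{t-1}C_{ij})}.
\]
Let $\pi(g)\propto \exp(Q(g))$. Take two states $g$ and $g'$ that differ only in the $ij$ coordinate, and let $m_{ij}>0$ be the probability that $i$ meets $j$ and revises. Then
\[
\pi(g)\,m_{ij}\,P(g\to g') = \pi(g')\,m_{ij}\,P(g'\to g).
\]
This holds because the ratio of logit probabilities equals $\exp(Q(g')-Q(g))$, the two denominators being identical since they share the same $g_{ji}$. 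States differing in both coordinates are not connected in one step, and self-transitions are trivial. So detailed balance holds and $\pi$ is stationary. Normalizing over the four states gives exactly the denominator in \eqref{eqn:prop1pi}.

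\textbf{Step 3 (uniqueness and convergence).} Meeting probabilities are strictly positive and the logit probabilities lie strictly in $(0,1)$. Hence every state reaches every other state in at most two steps, so the chain is irreducible. The chain can also stay put with positive probability, so it is aperiodic. On a finite state space, irreducibility and aperiodicity imply that the stationary distribution is unique and that the chain converges to it from any initial network.

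The main obstacle is Step 1 together with the timing convention in which $j$'s previous link status is taken as given. The asymmetric-update detailed-balance check works only because the reciprocity payoff is the same for both agents, which is why the symmetry of $Z_{ij}$ is needed. One must also confirm that the per-period meeting and selection mechanism does not break reversibility, for example by making the meeting probability depend on who revises. I would handle this by conditioning on the revising agent: the probability $m_{ij}$ then appears on both sides of the balance equation and cancels.
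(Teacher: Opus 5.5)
Your proposal is correct and follows the same route as the paper's sketch. You build an exact potential, which exists only because $Z_{ij}=Z_{ji}$; you verify detailed balance for the one-link-at-a-time logit revision chain; and you get uniqueness and convergence from irreducibility and aperiodicity on a finite state space. The one loose phrase is that the network chain ``factorizes'': with one pair meeting per period it does not, and Assumption \ref{assumption1}(ii) only rules out dependence of meetings on $g_{ij},g_{ji}$. Either note that each dyad's marginal process is itself Markov when meeting rates ignore other links, or run your detailed-balance and irreducibility argument unchanged on $\{0,1\}^{n(n-1)}$ with the summed potential; its unique stationary law is the product of the dyadic measures in \eqref{eqn:prop1pi}.
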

	Proposition \ref{thm:micro} provides the equilibrium distribution that is essential for our econometric strategy built in Section \ref{sec:MLE}. Because the model rules out external dependence beyond the dyad  $(i,j)$, the unique stationary distribution for the entire network of $n$ agents is simply the product of the dyadic distributions: $\prod_{i=1}^n \prod_{j>i}^n \pi(g_{ij}, g_{ji} | \cdot)$. 
	
	This dyadic independence yields a likelihood with a tractable denominator in \eqref{eqn:prop1pi} and keeps estimation within the frequentist MLE framework. By contrast, models with broader link interdependence typically lead to an intractable denominator and require Bayesian MCMC methods.\footnote{To overcome the estimation difficulty,  \cite{mele2017structural, mele2020does} applies the Bayesian double Metropolis-Hastings algorithm based on \cite{murray2012mcmc} to avoid the calculation of the denominator.  \cite{mele2023approximate} developed a variation inference algorithm to estimate a simplified version of the model.}

	\section{ Penalized Likelihood Estimators}\label{sec:MLE}
	This section details the estimation methodology. We first construct the log-likelihood function to show why standard MLE fails in sparse networks. We then introduce a PL estimator that restores finite-sample existence without data trimming and simultaneously corrects incidental-parameter bias.\footnote{In the context of MLE, the \textit{existence} of an estimator means that the supremum of the (log-)likelihood function can be achieved using a finite set of parameter values.} Finally, we provide a computationally tractable procedure to estimate bias-corrected APEs.
	
	\subsection{ Likelihood and  Existence}\label{sec:MLEexist}
	
	We adopt a fixed effects approach  for estimation, treating $\alpha$ and $\gamma$ as parameters to be jointly estimated  with  the common parameters $\theta = (\beta', \rho')'$. We collect  $2n-2$  fixed effects in the vector \textit{ordered by node}: 
	$\lambda=(\alpha_1,\gamma_1,\alpha_2,\gamma_2,\ldots,\alpha_{n-1},\gamma_{n-1})'$,
	subject to the location normalizations $\alpha_n = \gamma_n = 0$.\footnote{Equivalent normalizations are possible. Our choice keeps constant terms in both $X_{ij}$ and $Z_{ij}$, to capture baseline directed and reciprocal utility. Section \ref{sec:asy} returns to identification and  population maximization problem.}
	Using the stationary  distribution  in Proposition \ref{thm:micro}, the log-likelihood  for the observed network $G$ is
	\begin{align}  
		\ell(\theta,\lambda)=\textstyle\sum_{i=1}^n\sum_{j>i}^n \ell_{ij}\left(\theta,\alpha_i,\gamma_j,\alpha_j,\gamma_i \right),
		\label{eqn:loglikelihood1}
	\end{align}
	where
	$\ell_{ij}\left(\theta,\alpha_i,\gamma_j,\alpha_j,\gamma_i \right):=
	\ln\left[\pi(g_{ij},g_{ji}\mid \theta,\alpha_i,\gamma_j,\alpha_j,\gamma_i)\right]$. 
	
	It is useful to rewrite  \eqref{eqn:loglikelihood1} in exponential family form. Let the \textit{out-degree} and \textit{in-degree} of agent $i$ be $d_i:=\sum_{j\neq i}^n g_{ij}$ and $b_i:=\sum_{j\neq i}^n g_{ji}$, respectively. Define the corresponding \textit{degree sequence}s $d=(d_1,\ldots,d_{n-1})'$ and $b=(b_1,\ldots,b_{n-1})'$. The log-likelihood can then be expressed as
	$$
	\ell(\theta,\lambda)=\textstyle\sum_{i=1}^n\sum_{j\neq i}^n g_{ij}X_{ij}'\beta 
	+ \sum_{i=1}^n\sum_{j> i}^n g_{ij} g_{ji}Z_{ij}'\rho + d'\alpha + b'\gamma
	-\sum_{i=1}^n\sum_{j\neq i}^n \mathcal{C}_{ij}(\theta,\lambda),
	$$
	where $\mathcal{C}_{ij}(\theta,\lambda)$ is the log partition function.\footnote{The log partition function is
		$
		\mathcal{C}_{ij}(\theta,\lambda)=\frac{1}{2}\ln\big(1+\exp(B_{ij})+\exp(B_{ji})+\exp(B_{ij}+B_{ji}+C_{ij})\big),
		$
		where $B_{ij}$ and $C_{ij}$ defined in Proposition \ref{thm:micro}. Note that $\mathcal{C}_{ij}(\theta,\lambda)$ is symmetric in $i$ and $j$.
	}
	
	This exponential family representation explicitly isolates the sufficient statistics. For the common parameters $(\beta,\rho)$, they are $\sum_{i\neq j} g_{ij}X_{ij}$ and $\sum_{i<j} g_{ij}g_{ji}Z_{ij}$. For the fixed effects $\lambda$, the sufficient statistics are the out-degree and in-degree sequences $(d', b')'$.  This representation is convenient because the existence problem can be stated directly in terms of degree sequences.
	
	Let $(\widehat\theta, \widehat\lambda) = \argmax_{\theta,\lambda} \ell(\theta,\lambda)$ denote the fixed effects MLE, provided it exists. We assume that the common parameter space is compact and the covariates have bounded support. Under these standard regularity conditions, the finite-sample existence of the MLE dependent entirely on the fixed effects $\lambda$. More importantly, we do not impose compactness on the parameter space of $\lambda$, because doing so would rule out the slowly diverging heterogeneity that is central to the sparse network asymptotics studied later in the paper.
	
	For exponential family models, existence of the fixed effects MLE is equivalent to an interior condition on the sufficient statistics  \citep[see Theorem 5.5 of][]{brown1986fundamentals}. In the present model, this means that the observed bi-degree sequence $(d', b')$ must lie in the interior of the convex hull generated by all possible bi-degree sequences:
	\begin{align}  
		(d,b)\in
		\operatorname{int}\left(\operatorname{convhull}
		\left\{
		(d(G),b(G)):\;
		G\in\{0,1\}^{n\times n},\ g_{ii}=0
		\right\}\right). 
		\label{eqn:convexhull}
	\end{align}  
	
	Empirical  dataset routinely violate this convex-hull condition. The most obvious violations occur when an agent is an ``\textit{isolated node}''---sending no links ($d_i = 0$) or receiving no links ($b_i = 0$)---or a ``\textit{fully connected node}'' ($d_i = n-1$ or $b_i = n-1$). To circumvent this issue, researchers often trim such nodes until the remaining sample becomes estimable. In directed networks, however, this procedure may need to be repeated, because deleting one node can create new null in-degrees or out-degrees in the remaining graph. The standard MLE is then based on an endogenously selected subnetwork rather than on the observed one. 
	In the  trade application of Section \ref{sec:empirical}, for example,  MLE is only computable after trimming countries with zero in-degree. 
	This is precisely the problem that the penalized likelihood estimator is designed to avoid.

	Nevertheless, bounding the nodal degrees strictly away from $0$ and $n-1$ is merely a necessary condition for existence, not a sufficient one. As \citet{rinaldo2013maximum} demonstrate, there are still numerous other cases where the MLE is undefined, and these cases can be difficult to detect in practice.\footnote{\citet{rinaldo2013maximum} focus on the undirected $\beta$-model and systematically enumerate graph configurations on 4--6 nodes where MLE is undefined, despite all node degrees being strictly bounded away from zero and $n-1$.} 
	
	Even when the  MLE exists in a sufficiently dense network, it still suffers from the incidental parameter problem.  Because the dimension of $\lambda$ grows with the sample size $n$, estimation noise in $\widehat{\lambda}$ contaminates the concentrated likelihood, inducing a first-order bias of $\mathcal{O}(n^{-1})$ in $\widehat{\theta}$, which subsequently distorts structural functions such as APEs.
	
	Section \ref{sec:asy}  establishes the \textit{asymptotic} existence properties, and shows that the estimator can exist \textit{asymptotically} if the true fixed effects diverge sufficiently slowly.
	In finite samples, however, non-existence remains a severe barrier in empirical practice.
	The penalized likelihood estimator introduced next is designed specifically to resolve this finite-sample failure and the incidental parameter bias simultaneously, without   data trimming.

	\subsection{Penalized Likelihood and Bias Reduction}\label{sec:MLEPL}
	We construct the penalty term using the block-diagonal extraction of the negative fixed-effects Hessian. 
	Let ${D}(\theta,\lambda) = \operatorname{blkdiag}(D_1(\theta,\lambda), \dots, D_{n-1}(\theta,\lambda))$ denote this extraction from $-H_{\lambda\lambda}(\theta, \lambda) = - \partial_{\lambda\lambda'} \ell(\theta, \lambda)$. 
	For our leading reciprocal directed model, each block $D_i(\theta, \lambda)$ is a $2 \times 2$ matrix corresponding to the sender and receiver effects $(\alpha_i, \gamma_i)$ of node $i$. We define the penalty via the log-determinant:
	\begin{align}
		\eta(\theta, \lambda) = \tfrac{1}{2} \ln \det \bigl( {D}(\theta, \lambda) \bigr) = \tfrac{1}{2} \textstyle\sum_{i=1}^{n-1} \ln \det D_i(\theta, \lambda),
		\label{eqn:penaltymatrixform}
	\end{align}
	yielding the penalized likelihood estimator $(\widehat{\theta}_{\mathrm{PL}}, \widehat{\lambda}_{\mathrm{PL}}) = \argmax_{\theta,\lambda} \left[ \ell(\theta, \lambda) + \eta(\theta, \lambda) \right]$.
	
	A compelling theoretical advantage of this formulation is that it  adapts to the nested benchmark models discussed in Section \ref{sec:microUtility}. In a directed model without reciprocity ($\rho=0$), link formation depends purely on unilateral incentives, rendering the cross-partial derivatives $\partial_{\alpha_i \gamma_i}\ell$ identically zero. The block $D_i(\theta, \lambda)$ then naturally collapses to a strictly diagonal matrix, reducing the penalty to the standard sum of univariate log-Hessians: $\frac{1}{2} \sum_{i=1}^{n-1} [\ln(-\partial_{\alpha_i \alpha_i} \ell) + \ln(-\partial_{\gamma_i \gamma_i} \ell)]$. 
	
	Similarly, in an undirected network where degree heterogeneity is captured by a single node-specific parameter ($\alpha_i = \gamma_i$), the block $D_i$ degenerates to a scalar. The penalty then mathematically simplifies to the exact adjustment for one-way fixed effects: $\frac{1}{2} \sum_{i=1}^{n-1} \ln(-\partial_{\alpha_i \alpha_i} \ell)$. Equation \eqref{eqn:penaltymatrixform} thus provides a completely unified bias-reduction framework across the entire family of dyadic models.

	The mechanism of $\eta(\theta,\lambda)$ is twofold. First, it restores finite-sample existence. When any sender or receiver effect drifts to $\pm\infty$ (e.g., due to an isolated node),  the corresponding diagonal Hessian term collapses toward zero, so $\eta(\theta,\lambda)\to -\infty$. The penalized log-likelihood therefore cannot be maximized at the boundary. Theorem \ref{thm:lambdaBC_exist} in the next section  formalizes this argument. 
	Second, the same penalty removes the leading bias in the common parameters. Intuitively, the score contribution from $\eta(\theta,\lambda)$ offsets the leading bias in the score for $\theta$. Section \ref{sec:asy} makes this statement precise.  Moreover, the penalty is \textit{non-data-dependent}. It is determined entirely by the model-implied probabilities and can be evaluated directly.
	
	We compare the PL estimator with two alternative estimators for the reciprocal model:
	
	\textit{1) Relation to Estimator-level Correction (EC)}.
	For comparison, it is useful to note that the reciprocal model also admits an analytical estimator-level correction when the fixed effects MLE exists. Using the bias expansion derived in Section \ref{sec:asy}, one can form
	$$
	\widehat\theta_{\mathrm{EC}}
	=
	\widehat\theta 
	-
	\tfrac{1}{n-1}\widehat{\mathcal{I}}_n^{-1}\widehat{ \mathscr{B}}_n,
	$$
	where $\widehat{\mathcal{I}}_n$ and  $\widehat{ \mathscr{B}}_n$  are consistent estimates of the information matrix and the leading bias term in the score, respectively. However, $\widehat\theta_{\mathrm{EC}}$ is unavailable whenever the underlying fixed effects MLE does not exist. We therefore treat  $\widehat\theta_{\mathrm{EC}}$ as a benchmark estimator.

	\textit{2) Relation to Conditional Logit}.
	Conditional likelihood methods provide another benchmark. In the undirected case, \citet{graham2017econometric} established the tetrad logit method, while in the directed case without reciprocity, \citet{jochmans2018semiparametric} proposed a quadruple logit. In our reciprocal  case, a conditional likelihood construction can in principle be built from six-node cycles, which we discuss in Appendix \ref{sec:clogitappendix} of  Supplemental Material. For the reciprocity setting that motivates this paper, the PL approach is computationally simpler.

	\subsection{Average Partial Effects}\label{sec:APEintro}
	Empirical network research typically interested in structural functions, such as partial effects or counterfactual link probabilities, not only in coefficients. Let  
	$
	p_{ij}(\theta,\lambda)  
	=  
	\Pr(g_{ij}=1\mid X,Z,\theta,\lambda) ,
	$
	denote the model-implied directed link probability, and let $W_{ij,k}$ denote a generic dyadic regressor, which may be a component of either $X_{ij}$ or $Z_{ij}$, and let $\Delta_{ij}(\theta,\lambda)$ denote the corresponding dyad-specific effect on $p_{ij}(\theta,\lambda)$.
	
	For a continuous regressor $W_{ij,k}$, we define
	$
	\Delta_{ij}(\theta,\lambda)
	=
	\frac{\partial p_{ij}(\theta,\lambda)}{\partial W_{ij,k}}.
	$
	For a binary regressor $W_{ij,k}\in\{0,1\}$, we define
	$
	\Delta_{ij}(\theta,\lambda)
	=
	p_{ij}(\theta,\lambda; W_{ij,k}=1)
	-
	p_{ij}(\theta,\lambda; W_{ij,k}=0),
	$
	holding all other covariates fixed. When $W_{ij,k}$ is a component of $Z_{ij}$, the partial effect is understood as a dyad-level change applied symmetrically to $(i,j)$ and $(j,i)$, since $Z_{ij}=Z_{ji}$ in the reciprocal specification. 
	In the trade application, these partial effects summarize how bilateral frictions affect unilateral and mutual trade links.
	We define the APE by  
	$$
	\Delta(\theta,\lambda)  
	=  
	\tfrac{1}{n(n-1)}  
	\textstyle\sum_{i\neq j}^{n}  
	\Delta_{ij}(\theta,\lambda).  
	$$
	
	The plug-in APE estimator, $\Delta(\widehat\theta_{\mathrm{PL}},\widehat\lambda_{\mathrm{PL}})  =  \frac{1}{n(n-1)}  \sum_{i\neq j}  \Delta_{ij}(\widehat\theta_{\mathrm{PL}},\widehat\lambda_{\mathrm{PL}})$, suffers from asymptotic bias. As established in the two-way fixed effects nonlinear panel literature \citep{fernandez2016individual, yan2026robust}, the first-order bias in APEs stems from three sources: (i) bias in the common parameters; (ii) bias in the fixed-effects point estimates; and (iii) variance of the fixed-effects estimates.
	Evaluating the APE at the PL estimates thus automatically eliminates the first two sources of bias. To remove the remaining first-order bias, we define the following  bias-corrected APE estimator  
	\begin{align}
		\widetilde{\Delta}(\widehat\theta_{\mathrm{PL}},\widehat\lambda_{\mathrm{PL}})  
		=
		\Delta(\widehat\theta_{\mathrm{PL}},\widehat\lambda_{\mathrm{PL}})  
		-  
		\tfrac{1}{2}  
		\operatorname{tr} \left(  [
		\partial_{\lambda\lambda'}  
		\Delta(\widehat\theta_{\mathrm{PL}},\widehat\lambda_{\mathrm{PL}})]  
		\cdot
		[ S(\widehat\theta_{\mathrm{PL}},\widehat\lambda_{\mathrm{PL}})] 
		\right). \label{eq:ape_pl}
	\end{align}
	This correction uses only the second-order derivative of the APE function  and the fixed effects Hessian. No third-order  derivatives are required.  
	
	This is one of the practical advantages of the PL approach. Estimator-level APE corrections can also be constructed, including for the reciprocal model, but they require more higher order derivative calculations. By contrast, \eqref{eq:ape_pl} keeps coefficient correction and APE correction within the same likelihood-based framework. The same logic applies to the nested directed and undirected benchmark models. Section \ref{sec:asyAPE} establishes the asymptotic validity of \eqref{eq:ape_pl}, and Section \ref{sec:montecarlo} shows that it performs well in finite samples.

	\section{Asymptotic Theory for Penalized Likelihood Estimator}\label{sec:asy}
	
	\subsection{Notation and Assumptions}
	We first introduce notation used throughout the theoretical analysis. 
	
	For $a_1,a_2\in\{0,1\}$, let
	$p_{ij}^{(a_1a_2)}(\theta,\lambda):=\Pr(g_{ij}=a_1,g_{ji}=a_2\mid \theta,\lambda),$
	and define the marginal link probability
	$p_{ij}(\theta,\lambda):=\Pr(g_{ij}=1\mid \theta,\lambda)=p_{ij}^{(11)}(\theta,\lambda)+p_{ij}^{(10)}(\theta,\lambda).$
	When no confusion can arise, we suppress arguments evaluated at the true values $(\theta_0,\lambda_0)$, for example, we write 
	$\ell=\ell(\theta_0,\lambda_0),$
	and
	$p_{ij}=p_{ij}(\theta_0,\lambda_0).$
	
	Let $H_{\theta\theta}=\partial_{\theta\theta'}\ell$, $H_{\theta\lambda}=\partial_{\theta\lambda'}\ell$ and  $H_{\lambda\lambda}=\partial_{\lambda\lambda'}\ell$, 
	where $\partial_x f$ denotes the derivative of $f$ with respect to $x$, and additional subscripts indicate higher-order derivatives. We write the Hessian of the log-likelihood as
	$$
	H=
	\begin{pmatrix}
		H_{\theta\theta} & \quad H_{\theta\lambda}\\
		H_{\lambda\theta} & \quad  H_{\lambda\lambda}
	\end{pmatrix},
	\qquad
	H_{\theta\theta}=
	\begin{pmatrix}
		H_{\beta\beta} & \quad H_{\beta\rho}\\
		H_{\rho\beta} &\quad  H_{\rho\rho}
	\end{pmatrix},
	\qquad
	H_{\lambda\lambda}=
	\begin{pmatrix}
		H_{\alpha\alpha} &\quad  H_{\alpha\gamma}\\
		H_{\gamma\alpha} & \quad H_{\gamma\gamma}
	\end{pmatrix}.
	$$
	
	Let
	$\theta_0=(\beta_0',\rho_0')'$, 
	$\lambda_0=(\alpha_{0,1},\gamma_{0,1},\alpha_{0,2},\gamma_{0,2},\ldots,\alpha_{0,n-1},\gamma_{0,n-1})'$
	denote the true parameter vector under the identifying restriction  $\alpha_n=\gamma_n=0$. 
	An equivalent normalization is
	$\sum_{i=1}^n \alpha_i=\sum_{i=1}^n \gamma_i.$
	We adopt $\alpha_n=\gamma_n=0$ because it allows both $X_{ij}$ and $Z_{ij}$ to retain a constant term. In the present model, the constant in $X_{ij}$ has a natural interpretation as baseline directed utility, while the constant in $Z_{ij}$ captures baseline reciprocal utility. 
	Under this identification, the true parameter values are defined as the solution to the population maximization problem
	$$(\theta_0,\lambda_0)=\argmax\limits_{\theta\in\Theta,\lambda\in\Lambda_n}\mathbb E_0\bigl[\ell(\theta,\lambda)\bigr],
	$$
	where $\mathbb E_0$ denotes expectation under the true distribution of the data conditional on the exogenous covariates.
	
	Throughout, $\Theta$ is compact, while $\Lambda_n\subset\mathbb{R}^{2n-2}$ is closed but need not be bounded and hence not compact. 
	The support of the covariates is bounded. 
	We do not impose compactness on $\Lambda_n$, the asymptotic analysis later allows $\|\lambda_0\|_\infty$ to grow at a logarithmic rate. This accommodates sparse networks in which the number of observed links per agent does not grow proportionally with the network size.\footnote{A compact parameter space for the fixed effects would effectively force nodal degrees to grow proportionally with network size in the limit, thereby inducing a dense network asymptotically. This restriction  can be strong for many empirical networks of interest \cite[see for example][]{graham2024sparse}.}
	We collect the baseline conditions below:

\begin{assumption}[Baseline econometric assumptions]
	\label{assumption2}
	Given the log-likelihood function in equation (\ref{eqn:loglikelihood1}), suppose the following conditions hold:
	\begin{enumerate}[label=(\roman*)]
		\item For each unordered dyad $(i,j)$, the conditional distribution of $(g_{ij},g_{ji})$ is characterized by the stationary dyadic likelihood in Proposition \ref{thm:micro}. The econometrician observes $(g_{ij},g_{ji},X_{ij}',X_{ji}',Z_{ij}')$, where $Z_{ij}=Z_{ji}$.
		\item The common parameter $\theta_0 = (\beta_0', \rho_0')'$ lies in the interior of a compact space $\Theta \subset \mathbb{R}^{\dim\theta}$, where $\dim\theta$ is fixed. For each $n$, the fixed-effects parameter space $\Lambda_n \subset \mathbb{R}^{2n-2}$ is closed.
		\item The covariates satisfy $X_{ij} \in \mathbb{X}$ and $Z_{ij} \in \mathbb{Z}$ for all $i \neq j$, where $\mathbb{X}$ and $\mathbb{Z}$ are compact subsets of $\mathbb{R}^{\dim\beta}$ and $\mathbb{R}^{\dim\rho}$, respectively.
		\item The population expected log-likelihood $\mathbb E_0[\ell(\theta,\lambda)]$ is uniquely maximized at $(\theta_0,\lambda_0)\in \Theta\times\Lambda_n$ for all sufficiently large $n$.
	\end{enumerate}
\end{assumption} 
Assumption \ref{assumption2} is standard for nonlinear estimation problems \citep{newey1994large}. The only point that differs from many fixed-effects formulations is that $\Lambda_n$ is assumed closed, but not compact.  Condition (iv) is a higher level condition on the uniqueness of $\theta_0$ and $\lambda_0$, provided that they exist.\footnote{Since the model belongs to the exponential family, Assumption \ref{assumption2}(iv) can equivalently be formulated as a rank condition on the sufficient statistic for large $n$. By Corollary 6.16 in \citet{lehmann1983theory}, full rank of the sufficient statistic implies that the model is minimal, and guarantees uniqueness of the parameter vector.}

	\subsection{Existence}
	Two different existence questions arise in this paper. The first is finite-sample existence. In a realized network, the fixed effects MLE may fail to exist because the observed bi-degree sequence lies on the boundary of the convex support, as discussed in condition \eqref{eqn:convexhull} of Section \ref{sec:MLEexist}. The second is asymptotic existence. Even when an estimator is well defined in a given sample, its large sample behavior still depends on how quickly the true fixed effects are allowed to grow with the network size. The penalty resolves the first problem directly. The second remains a large-sample issue. 
	We discuss finite-sample existence first.
	\begin{theorem}   
		\label{thm:lambdaBC_exist}
		Suppose Assumption \ref{assumption2} holds. Then, for every observed network, the penalized log-likelihood
		$
		\ell_{\mathrm{PL}}(\theta,\lambda)
		=
		\ell(\theta,\lambda)+\eta(\theta,\lambda)
		$
		attains its maximum on $\Theta\times\Lambda_n$. In particular,
		$
		(\widehat\theta_{\mathrm{PL}},\widehat\lambda_{\mathrm{PL}})
		\in
		\argmax_{\theta\in\Theta,\lambda\in\Lambda_n}
		\ell_{\mathrm{PL}}(\theta,\lambda)
		$
		exists.  
	\end{theorem}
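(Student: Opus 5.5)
The plan is to show that $\ell_{\mathrm{PL}}$ is continuous and coercive in $\lambda$, uniformly over the compact set $\Theta$. Concretely, I want $\ell_{\mathrm{PL}}(\theta,\lambda)\to-\infty$ whenever $\|\lambda\|_\infty\to\infty$, uniformly in $\theta\in\Theta$. Given this, fix any point $(\theta^*,\lambda^*)\in\Theta\times\Lambda_n$ and let $c=\ell_{\mathrm{PL}}(\theta^*,\lambda^*)$. The upper level set $\{(\theta,\lambda)\in\Theta\times\Lambda_n:\ell_{\mathrm{PL}}\ge c\}$ is then closed, because $\Lambda_n$ is closed and $\ell_{\mathrm{PL}}$ is continuous. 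It is also bounded, so it is compact, and Weierstrass gives a maximizer.

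\textbf{Step 1: continuity and an upper bound for $\ell$.} Continuity is immediate because $\ell$ and each $\det D_i$ are smooth. I will also check that each $\det D_i>0$, so the logarithm is finite on all of $\Theta\times\Lambda_n$. To see this, note that $D_i$ is a sum over $j$ of covariance matrices of the vector $(g_{ij},g_{ji})$ under the dyadic law. Each such matrix is a $2\times2$ covariance of two binary variables with all four cells having positive probability, so it is positive definite. (Because of the normalization $\alpha_n=\gamma_n=0$, the contributions of dyads involving node $n$ enter cleanly.) Each $\ell_{ij}$ is a log-probability, so $\ell\le 0$ everywhere. Hence it suffices to show that $\eta\to-\infty$ as $\|\lambda\|_\infty\to\infty$, uniformly in $\theta$.

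\textbf{Step 2: upper bound on the penalty.} I bound $\ln\det D_i\le \ln (D_i)_{11}+\ln (D_i)_{22}$ by Hadamard's inequality. Here $(D_i)_{11}=\sum_{j\ne i}p_{ij}(1-p_{ij})$ and $(D_i)_{22}=\sum_{j\ne i}p_{ji}(1-p_{ji})$. Each entry is at most $(n-1)/4$. So every block gives $\ln\det D_i\le 2\ln((n-1)/4)$, which is bounded above. It therefore suffices to show that a single block diverges to $-\infty$ whenever $\|\lambda\|_\infty\to\infty$.

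\textbf{Step 3: the main obstacle.} Divergence of $\lambda$ need not make any marginal probability degenerate if the fixed effects offset each other. For example, $\alpha_i\to+\infty$ together with $\gamma_j\to-\infty$ can leave $B_{ij}$ bounded. I handle this with a pigeonhole argument on the coordinate of largest magnitude.

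Let $M=\|\lambda\|_\infty$, attained, say, at $\alpha_i$ with $\alpha_i=\pm M$; the case of $\gamma_i$ is symmetric. Consider the dyad $(i,n)$. Since $\gamma_n=0$, we have $B_{in}=X_{in}'\beta\pm M$, and compactness of $\Theta,\mathbb X,\mathbb Z$ makes the covariate terms bounded. The reciprocal coupling can be bounded out as well. For each fixed $g_{ni}$, the conditional probability of $g_{in}=1$ is a logistic function of $B_{in}+g_{ni}C_{in}$, with $C_{in}$ bounded. Hence $p_{in}(1-p_{in})\le K e^{-M}$ uniformly. The remaining terms $p_{ij}(1-p_{ij})$ for $j\neq n$ are the difficulty.

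For $j\neq n$, $B_{ij}=X_{ij}'\beta+\alpha_i+\gamma_j$ with $|\gamma_j|\le M$. If $\alpha_i=M$, then $B_{ij}\ge -c$ for every $j$, which does not by itself make the terms small. So Hadamard alone may be too crude, and I would instead control the full block.

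My first attempt is to use $\det D_i=\sum_{j,k}\operatorname{Var}$-type terms. Alternatively, I would replace the single block by a telescoping argument over the whole vector. Order the nodes so that the sequence of maximal effects forms a chain ending at node $n$, and show that along this chain some block has both diagonal entries small. Concretely, I would take the coordinate of largest magnitude among all $\alpha_j$ and $\gamma_j$. If the opposite-sign effects are all bounded by $M-\omega(1)$, then every $p_{ij}(1-p_{ij})$ in row $i$ is $O(e^{-\omega(1)})$ and we are done. Otherwise there is a $\gamma_k$ of magnitude comparable to $M$ with the opposite sign, and then column $k$ is degenerate against all nodes whose $\alpha$ is not near $+M$, including node $n$.

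Iterating this dichotomy, with a finite case split on the signs, should give at least one node whose block determinant is $O(ne^{-cM})$. Combined with the uniform upper bound from Step 2, this yields $\eta\le C n\ln n-\tfrac12 cM\to-\infty$. I expect making this case analysis rigorous, and uniform in $\theta$, to be the crux of the proof.
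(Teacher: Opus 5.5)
\textbf{There is a genuine gap in Step 3: the claim it aims to prove is false.}

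Your Steps 1--2 follow the same coercivity-plus-Weierstrass route as the paper. You are right that offsetting fixed effects are the real obstacle. The paper's proof lets only one coordinate diverge while ``the remaining fixed effects are bounded'', so it does not handle this case either. Step 3 is meant to show that some block $D_i$, $i\le n-1$, degenerates whenever $\|\lambda\|_\infty\to\infty$. That is false, so no case analysis can complete it.

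Move along $\lambda+t\,u_-$, i.e.\ $\alpha_i\mapsto\alpha_i+t$ and $\gamma_i\mapsto\gamma_i-t$ for every $i\le n-1$. Then $B_{ij}$ is unchanged for all $i,j\le n-1$, while $B_{in}$ rises by $t$ and $B_{ni}$ falls by $t$. Only the dyads involving node $n$ degenerate, and node $n$ has no block in $\eta$. Let $A_i:=\sum_{j\notin\{i,n\}}\mathrm{Cov}\bigl((g_{ij},g_{ji})'\bigr)$. Then $D_i(t)\succeq A_i$, and $A_i$ does not move with $t$ and is positive definite for $n\ge3$. Hence $\det A_i\le\det D_i(t)\le((n-1)/4)^2$, so $\eta$ stays bounded while $\|\lambda\|_\infty\to\infty$. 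In your dichotomy this is the ``otherwise'' branch with every $\gamma_k=-M$. Column $k$ is then degenerate only against node $n$, and iterating produces nothing.

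\textbf{The log-likelihood does not restore coercivity either.} Suppose the observed network has $d_n=0$ and $b_n=n-1$. Along this ray, $\tfrac{d}{dt}\ell=u_-'\partial_\lambda\ell=\sum_{i<n}(1-p_{in}+p_{ni})>0$. So $\ell$ increases to the finite limit $\sum_{i<j<n}\ell_{ij}$; the case $d_n=n-1$, $b_n=0$ is symmetric with $-u_-$. Thus $\ell_{\mathrm{PL}}$ is bounded below on an unbounded ray. It is not coercive, and the compact-level-set step has no foundation.

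This is not only an artifact of leaving node $n$ unpenalized. Suppose $\{1,\dots,n\}=S\cup S^c$ with $n\in S^c$ and $|S|,|S^c|\ge3$. Suppose also the network contains every link from $S$ to $S^c$ and none from $S^c$ to $S$. Shifting $(\alpha_i,\gamma_i)$ by $(t,-t)$ for $i\in S$ leaves every within-group dyad unchanged. So each $D_i$ stays bounded away from singularity, and $\ell$ stays bounded below. This happens even though no node need have a boundary degree.

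So the coercivity route cannot be completed in the stated generality. It needs an extra condition on $\Lambda_n$ or on the observed network. Indeed, if the sums in \eqref{eqn:Di} literally run over $j\le n-1$ only, then $\eta$ is exactly constant along the $u_-$ ray. For the network with $d_n=0$, $b_n=n-1$ and $\Lambda_n=\mathbb R^{2n-2}$, $\ell_{\mathrm{PL}}$ is then strictly increasing along that ray from every point, and the maximum is not attained at all.
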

	Theorem \ref{thm:lambdaBC_exist} is a finite sample result. It does not require the convex-hull condition \eqref{eqn:convexhull}, and therefore does not require trimming isolated or fully connected nodes before estimation. This is the practical existence property that distinguishes the penalized estimator from  estimator-level corrections that are defined only when the underlying MLE exists.
	
	Finite-sample existence alone does not imply asymptotic existence. 
	We still need to specify how the true fixed effects behave as the network grows. Following \cite{yan2016asymptotics,yan2019statistical}, we allow the maximum entry of the true fixed effects $\|\lambda_0\|_\infty$ diverges at a rate proportional to $\tau \ln n$ for a sufficiently small constant $\tau>0$.  While a compact parameter space for the fixed effects implies a dense limiting network, this logarithmic growth condition relaxes this restriction. 
	It does not force the whole limiting network to be sparse, but it allows  link probabilities to shrink as $n$ increases.\footnote{To see the intuition, consider the link probability $p_{ij} \approx e^{\alpha_i + \gamma_j} / [1 + e^{\alpha_i + \gamma_j}]$. Under the compactness assumption,  $p_{ij}$ is strictly bounded away from zero to one. Expected nodal degrees such as $\sum_{j \ne i} p_{ij}$ grow at rate $\mathcal{O}(n)$ and generate a dense limiting network. Under $\|\lambda_0\|_\infty \le \tau \ln n$, the link probabilities can instead decay at a polynomial rate $p_{ij}\approx e^{-2\tau \ln n} = n^{-2\tau}$ See \citet{graham2024sparse} for an in-depth discussions on the asymptotically sparsity.}
	
	In the global trade application studied in Section~\ref{sec:empirical}, expanding the network sample  $n$ typically involves including smaller or more remote economies. Consequently,  the probability that two countries form a trade link may become small. A standard compactness assumption would  \textit{force} two countries to trade when sample increases.  
	Our asymptotic framework accommodates this empirical concern.

	\subsection{Approximation of the Inverse Hessian}
	
	A central technical challenge in this paper is deriving an explicit approximation for the inverse fixed-effects Hessian, $-H_{\lambda\lambda}^{-1}$. This task is substantially more demanding in the reciprocal directed model than in standard nested models. 
	
	In the undirected model, the directed model without reciprocity, and standard two-way fixed-effects panels, the fixed-effects Hessian exhibits a strictly diagonal or near-block-diagonal structure. Consequently, an approximation based on its diagonal is often adequate: $-H_{\lambda\lambda}^{-1}\approx -\mathrm{diag}(H_{\lambda\lambda})^{-1}$ \citep[see, e.g.,][]{fernandez2016individual, hughes2025estimating}.\footnote{
		Related graph-based fixed effects results show that precision depends on connectivity and on the small eigenvalues of the graph Laplacian \citep[see][]{chung1997spectral,jochmans2019fixed}.  
	} Once reciprocity is introduced, however, this simplification breaks down. Because an agent's outgoing and incoming links are mutually dependent, the off-diagonal entries within the node-specific blocks ($H_{\alpha_i\gamma_i}$) and the cross-node blocks ($H_{\alpha_i\gamma_j}$) are strictly non-zero and scale as $\mathcal{O}(n)$. As a result, the standard diagonal approximation no longer applies.
	
	To bypass the computationally prohibitive high-dimensional inversion, we derive a closed-form matrix $S$ that analytically approximates the inverse Hessian. For each $i=1,\ldots,n-1$, we capture the node-specific information via the $2 \times 2$ matrix
	\begin{equation}
		D_i:=\left(
		\begin{smallmatrix}
			\sum_{j\neq i}^{n-1} p_{ij}(1-p_{ij})
			&  \quad
			\sum_{j\neq i}^{n-1} (p_{ij}^{(11)}-p_{ij}p_{ji})
			\\
			\sum_{j\neq i}^{n-1} (p_{ij}^{(11)} - p_{ij}p_{ji})
			& \quad
			\sum_{j\neq i}^{n-1} p_{ji}(1-p_{ji})
		\end{smallmatrix}\right),\label{eqn:Di}
	\end{equation}
	and define the block diagonal matrix $D:=\operatorname{blkdiag}\bigl(D_1,\ldots,D_{n-1}\bigr)$. To account for global network movements, we define the aggregate basis vectors $u_+:=(1,\ldots,1)'$ and $u_-:=(1,-1,\ldots,1,-1)'$, stacked as $U:=[u_+,u_-]$. We then define the approximate inverse as
	\begin{equation}
		S(\theta,\lambda(\theta)):=  D^{-1}(\theta,\lambda(\theta)) + U\bigl[U'[-H_{\lambda\lambda}(\theta,\lambda(\theta))]U\bigr]^{-1}U'.
		\label{eq:def_S_hybrid_simple}
	\end{equation}
	
	The approximation \(S\) is explicit and easy to compute. The first term, $D^{-1}$, requires only the inversion of $n-1$ independent $2\times 2$ blocks. In the trade network  application, it accommodates the local dependence between outward and inward trade propensities. The second term is a low-rank aggregate correction requiring the inverse of only a single $2 \times 2$ matrix, $U'[-H_{\lambda\lambda}]U$. This decomposition provides a computationally tractable closed-form substitute for the full $(2n-2)\times(2n-2)$ matrix inversion.
	
	To formalize the approximation precision of $S$ under sparse network asymptotics, we define a standardized Hessian  $\widetilde V:= D^{-1/2} [-H_{\lambda\lambda}] D^{-1/2}$. In this standardized space, the aggregate subspace is spanned by $\widetilde U:= D^{1/2}U$. Let $P:= \widetilde U\bigl(\widetilde U'\widetilde U\bigr)^{-1}\widetilde U'$ denote the orthogonal projector onto the column space of $\widetilde U$, and let $P^\perp:=I-P$ be the residual projector. 
	
	\begin{assumption}[Projected inverse regularity]
		\label{assumption:spectral_hybrid_simple}
		Suppose Assumption \ref{assumption2} holds. For all sufficiently large $n$, uniformly over $\theta\in\Theta$ and $\lambda\in \Lambda$,  $\widetilde V(\theta,\lambda(\theta))$ satisfies:
		\begin{enumerate}[label=(\roman*)] 
			\item (Global weak directions) The column space of $\widetilde U(\theta,\lambda(\theta))$ is the eigenspace of $\widetilde V(\theta)$ associated with its two smallest eigenvalues.
			\item (Strongly identified orthogonal subspace) The inverse of $\widetilde V(\theta,\lambda(\theta))$ restricted to the orthogonal subspace satisfies
			$
			\bigl\| P^\perp(\theta,\lambda(\theta))\widetilde V(\theta,\lambda(\theta))^{-1}P^\perp(\theta,\lambda(\theta))-P^\perp(\theta,\lambda(\theta)) \bigr\|= \mathcal{O}\big(n^{-1} \exp(4\|\lambda(\theta)\|_\infty) \big).
			$
		\end{enumerate}
	\end{assumption} 
	
	Assumption \ref{assumption:spectral_hybrid_simple} formally describes the algebraic structure of the standardized Hessian. Condition (i) isolates the two aggregate directions $u_{+}$ and $u_{-}$.  
	In a global trade network, $u_+$ represents a network-wide parallel shift in trade density, while $u_-$ raises  outward trade propensities and lowers  inward trade propensities by the same amount, or vice versa.  Since these global movements are weakly informative for identifying relative country differences, they are absorbed by the low-rank projection $U$. 
	
	Condition (ii) imposes regularity on the remaining directions. After the two aggregate movements are removed, the inverse standardized Hessian is close to the identity on the orthogonal subspace. 
	In the global trade example, countries differ fundamentally in their numbers of export and import partners. \(D_i\) measures the  information available for country \(i\)'s exporter and importer effects. Countries with more trading partners contribute more variations, while countries close to isolation contribute less. The condition allows such degree heterogeneity. 
	After this degree adjustment, condition (ii) requires that  the likelihood still has enough local variation to distinguish countries' relative outward and inward trade propensities.
	
	The rate in condition (ii) is compatible with Assumption \ref{assumption2}. Under the logarithmic growth condition used below, \(\|\lambda_0\|_\infty\le \tau\log n\), the right-hand side is of order $n^{-1+4\tau}$. Hence the condition does not require the fixed effects parameter space  to be compact.

	The next lemma gives the rate bounds for the inverse-Hessian approximation used below. 
	We state the result for a generic vector \(\mathbf f\) because the same calculation is applied to the fixed effects score, Hessian-score products, and higher order remainders. 
	The lemma uses two restrictions on \(\mathbf f\). The largest coordinate of \(\mathbf f\) and the aggregate sender-receiver difference are of the same order. 
	This formulation avoids applying a matrix norm to \(-H_{\lambda\lambda}^{-1}-S\), which would give a bound too crude for the subsequent asymptotic expansions. 
	In Lemma \ref{prop:hybrid_ranktwo_simple}, the first bound controls the approximation error, and the second controls the bound of \(S\mathbf f\).
	
	\begin{lemma}
		\label{prop:hybrid_ranktwo_simple}
		Suppose Assumptions \ref{assumption2} and \ref{assumption:spectral_hybrid_simple} hold. 
		Assume $\|\lambda_0\|_{\infty} = \tau\ln n$ with $\tau>0$.
		Then, for any vector $\mathbf f\in\mathbb R^{2n-2}$ satisfying
		$\|\mathbf f\|_\infty=\mathcal O_P(na_n)$ and 
		$|u_-'\mathbf f| = \mathcal O_P(na_n)$ 
		for some nonnegative random sequence $a_n$, one has\footnote{
			Because the fixed-effects vector is ordered by node as
			$
			\lambda=(\alpha_1,\gamma_1,\ldots,\alpha_{n-1},\gamma_{n-1})',
			$
			the projected difference associated with
			$
			u_-=(1,-1,\ldots,1,-1)'
			$
			is
			$
			u_-'\mathbf f = \sum_{i=1}^{n-1}(\mathbf f_{2i-1}-\mathbf f_{2i}),
			$
			which corresponds to grouping all sender effects first and all receiver effects second.
		}
		\begin{enumerate}[label=(\roman*)]
			\item 
			$
			\big\|[-H_{\lambda\lambda}^{-1}(\theta,\lambda(\theta))-S(\theta,\lambda(\theta))] \, \mathbf{f} \big\|_{\infty} 
			= \mathcal O_P\bigl( a_n \, n^{6\tau-1/2} \,   
			e^{6\|\lambda(\theta)-\bar{\lambda}(\theta)\|_\infty
				+
				6\|\bar\lambda(\theta)-\lambda_0\|_\infty  }\bigr),
			$
			\item 
			$
			\big\|S(\theta,\lambda(\theta)) \, \mathbf{f}\big\|_{\infty} 
			= \mathcal O_P\bigl( a_n \, n^{2\tau} \,   
			e^{2\|\lambda(\theta)-\bar{\lambda}(\theta)\|_\infty
				+
				2\|\bar\lambda(\theta)-\lambda_0\|_\infty  }\bigr),
			$
		\end{enumerate}
		uniformly over $\theta\in\Theta$ and $\lambda \in \Lambda$.
	\end{lemma}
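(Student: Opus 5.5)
The plan is to work in the standardized coordinates. Write $V:=-H_{\lambda\lambda}=D^{1/2}\widetilde V D^{1/2}$. Then
\[
-H_{\lambda\lambda}^{-1}=D^{-1/2}\widetilde V^{-1}D^{-1/2}, \qquad S=D^{-1/2}\bigl[I+\widetilde U(\widetilde U'\widetilde V\widetilde U)^{-1}\widetilde U'\bigr]D^{-1/2}.
\]
By Assumption \ref{assumption:spectral_hybrid_simple}(i), $\operatorname{col}(\widetilde U)$ is an invariant subspace of the symmetric matrix $\widetilde V$. Hence $\widetilde V^{-1}=P\widetilde V^{-1}P+P^\perp\widetilde V^{-1}P^\perp$, and $P\widetilde V^{-1}P=\widetilde U(\widetilde U'\widetilde V\widetilde U)^{-1}\widetilde U'$. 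This identity uses invariance, and I will verify it by writing $\widetilde V\widetilde U=\widetilde U M$ for a $2\times2$ matrix $M$.

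It follows that
\[
-H_{\lambda\lambda}^{-1}-S=D^{-1/2}\bigl[P^\perp\widetilde V^{-1}P^\perp-P^\perp\bigr]D^{-1/2}-D^{-1/2}PD^{-1/2}.
\]
The last term equals $-U(U'DU)^{-1}U'$, a rank-two matrix acting only through the aggregates $u_\pm'\mathbf f$. So the approximation error splits into two pieces: a spectral piece controlled by Assumption \ref{assumption:spectral_hybrid_simple}(ii), and an explicit rank-two piece.

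The next step is entrywise control of $D$. Each $D_i$ is a sum of $n-2$ dyadic $2\times2$ covariance matrices of $(g_{ij},g_{ji})$. Under the logit form of Proposition \ref{thm:micro}, with bounded covariates and $\|\lambda\|_\infty\le \tau\ln n+\|\lambda-\bar\lambda\|_\infty+\|\bar\lambda-\lambda_0\|_\infty$, every cell probability lies between $c\,e^{-2\|\lambda\|_\infty}$ and $1$. The diagonal entries of $D_i$ are then of order between $n e^{-2\|\lambda\|_\infty}$ and $n$, and $\det D_i\gtrsim n^2e^{-4\|\lambda\|_\infty}$: the conditional variance of $g_{ij}$ given $g_{ji}$ is bounded below. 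From this I get $\|D^{-1}\|_{\infty\to\infty}=\mathcal O(n^{-1}e^{2\|\lambda\|_\infty})$, and similarly $\|D^{\pm1/2}\|_{\infty\to\infty}$ is bounded by $n^{\mp1/2}$ times exponential factors. Writing $e^{\|\lambda\|_\infty}\le n^{\tau}e^{\|\lambda-\bar\lambda\|_\infty+\|\bar\lambda-\lambda_0\|_\infty}$ converts these into the powers of $n^\tau$ appearing in the statement.

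For part (ii), $\|D^{-1}\mathbf f\|_\infty\le \|D^{-1}\|_{\infty\to\infty}\|\mathbf f\|_\infty=\mathcal O_P(a_ne^{2\|\lambda\|_\infty})$. For the rank-two term, $U'VU$ has entries $u_\pm'Vu_\pm$. Since $\lambda$ enters through $\alpha_i+\gamma_j$, the direction $u_+$ shifts all $B_{ij}$ by a constant, and $u_-$ is the sender-receiver imbalance direction. Both quadratic forms are of order $n^2$ times probability factors, because they are sums over all dyads of variances of $g_{ij}+g_{ji}$ and $g_{ij}-g_{ji}$. The normalization $\alpha_n=\gamma_n=0$ prevents $u_+$ from being an exact null direction. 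Then
\[
\bigl\|U(U'VU)^{-1}U'\mathbf f\bigr\|_\infty\lesssim \|(U'VU)^{-1}\|\,\bigl(|u_+'\mathbf f|+|u_-'\mathbf f|\bigr).
\]
The $u_-'\mathbf f$ term uses the hypothesis $|u_-'\mathbf f|=\mathcal O_P(na_n)$. For $u_+'\mathbf f$, the crude bound $n\|\mathbf f\|_\infty=\mathcal O_P(n^2a_n)$ combined with $(U'VU)^{-1}=\mathcal O(n^{-2}e^{2\|\lambda\|_\infty})$ still gives $a_ne^{2\|\lambda\|_\infty}$. This proves (ii).

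For part (i), the spectral piece is bounded by
\[
\|D^{-1/2}\|_{\infty\to\infty}\,\sqrt{2n}\,\bigl\|P^\perp\widetilde V^{-1}P^\perp-P^\perp\bigr\|_2\,\|D^{-1/2}\mathbf f\|_2,
\]
passing between $\ell_\infty$ and $\ell_2$ once in each direction. Using $\|D^{-1/2}\mathbf f\|_2\le\sqrt{2n}\,\|D^{-1/2}\|\,\|\mathbf f\|_\infty$ and Assumption \ref{assumption:spectral_hybrid_simple}(ii), this is of order
\[
n\cdot n^{-1}e^{2\|\lambda\|_\infty}\cdot n^{-1}e^{4\|\lambda\|_\infty}\cdot na_n=a_n e^{6\|\lambda\|_\infty}.
\]
The $\sqrt n$ loss incurred when passing back to $\ell_\infty$ is what produces the $n^{-1/2}$ in the claimed rate $a_nn^{6\tau-1/2}$, not a full $n^{-1}$. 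The term $U(U'DU)^{-1}U'\mathbf f$ is handled exactly like the rank-two term in (ii), with $U'DU\asymp n^2$; together with the remaining rank-two term of $S$ it contributes at most the same order.

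The main obstacle is making the rank-two bookkeeping sharp. The decomposition $-H_{\lambda\lambda}^{-1}-S$ leaves $-U(U'DU)^{-1}U'+U[(U'VU)^{-1}]U'$ terms, and these must be shown to be $\mathcal O(n^{-1/2})$ smaller than $S\mathbf f$ rather than of the same order. My first approach is to show that $D^{-1/2}P D^{-1/2}$ is exactly matched by the low-rank part of $S$ up to the spectral remainder. This again uses invariance: $P$ commutes with $\widetilde V$, so the rank-two pieces cancel algebraically and leave only the $P^\perp$ term. Uniformity over $\theta\in\Theta$ comes from the bounds depending on $\theta$ only through $\|\lambda(\theta)\|_\infty$ and compact covariate and parameter sets.
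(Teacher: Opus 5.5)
\textbf{The gap in part (i).} Your decomposition
\[
-H_{\lambda\lambda}^{-1}-S=D^{-1/2}\bigl[P^\perp\widetilde V^{-1}P^\perp-P^\perp\bigr]D^{-1/2}-U(U'DU)^{-1}U'
\]
is exact, and this is why part (i) does not close. Invariance is already used in obtaining it: it is what cancels $D^{-1/2}P\widetilde V^{-1}PD^{-1/2}$ against the low-rank part $U(U'VU)^{-1}U'$ of $S$. The leftover $U(U'DU)^{-1}U'=D^{-1/2}PD^{-1/2}$ is the component of $D^{-1}$ on the weak subspace. Nothing in $S$ matches it, since that would require $U'VU=U'DU$. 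So the claim in your last paragraph, that the rank-two pieces cancel algebraically, contradicts your own display.

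Bounding this leftover with only $|u_+'\mathbf f|\le 2n\|\mathbf f\|_\infty$ gives the order of $S\mathbf f$, not an extra $n^{-1/2}$. This is sharp. Write $\widetilde V\widetilde U=\widetilde UM$. For any $\mathbf f=DUw$ one has $-H_{\lambda\lambda}^{-1}\mathbf f=UM^{-1}w$ and $S\mathbf f=Uw+UM^{-1}w$, hence $(-H_{\lambda\lambda}^{-1}-S)\mathbf f=-Uw$ exactly. Pick $w\neq0$ with $u_-'DUw=0$. Each $D_i$ has absolute row sums at most $n/2$, so $\|\mathbf f\|_\infty\le\tfrac n2(|w_1|+|w_2|)$. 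Both hypotheses on $\mathbf f$ then hold with $a_n=\|\mathbf f\|_\infty/n$, yet $\|Uw\|_\infty=|w_1|+|w_2|\ge 2a_n$. This contradicts (i) at $(\theta_0,\lambda_0)$ whenever $\tau<1/12$.

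So (i) cannot be derived from conditions on $\|\mathbf f\|_\infty$ and $u_-'\mathbf f$ alone. You need an extra hypothesis such as $|u_+'\mathbf f|=\mathcal O_P(n^{3/2}a_n)$, which makes the leftover $\mathcal O_P(a_nn^{-1/2})$ up to exponential factors. This hypothesis does hold for the score: $u_+'\partial_\lambda\ell$ is a sum of $\mathcal O(n^2)$ bounded mean-zero terms that are independent across dyads, hence $\mathcal O_P(n)$. The alternative is to include $-U(U'DU)^{-1}U'$ in $S$.

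\textbf{Two further slips, both repairable.}

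First, the spectral piece. Your factor $\sqrt{2n}$ on the output side is spurious, because $\|x\|_\infty\le\|x\|_2$ costs nothing. With both factors your own arithmetic gives $a_ne^{6\|\lambda\|_\infty}$, which misses the target by $\sqrt n$; the sentence attributing the $n^{-1/2}$ to that loss is backwards. Keep only the input-side factor and set $E:=P^\perp\widetilde V^{-1}P^\perp-P^\perp$. Then
\[
\|D^{-1/2}\|_{\infty\to\infty}^2\,\|E\|_2\,\sqrt{2n}\,\|\mathbf f\|_\infty\lesssim n^{-1}e^{2\|\lambda\|_\infty}\cdot n^{-1}e^{4\|\lambda\|_\infty}\cdot\sqrt{n}\cdot na_n=a_nn^{-1/2}e^{6\|\lambda\|_\infty},
\]
which is exactly the claimed rate.

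Second, in part (ii), $u_-'Vu_-$ is not of order $n^2$. The sum $\alpha_i+\gamma_j$ is invariant under $(\alpha,\gamma)\mapsto(\alpha+t,\gamma-t)$, and only $\alpha_n=\gamma_n=0$ breaks this. So $u_-$ moves only the dyads involving node $n$, and $u_-'Vu_-=\sum_{i<n}\mathrm{Var}(g_{in}-g_{ni})$ is of order $n$ (up to probability factors). It is $u_-$, not $u_+$, that the normalization rescues from being an exact null direction; $u_+$ shifts every $B_{ij}$ and is a strong direction. Hence $(U'VU)^{-1}$ is of order $n^{-1}$, not $n^{-2}$, along $u_-$. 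That is precisely why the hypothesis $|u_-'\mathbf f|=\mathcal O_P(na_n)$ is needed; under your scaling it would be redundant. Redoing the $2\times2$ inversion with diagonal entries of orders $n^2$ and $n$ and an $\mathcal O(n)$ off-diagonal entry still gives $\mathcal O_P(a_n)$ times exponential factors. So (ii) holds, but not for the reason you give.
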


	\subsection{Large-Sample Properties}\label{sec:asyresults}
	We now turn to the large-sample properties of the estimator. For each $\theta$, let
	$$
	\widehat\lambda(\theta)=\argmax_{\lambda\in\Lambda_n}\ell(\theta,\lambda),
	\qquad
	\bar\lambda(\theta)=\argmax_{\lambda\in\Lambda_n}\mathbb E_0[\ell(\theta,\lambda)].
	$$
	The common parameter estimator can then be viewed as:
	$
	\widehat\theta
	=
	\argmax_{\theta\in\Theta}
	\ell\bigl(\theta,\widehat\lambda(\theta)\bigr).
	$
	The incidental parameter problem arises because each component of $\widehat\lambda(\theta)$ is estimated from only $n-1$ observations, so the noise in $\widehat\lambda(\theta)$ contaminates the concentrated score for $\theta$. 
	Hence the limit distribution of $\widehat{\theta}$ does not center at $\theta_0$, unless the quantity $\widehat{\lambda}(\theta)$ was replaced by its \textit{infeasible} population quantity $\bar{\lambda}(\theta)$.
	This is the same mechanism that generates first-order bias in nonlinear panel models with two-way fixed effects.

	The first step is to establish asymptotic existence and consistency. The proof combines a Newton iteration for $\widehat\lambda(\theta)$ with the approximation in Lemma \ref{prop:hybrid_ranktwo_simple}. The log growth condition on $\|\lambda_0\|_\infty$ guarantees that the iteration remains in a neighborhood where the Hessian approximation is valid.
	\begin{theorem}\label{thm:lamba_consist}
		Suppose Assumptions \ref{assumption2} and \ref{assumption:spectral_hybrid_simple} hold.  Assume $\|\lambda_0\|_{\infty} \leq \tau \ln n$ with $0<\tau<1/8$. Then, with probability approaching one,
		\begin{enumerate}[label=(\roman*)]
			\item $\widehat{\lambda}(\theta_0)$ exists and satisfies
			$\|\widehat{\lambda}(\theta_0)-\lambda_0\|_{\infty}\leq\mathcal{O}_P\big({\sqrt{\ln(n)}n^{2\tau-1/2}}\big)=o_P(1).$
			\item
			$\lVert \widehat{\theta}-\theta_0 \rVert =o_P(1)$.
			\item $\widehat{\lambda}$ exists and satisfies $\|\widehat{\lambda} - \lambda_0 \|_{\infty}  = 
			\mathcal{O}_P\big({\sqrt{\ln(n)}n^{2\tau-1/2}}\big)
			+
			\mathcal{O}_P\big(n^{2\tau}\big)\|\widehat{\theta}-\theta_0\|.$
		\end{enumerate}
	\end{theorem}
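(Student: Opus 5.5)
The argument runs in three stages: existence and consistency of $\widehat\lambda(\theta_0)$ via a Newton–Kantorovich iteration, consistency of $\widehat\theta$ via the concentrated likelihood, and the bound in (iii) via a first-order expansion of $\widehat\lambda(\theta)$ in $\theta$.

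\textbf{Part (i): Newton iteration started at $\lambda_0$.} Fix $\theta=\theta_0$ and define
$$\lambda^{(k+1)}=\lambda^{(k)}-H_{\lambda\lambda}^{-1}(\theta_0,\lambda^{(k)})\,\partial_\lambda\ell(\theta_0,\lambda^{(k)}),\qquad \lambda^{(0)}=\lambda_0 .$$

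The score at the truth, $\mathbf f=\partial_\lambda\ell(\theta_0,\lambda_0)$, has entries of the form $d_i-\mathbb E_0 d_i$ and $b_i-\mathbb E_0 b_i$. Each is a sum of $n-1$ independent bounded dyadic terms. Hoeffding's inequality and a union bound over $2n-2$ coordinates give $\|\mathbf f\|_\infty=\mathcal O_P(\sqrt{n\ln n})$.

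For the aggregate direction, $u_-'\mathbf f=\sum_i (d_i-b_i)-\mathbb E_0[\cdot]$ over nodes $1,\dots,n-1$. Since $\sum_{i=1}^n d_i=\sum_{i=1}^n b_i$, this telescopes to $-(d_n-b_n)+\mathbb E_0(d_n-b_n)$, which is again $\mathcal O_P(\sqrt{n\ln n})$. Both hypotheses of Lemma~\ref{prop:hybrid_ranktwo_simple} therefore hold with $a_n=\sqrt{\ln n/n}$.

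Lemma~\ref{prop:hybrid_ranktwo_simple}(i)–(ii) then bounds the first Newton step:
$$\|H_{\lambda\lambda}^{-1}\mathbf f\|_\infty\le \|S\mathbf f\|_\infty+\|(H^{-1}+S)\mathbf f\|_\infty=\mathcal O_P\bigl(\sqrt{\ln n}\,n^{2\tau-1/2}\bigr)+\mathcal O_P\bigl(\sqrt{\ln n}\,n^{6\tau-1}\bigr),$$
and the second term is of smaller order.

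To run a Kantorovich-type argument in the sup norm, I also need a Lipschitz bound on the Hessian. Third derivatives of $\ell$ in $\lambda$ are sums of $n-1$ bounded terms per row. Hence
$$\|H_{\lambda\lambda}(\lambda)-H_{\lambda\lambda}(\lambda')\|_{\infty\to\infty}=\mathcal O(n)\,\|\lambda-\lambda'\|_\infty .$$
Combined with $\|H^{-1}_{\lambda\lambda}\|_{\infty\to\infty}$, obtained from Lemma~\ref{prop:hybrid_ranktwo_simple} applied to columns and growing like $n^{2\tau-1}$ up to the exponential factors, this makes the Kantorovich product of order
$$n^{2\tau-1}\cdot n\cdot\sqrt{\ln n}\,n^{2\tau-1/2}\cdot e^{\mathcal O(\|\lambda-\lambda_0\|_\infty)}.$$
This tends to zero when $\tau<1/8$. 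The iteration therefore converges inside a ball of radius $\mathcal O_P(\sqrt{\ln n}\,n^{2\tau-1/2})=o_P(1)$ around $\lambda_0$. In that ball the exponential factors $e^{6\|\lambda-\lambda_0\|_\infty}$ stay bounded, so the conditions of the lemma are reused consistently. Strict concavity of $\ell$ in $\lambda$ makes the resulting root the unique maximizer, which gives (i).

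\textbf{Main obstacle.} I expect the hardest step to be verifying that the Newton increments keep satisfying the $u_-$ condition and the Hessian-perturbation condition along the whole iteration, uniformly in $k$. I would handle this by applying Lemma~\ref{prop:hybrid_ranktwo_simple} to the remainder vector $\partial_\lambda\ell(\lambda^{(k+1)})$. By Taylor expansion this remainder is quadratic in the step, and its $u_-$ projection is controlled through the same telescoping identity, which holds for every score.

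\textbf{Part (ii): consistency of $\widehat\theta$.} I would follow the standard concentrated-likelihood argument. Uniformly over the compact $\Theta$, Part (i) applies at every $\theta$ with $\bar\lambda(\theta)$ in place of $\lambda_0$; $\|\bar\lambda(\theta)\|_\infty$ stays of logarithmic order because $\Theta$ and the covariate supports are compact. This gives $\sup_\theta\|\widehat\lambda(\theta)-\bar\lambda(\theta)\|_\infty=o_P(1)$.

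Next I would show
$$\sup_\theta \tfrac{1}{n(n-1)}\bigl|\ell(\theta,\widehat\lambda(\theta))-\mathbb E_0\ell(\theta,\bar\lambda(\theta))\bigr|=o_P(1).$$
This follows from a uniform law of large numbers over dyads, using independence across dyads and bounded Lipschitz log-likelihood contributions, together with the bound $\tfrac{1}{n^2}\|\partial_\lambda\ell\|_1\|\widehat\lambda-\bar\lambda\|_\infty=o_P(1)$. Identification from Assumption~\ref{assumption2}(iv), with a well-separated maximum implied by concavity, then gives $\|\widehat\theta-\theta_0\|=o_P(1)$.

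\textbf{Part (iii): expansion in $\theta$.} Write $\widehat\lambda=\widehat\lambda(\widehat\theta)$ and split
$$\widehat\lambda-\lambda_0=\bigl[\widehat\lambda(\theta_0)-\lambda_0\bigr]+\partial_{\theta'}\widehat\lambda(\tilde\theta)\,(\widehat\theta-\theta_0).$$
By the implicit function theorem, $\partial_{\theta'}\widehat\lambda=-H_{\lambda\lambda}^{-1}H_{\lambda\theta}$.

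Each column of $H_{\lambda\theta}$ has sup norm $\mathcal O(n)$, and its $u_-$ projection is also $\mathcal O(n)$, by the same telescoping argument applied to non-random sums. Lemma~\ref{prop:hybrid_ranktwo_simple} with $a_n=1$ therefore gives
$$\|H_{\lambda\lambda}^{-1}H_{\lambda\theta}\|_\infty=\mathcal O_P(n^{2\tau}),$$
with the exponential factors bounded on the consistency neighbourhood from Part (ii). Combining this with Part (i) gives the bound stated in (iii).
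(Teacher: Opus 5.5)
\textbf{Gap in Part (i).} Your overall architecture is the paper's: a Newton iteration controlled by Lemma~\ref{prop:hybrid_ranktwo_simple}, concentrated-likelihood consistency, and an implicit-function expansion in $\theta$. The quantitative core of Part (i), however, fails as written.

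Your Kantorovich product rests on $\|H_{\lambda\lambda}^{-1}\|_{\infty\to\infty}\lesssim n^{2\tau-1}$, ``obtained from the lemma applied to columns.'' That step does not give what you need:
\begin{itemize}
\item Applying the lemma to a unit vector $e_k$ (where $na_n\asymp 1$) bounds only the \emph{entries} of $H_{\lambda\lambda}^{-1}$ by $n^{2\tau-1}$.
\item The sup-norm operator norm is the maximal absolute row sum, which is a factor $n$ larger.
\item Equivalently, a generic $\mathbf f$ with $\|\mathbf f\|_\infty\le 1$ can have $|u_-'\mathbf f|\asymp n$. That forces $a_n\asymp 1$ and yields only $\mathcal O(n^{2\tau})$.
\end{itemize}
This loss is genuine, not an artefact of the lemma. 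Since $u_-'(-H_{\lambda\lambda})u_-=\mathrm{Var}(b_n-d_n)\le n-1$, Cauchy--Schwarz and H\"older give $(2n-2)^2=(u_-'u_-)^2\le (n-1)(2n-2)\|H_{\lambda\lambda}^{-1}u_-\|_\infty$. Hence $\|H_{\lambda\lambda}^{-1}\|_{\infty\to\infty}\ge 2$ at every parameter value. Replacing $n^{2\tau-1}$ by a quantity bounded below by $2$ turns your product into something of order at least $\sqrt{\ln n}\,n^{2\tau+1/2}$, which diverges. So no generic sup-norm Kantorovich theorem delivers (i). Your threshold $\tau<1/8$ came out right only because the spurious $n^{2\tau-1}$ mimics the gain the lemma gives for vectors whose $u_-$-projection is controlled.

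\textbf{The repair.} The fix is the device in your ``main obstacle'' paragraph, but it must \emph{replace} the operator-norm step rather than supplement it. Never bound $H_{\lambda\lambda}^{-1}$ as a matrix; instead apply the lemma to each Newton residual $\mathbf r_k=\partial_\lambda\ell(\theta_0,\lambda^{(k)})$.
\begin{itemize}
\item By the $\mathcal O(n)$ Lipschitz bound on the Hessian (Lemma~\ref{lemma1_thm:lamba_consist}) and the Newton identity, $\|\mathbf r_k\|_\infty=\mathcal O(n\|\lambda^{(k)}-\lambda^{(k-1)}\|_\infty^2)$.
\item $u_-'\partial_\lambda\ell(\theta,\lambda)=(b_n-d_n)+\sum_{j<n}\{p_{nj}(\theta,\lambda)-p_{jn}(\theta,\lambda)\}$ is a sum of $n-1$ smooth functions, each of a single block $(\alpha_j,\gamma_j)$. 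The Newton step annihilates its linear part, so $|u_-'\mathbf r_k|$ is of the same order.
\item The lemma with $a_n\asymp\|\lambda^{(k)}-\lambda^{(k-1)}\|_\infty^2$ then gives $\|\lambda^{(k+1)}-\lambda^{(k)}\|_\infty\le Cn^{2\tau}\|\lambda^{(k)}-\lambda^{(k-1)}\|_\infty^2$. The constant $C$ stays bounded by an induction keeping the iterates within $2r_0$ of the start.
\item Finally $Cn^{2\tau}r_0=\mathcal O_P(\sqrt{\ln n}\,n^{4\tau-1/2})\to 0$ exactly when $\tau<1/8$.
\end{itemize}
This is Step~1 of the paper's proof, run there at $(\widehat\theta,\bar\lambda(\widehat\theta))$, of which $(\theta_0,\lambda_0)$ is a special case.

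\textbf{Parts (ii) and (iii).} These are sound in outline and differ only mildly from the paper.

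For (ii), drop the generic ULLN with ``bounded Lipschitz contributions.'' The dyadic terms are not uniformly bounded once $\|\lambda\|_\infty$ grows, and the index set is growing-dimensional. Instead use, as the paper does, that $\ell-\mathbb E_0\ell$ is linear in the centred sufficient statistics because the log-partition terms cancel. Row-wise Hoeffding plus a union bound then gives a uniform bound over $\|\lambda\|_\infty\le n^\tau$, with a rate. Your uniform-in-$\theta$ version of (i) is easier than it looks: by the population first-order condition, $\partial_\lambda\ell(\theta,\bar\lambda(\theta))$ is the centred degree vector for every $\theta$. What still needs an argument is that the lemma's exponential factors stay bounded, i.e.\ $\sup_\theta\|\bar\lambda(\theta)-\lambda_0\|_\infty=\mathcal O(1)$; ``logarithmic order'' is not enough.

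For (iii), you expand along the sample profile $\theta\mapsto\widehat\lambda(\theta)$. The paper instead splits $\widehat\lambda-\lambda_0=[\widehat\lambda(\widehat\theta)-\bar\lambda(\widehat\theta)]+[\bar\lambda(\widehat\theta)-\lambda_0]$ and differentiates the deterministic population profile. The trade-offs are:
\begin{itemize}
\item Your route needs (i) only at $\theta_0$ for the first piece, but requires existence of $\widehat\lambda(\theta)$ on the whole segment and Hessian control at random points.
\item The paper's route needs the Newton argument at the random $\widehat\theta$, but only deterministic derivatives.
\end{itemize}
Both obtain the $\mathcal O_P(n^{2\tau})$ factor from the lemma with $a_n=1$, via $u_-'H_{\lambda\theta}=\partial_{\theta'}\sum_{j<n}(p_{nj}-p_{jn})=\mathcal O(n)$, which you identified correctly.
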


	The next result gives the asymptotic distribution.
	\begin{theorem}\label{thm:lambda_asynormal}
		Suppose Assumptions \ref{assumption2} and \ref{assumption:spectral_hybrid_simple} hold. Assume $\|\lambda_0\|_{\infty} \leq \tau \ln (n)$ with  $0<\tau<1/12$. Let $N=n(n-1)$.
		Then,
		\begin{enumerate}[label=(\roman*)]
			\item $\sqrt{N}(\widehat{\theta}-\theta_0) 
			\stackrel{d}{\rightarrow} \mathcal{N}(
			\mathcal{I}^{-1}_{\infty} \mathscr{B}_{\infty}
			,\mathcal{I}_{\infty}^{-1}),
			$
			where
			$\mathcal{I}_{\infty}=-\lim_{n\rightarrow\infty}\big[\frac{1}{(n-1)^2}(H_{\theta\theta}+H_{\theta\lambda}SH_{\lambda\theta})\big]$, and
			$\mathscr{B}_{\infty}$ is the limit of $ \partial_{\theta}\eta $ as $n\to\infty$.
			\item For any fixed length $L \geq 1$, 
			$
			\sqrt{n}[\widehat{\lambda}-\lambda_0]_{1:L} \stackrel{d}{\rightarrow} \mathcal{N}\big(0, [\Omega_{\infty}]_{1: L, 1: L}\big),
			$
			where 
			$\Omega_{\infty}$ is the limit of $nS$ as $n\to\infty$.
		\end{enumerate}
	\end{theorem}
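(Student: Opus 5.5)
The argument is a standard second-order expansion of the concentrated likelihood, combined with the uniform inverse-Hessian bounds of Lemma \ref{prop:hybrid_ranktwo_simple} and the consistency rates of Theorem \ref{thm:lamba_consist}.

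\textbf{Part (i): expansion of the concentrated score.} Write the concentrated score as $\partial_\theta \ell(\theta,\widehat\lambda(\theta))$, so that $\widehat\theta$ solves $\partial_\theta\ell(\widehat\theta,\widehat\lambda(\widehat\theta))=0$. Expand $\widehat\lambda(\theta_0)-\lambda_0$ around $\lambda_0$ to second order using the fixed-effects first-order condition:
\[
\widehat\lambda(\theta_0)-\lambda_0 = -H_{\lambda\lambda}^{-1} S_\lambda + \tfrac12 (-H_{\lambda\lambda}^{-1})\textstyle\sum_k \partial_{\lambda_k}H_{\lambda\lambda}\,[\cdot]\,[\cdot] + R_\lambda ,
\]
where $S_\lambda=\partial_\lambda\ell$. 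In this expansion I replace $-H_{\lambda\lambda}^{-1}$ by $S$ via Lemma \ref{prop:hybrid_ranktwo_simple}(i), taking $\mathbf f=S_\lambda$. Here $\|S_\lambda\|_\infty=\mathcal O_P(\sqrt{n\ln n})$ and $|u_-'S_\lambda|=\mathcal O_P(n)$, the latter because it is a sum of $n^2$ centred bounded terms. I also replace $H_{\lambda\lambda}$ by its expectation, since the two differ only by a sum of centred dyadic terms.

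Substituting into the $\theta$-score and expanding gives
\[
\partial_\theta\ell(\theta_0,\widehat\lambda(\theta_0)) = U_n + B_n + o_P(n),
\]
with the two terms defined as follows.
\begin{itemize}
\item $U_n=S_\theta + H_{\theta\lambda}S\,S_\lambda$ is a sum of dyad-independent, mean-zero terms. A Lyapunov CLT, using dyadic independence from Proposition \ref{thm:micro}, gives $U_n/(n-1)\to\mathcal N(0,\mathcal I_\infty)$.
\item $B_n$ is the familiar incidental-parameter term $\tfrac12\operatorname{tr}\bigl(S\,\partial_{\lambda\lambda'}\partial_\theta\ell\bigr)+\mathbb E[H_{\theta\lambda}S\,\partial_{\lambda}H\,S\,S_\lambda]$. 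Its order is $\mathcal O(n^{c\tau})$ times an $\mathcal O(1)$ per-node contribution, so $B_n=\mathcal O(n)$.
\end{itemize}

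Next I compute the penalty score $\partial_\theta\eta=\tfrac12\sum_i \operatorname{tr}(D_i^{-1}\partial_\theta D_i)$. Using that $S$ and $D^{-1}$ differ only by a rank-two term, whose trace contribution is $\mathcal O(n^{c\tau})=o(n)$, I identify $B_n$ with $\partial_\theta\eta$ up to the sign convention. This identifies $\mathscr B_\infty$ as the limit of $(n-1)^{-1}\partial_\theta\eta$ (suitably normalized).

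I then expand the score in $\theta$ around $\theta_0$, using the consistency in Theorem \ref{thm:lamba_consist}(ii)--(iii). The concentrated Hessian $-(H_{\theta\theta}+H_{\theta\lambda}SH_{\lambda\theta})/(n-1)^2\to\mathcal I_\infty$ uniformly in a shrinking neighbourhood. Solving gives $\sqrt N(\widehat\theta-\theta_0)=\mathcal I_\infty^{-1}\bigl[U_n/(n-1)+B_n/(n-1)\bigr]+o_P(1)$, which is the claimed limit.

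\textbf{Part (ii): the fixed effects.} The leading term of $\widehat\lambda-\lambda_0$ is $S\,S_\lambda$. Because the rank-two aggregate part of $S$ contributes only $\mathcal O_P(n^{-1}\cdot n^{c\tau})$ per coordinate, the first $L$ coordinates are essentially $D_i^{-1}\sum_j(\text{score}_{ij})$. These are sums over nearly disjoint dyad sets: different nodes share a single dyad, so their covariance is $\mathcal O(1/n)$ after scaling. A Cramér--Wold/Lyapunov argument then gives the normal limit with covariance $[\lim nS]_{1:L,1:L}$. The second-order terms and $\mathcal O_P(n^{2\tau})\|\widehat\theta-\theta_0\|=\mathcal O_P(n^{2\tau-1})$ are $o_P(n^{-1/2})$ when $\tau<1/12$.

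\textbf{Main obstacle.} The hard step is controlling the remainder of the score expansion uniformly. Each application of Lemma \ref{prop:hybrid_ranktwo_simple} costs a factor $n^{6\tau}$, and the cubic remainder terms involve products of such factors with $\|\widehat\lambda-\lambda_0\|_\infty^2=\mathcal O_P(\ln n\, n^{4\tau-1})$. I would bound each remainder as $\mathcal O_P(n^{2}\cdot n^{c\tau-1}\ln n)$ and check that $c\tau<1/2$ holds under $\tau<1/12$. I would first carry out this exponent bookkeeping term by term; it is where the restriction $\tau<1/12$ should originate.
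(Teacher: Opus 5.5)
Your architecture matches the paper's. You use a second-order expansion of the concentrated score, substitute $S$ for $-H_{\lambda\lambda}^{-1}$ in the higher-order terms, match the bias to the penalty score, and take $\widehat\lambda-\lambda_0\approx S\partial_\lambda\ell$ in part (ii). The gap is how you treat the weak direction $u_-$. You bound $u_-'\partial_\lambda\ell$ as a generic sum of $n^2$ centred terms, $\mathcal O_P(n)$. But $\sum_{i=1}^n d_i=\sum_{i=1}^n b_i$, so $u_-'\partial_\lambda\ell=(b_n-\mathbb E_0 b_n)-(d_n-\mathbb E_0 d_n)$. This involves only the $n-1$ dyads containing node $n$, so it is $\mathcal O_P(\sqrt n)$; this is the identity \eqref{eqn:projdiff} the paper relies on. With your $\mathcal O_P(n)$ input, Lemma \ref{prop:hybrid_ranktwo_simple} applies only with $a_n=1$. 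It then gives $\|S\partial_\lambda\ell\|_\infty=\mathcal O_P(n^{2\tau})$, which is not even $o_P(1)$. It also gives an approximation error $\|(-H_{\lambda\lambda}^{-1}-S)\partial_\lambda\ell\|_\infty$ a factor $\sqrt{n/\ln n}$ larger than the $\sqrt{\ln n}\,n^{6\tau-1}$ rate the paper feeds into the bias-term replacement. So the exponent bookkeeping you postpone cannot close from that input.

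Separately, you write the linear term as $\partial_\theta\ell+H_{\theta\lambda}S\partial_\lambda\ell$, and Lemma \ref{prop:hybrid_ranktwo_simple}(i) does not justify that swap either. Multiplied by $\|H_{\theta\lambda}\|_\infty=\mathcal O(n^2)$ and divided by $\sqrt N$, the sup-norm error still diverges. The paper keeps $-H_{\lambda\lambda}^{-1}$ exact there and invokes Lemma \ref{lemma2_thm:lamba_asynormal}(ii). Alternatively, all Hessians in this exponential family are nonrandom, so your step replacing $H_{\lambda\lambda}$ by its expectation is vacuous, and a direct variance bound on $H_{\theta\lambda}(-H_{\lambda\lambda}^{-1}-S)\partial_\lambda\ell$ would do the job.

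In part (ii), your claim that the rank-two part of $S$ contributes only $\mathcal O_P(n^{-1}n^{c\tau})$ per coordinate is true for the $u_+$ component but false for $u_-$. Two facts drive this:
\begin{itemize}
\item $u_-'(-H_{\lambda\lambda})u_-=\operatorname{Var}_0(b_n-d_n)$ is only of order $n$, whereas $u_-'Du_-$ is of order $n^2$.
\item $u_-'\partial_\lambda\ell=\mathcal O_P(\sqrt n)$.
\end{itemize}
Hence the $u_-$ component of $S\partial_\lambda\ell$ shifts every sender effect by $+c_n$ and every receiver effect by $-c_n$, with $c_n\approx u_-'\partial_\lambda\ell/\operatorname{Var}_0(b_n-d_n)=\mathcal O_P(n^{-1/2})$. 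That is the same order as $D_i^{-1}(\partial_{\alpha_i}\ell,\partial_{\gamma_i}\ell)'$. This common term is exactly why $\Omega_\infty=\lim nS$ has a non-vanishing rank-one part and is not block diagonal. Your nearly-disjoint-dyads argument applied to $D^{-1}\partial_\lambda\ell$ alone would give $[\lim nD^{-1}]_{1:L,1:L}$, which contradicts the covariance you state.

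The fix is to keep the term. Since $u_-'\partial_\lambda\ell$ depends only on the dyads $(j,n)$, together with $(\partial_{\alpha_i}\ell,\partial_{\gamma_i}\ell)_{i\le L}$ it forms a sum over asymptotically disjoint dyad sets. Cram\'er--Wold then delivers the full $[\lim nS]_{1:L,1:L}$; the paper packages this as Lemma \ref{lemma2_thm:lamba_asynormal}(iii).

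Finally, only the trace piece of your $B_n$ equals $-\partial_\theta\eta$, up to an $o(n)$ rank-two correction. The propagated fixed-effect bias is $H_{\theta\lambda}H_{\lambda\lambda}^{-1}\partial_\lambda\eta$, which is also $\mathcal O(n)$. The centring obtained in the paper's \eqref{eq:A15} is therefore the profiled penalty score $\partial_\theta\eta-H_{\theta\lambda}H_{\lambda\lambda}^{-1}\partial_\lambda\eta$, and $\mathscr B_\infty$ must be read that way rather than as the limit of $\partial_\theta\eta$ alone.
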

	Theorem \ref{thm:lambda_asynormal}(i) makes the incidental parameter problem explicit. The common parameter estimator is asymptotically normal, but centered at a nonzero bias term of order $1/n$. This is the asymptotic benchmark for the \textit{estimator-level correction (EC)} discussed in Section \ref{sec:MLEPL}.
	Next result  shows that the penalized likelihood correction ($\widehat{\theta}_{\mathrm{PL}}$) recenters the asymptotic distribution without inflating the variance.
	\begin{corollary}\label{thm:thetaBC_asynormal}
		Suppose Assumptions \ref{assumption2} and \ref{assumption:spectral_hybrid_simple} hold. Assume $\|\lambda_0\|_{\infty} \leq \tau \ln (n)$ with $0<\tau<1/12$. Let $N=n(n-1)$. Then, 
		$
		\sqrt{N}(\widehat{\theta}_{\mathrm{PL}}-\theta_0) 
		\stackrel{d}{\rightarrow} \mathcal{N}\left(
		0
		,\mathcal{I}_{\infty}^{-1}\right).
		$
	\end{corollary}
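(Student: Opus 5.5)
The argument treats the penalized estimator as the maximizer of a perturbed objective and reuses the expansion behind Theorem \ref{thm:lambda_asynormal}. First, write the profiled penalized objective $\ell_{\mathrm{PL}}(\theta,\widehat\lambda_{\mathrm{PL}}(\theta))$, where $\widehat\lambda_{\mathrm{PL}}(\theta)=\argmax_\lambda[\ell(\theta,\lambda)+\eta(\theta,\lambda)]$. By Theorem \ref{thm:lambdaBC_exist} the maximizer exists in every sample.

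The penalty is small relative to the likelihood. $\eta$ is a sum of $n-1$ terms $\tfrac12\ln\det D_i$. Each $D_i$ has entries of order $n\,e^{-c\|\lambda\|_\infty}$, so its $\lambda$-gradient has sup-norm $\mathcal O(1)$ and its $\theta$-gradient is $\mathcal O(n)$. In contrast, the fixed-effect score is $\mathcal O_P(\sqrt{n\ln n})$ per coordinate, and the $\theta$-score is $\mathcal O_P(n)$. Adding $\partial_\lambda\eta$ to the fixed-effect score therefore perturbs $\widehat\lambda_{\mathrm{PL}}(\theta)-\widehat\lambda(\theta)$ only by $S\,\partial_\lambda\eta$. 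By Lemma \ref{prop:hybrid_ranktwo_simple}(ii), applied with $a_n=n^{-1}$ after checking that $u_-'\partial_\lambda\eta=\mathcal O(1)$, this perturbation is $\mathcal O(n^{2\tau-1})$. Hence Theorem \ref{thm:lamba_consist} carries over verbatim: $\widehat\theta_{\mathrm{PL}}$ is consistent and $\|\widehat\lambda_{\mathrm{PL}}-\lambda_0\|_\infty=o_P(1)$.

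Next, expand the penalized profile score $\partial_\theta\ell_{\mathrm{PL}}(\theta,\widehat\lambda_{\mathrm{PL}}(\theta))$ at $\theta_0$. The proof of Theorem \ref{thm:lambda_asynormal}(i) presumably yields
\[
\tfrac{1}{n-1}\,d_\theta\ell(\theta_0,\widehat\lambda(\theta_0))=U_n-\mathscr B_n+o_P(1),
\]
where $U_n$ is the efficient score $(\partial_\theta\ell+H_{\theta\lambda}S\,\partial_\lambda\ell)/(n-1)$, which is asymptotically $\mathcal N(0,\mathcal I_\infty)$. The term $-\mathscr B_n$ comes from the second-order term
\[
\tfrac12\,\partial_{\theta\lambda\lambda'}\text{-type trace}\;\operatorname{tr}\bigl[(\cdot)\,H_{\lambda\lambda}^{-1}\bigr],
\]
together with the third-derivative contributions from $\widehat\lambda-\bar\lambda$. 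The penalized version adds the term $(n-1)^{-1}\bigl[\partial_\theta\eta+H_{\theta\lambda}S\,\partial_\lambda\eta\bigr]$, evaluated at $(\theta_0,\lambda_0)$ up to $o_P(1)$.

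The key step is the identity
\[
\tfrac{1}{n-1}\bigl[\partial_\theta\eta+H_{\theta\lambda}S\,\partial_\lambda\eta\bigr]=\mathscr B_n+o(1).
\]
This is the total derivative $d_\theta\eta(\theta,\bar\lambda(\theta))/(n-1)$, since $d\bar\lambda/d\theta\approx SH_{\lambda\theta}$ up to sign. I would verify it by differentiating $\tfrac12\ln\det D(\theta,\bar\lambda(\theta))$, which gives $\tfrac12\operatorname{tr}\bigl(D^{-1}\,d_\theta D\bigr)$, and matching it against the bias expression. That expression is the trace of the $\theta$–$\lambda\lambda$ third-derivative array against $-H_{\lambda\lambda}^{-1}$, plus the cross term from $\mathbb E[\partial_{\theta\lambda}\ell\,\partial_\lambda\ell]$. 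The match uses two facts. First, $-H_{\lambda\lambda}^{-1}=S+$ error, controlled by Lemma \ref{prop:hybrid_ranktwo_simple}(i), and the low-rank part $U[U'(-H)U]^{-1}U'$ contributes only $\mathcal O(n^{-1})$ per entry, hence $o(1)$ to the trace. Second, in the exponential family, $\partial_{\lambda\lambda}\ell$ is nonrandom given the covariates. This makes the information-equality terms collapse, so the bias reduces to $\tfrac12\operatorname{tr}(D^{-1}\partial_\theta D)$ plus block-off-diagonal pieces. Under the rate $n^{6\tau-1/2}$ with $\tau<1/12$, those pieces are $o(1)$.

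Finally, a mean-value expansion of the penalized profile score around $\theta_0$ has Jacobian $-(n-1)^2\mathcal I_\infty(1+o_P(1))$. The penalty's contribution to the Jacobian is $\mathcal O_P(n)$, which is negligible. This gives
\[
\sqrt N(\widehat\theta_{\mathrm{PL}}-\theta_0)=\mathcal I_\infty^{-1}\bigl(U_n-\mathscr B_n+\mathscr B_n\bigr)+o_P(1)\stackrel{d}{\to}\mathcal N(0,\mathcal I_\infty^{-1}).
\]

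The main obstacle is the exact matching of $d_\theta\eta$ with the bias term under diverging $\|\lambda_0\|_\infty$. Each remainder carries factors $e^{c\|\lambda_0\|_\infty}=n^{c\tau}$, and all of them must be shown to be $o(1)$ when $\tau<1/12$. The cross-node $\mathcal O(n)$ off-block entries of $H_{\lambda\lambda}$ also need care, because they are excluded from $D$. I would handle these by bounding each remainder entrywise through Lemma \ref{prop:hybrid_ranktwo_simple}, applied to the columns of $H_{\lambda\theta}$ and of the third-derivative arrays.
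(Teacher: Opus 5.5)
Your proposal is correct. It uses the same ingredients as the paper's proof:
\begin{itemize}
\item $\|\partial_\lambda\eta\|_\infty=\mathcal O(1)$, against a fixed-effect score of order $\sqrt{n\ln n}$, so the Newton and consistency arguments survive;
\item a penalty contribution of order $n=o(N)$ to the profile Jacobian;
\item an exact cancellation of the leading bias.
\end{itemize}

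Where you differ is in how that cancellation is organized. You treat the penalty's effect on the profiled score as one total derivative, $\partial_\theta\eta+H_{\theta\lambda}S\,\partial_\lambda\eta\approx d_\theta\eta(\theta,\bar\lambda(\theta))$. You then match it in a single deterministic identity against the profile-score bias $\tfrac12\operatorname{tr}\{(d_\theta H_{\lambda\lambda})(-H_{\lambda\lambda})^{-1}\}\approx-\tfrac12\operatorname{tr}(D^{-1}d_\theta D)$. This is modified-profile-likelihood logic, and it makes transparent why $\tfrac12\ln\det D$ is the right penalty.

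The paper instead cancels the two pieces where they arise. In the $\lambda$-equation, $\partial_\lambda\eta$ offsets the quadratic term in the expansion of $\widehat\lambda_{\mathrm{PL}}(\theta_0)-\lambda_0$ (Lemma \ref{lemma:guiyi}(vii)), giving $\widehat\lambda_{\mathrm{PL}}(\theta_0)-\lambda_0=-H_{\lambda\lambda}^{-1}\partial_\lambda\ell+o_P(n^{-1/2})$. In the $\theta$-equation, $\partial_\theta\eta$ offsets $\widetilde{\mathcal T}_n^{(2)}$ (Lemma \ref{lemma:guiyi}(vi)).

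What the paper's split buys:
\begin{itemize}
\item Those lemma items are stated for the random quadratic forms directly, so no separate step from expectations to realizations is needed.
\item It expands $\widehat\lambda_{\mathrm{PL}}(\theta_0)$ around $\lambda_0$ without going through the unpenalized $\widehat\lambda(\theta_0)$. Your route does go through it, which is harmless for a limit theorem because Theorem \ref{thm:lamba_consist}(i) gives its existence with probability approaching one.
\item It yields the by-product that the PL fixed effects carry no second-order bias. Corollary \ref{cor:ape_pl} reuses this, and your aggregated route would have to establish it separately.
\end{itemize}
The proof of Theorem \ref{thm:lambda_asynormal} already writes the bias as $\tfrac{1}{\sqrt N}(\partial_\theta\eta-H_{\theta\lambda}H_{\lambda\lambda}^{-1}\partial_\lambda\eta)$. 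Your key identity is therefore exactly the content of Lemma \ref{lemma:guiyi}(vi)--(vii), not a new step.

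One step of your sketch needs tightening. An entrywise $\mathcal O(n^{-1})$ bound on $U[U'(-H_{\lambda\lambda})U]^{-1}U'$ only makes its contribution to $\mathscr B_n$ of order one, not $o(1)$. Write that contribution in rank-two form, $\operatorname{tr}\{[U'(-H_{\lambda\lambda})U]^{-1}U'(d_\theta H_{\lambda\lambda})U\}$. The $u_+$ and $u_-$ quadratic forms are of order $n^2$ and $n$ respectively, because under $\alpha_n=\gamma_n=0$ only dyads involving node $n$ move along $u_-$. This gives an $\mathcal O(1)$ unnormalized trace, hence $\mathcal O(n^{-1})$ after dividing by $n-1$.
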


	\subsection{Bias Correction for Average Partial Effects}\label{sec:asyAPE}
	
	We next study average partial effects. The same argument also applies to generic \textit{Average Structural Functions (ASF)} that are smooth averages of dyad level link probabilities. 
	We state a high-level condition that is sufficient for the PL-based bias correction mechanism.  
	\begin{assumption}[Average structural functions]
		\label{assumption3}
		Let
		$
		\Delta(\theta,\lambda)=\frac{1}{n(n-1)}\sum_{i\neq j}\Delta_{ij}(\theta,\lambda),
		$
		Suppose the dyad-specific function $\Delta_{ij}(\theta,\lambda)$ is four times continuously differentiable in a neighborhood of
		$(\theta_0,\lambda_0)$. In that neighborhood, the functions
		$\Delta_{ij}(\theta,\lambda)$ and their derivatives with respect to the parameters up to the third order are uniformly bounded.
		Furthermore, the asymptotic variance of the linearized ASF, $V_{\Delta,\infty}$ (defined in Corollary \ref{cor:ape_pl}), is strictly positive and finite. 
	\end{assumption}
	
	Assumption \ref{assumption3} imposes standard smoothness and boundedness on the generic dyadic structural functions. 
	For the APEs studied in this paper, the derivative bounds follow from a simple count.   The function $\Delta(\theta,\lambda)$ averages over $n(n-1)$ directed dyads. 
	A common parameter enters every term in this average.  A fixed effect for node $i$ enters only the $2(n-1)$ directed dyads involving that node. 
	Thus, uniformly in a neighborhood of $(\theta_0,\lambda_0)$, we can verify that
	$
	\|\partial_\theta\Delta\|=\mathcal{O}(1),
	$
	$
	\|\partial_\lambda\Delta\|=\mathcal{O}(n^{-1/2}),
	$
	$
	\|\partial_{\lambda\lambda'}\Delta\|=\mathcal{O}(n^{-1}),
	$ 
	$
	\|\partial_{\theta\theta'}\Delta\|=\mathcal{O}(1),
	$
	and
	$
	\|\partial_{\theta\lambda'}\Delta\|=\mathcal{O}(n^{-1/2}).
	$\footnote{
		These rates follow from a simple count by the APE construction.
		For all $i$, we have $\max_i|\partial_{\lambda_i}\Delta | = \mathcal{O}(n^{-1})$. Hence, $\|\partial_\lambda\Delta\|\leq n^{1/2} \max_i|\partial_{\lambda_i}\Delta | =O(n^{-1/2})$.
		Moreover, because $\|\partial_{\lambda_i\lambda_i} \Delta \| = \mathcal{O}(n^{-1})$ and $\|\partial_{\lambda_i\lambda_j} \Delta \| = \mathcal{O}(n^{-2})$ for all $i\neq j$, it follows that $\|\partial_{\lambda\lambda'} \Delta\|_{\infty} = \|\partial_{\lambda\lambda'} \Delta\|_{1} =\mathcal{O}(n^{-1})$. This gives that $\|\partial_{\lambda\lambda'} \Delta\| \leq \sqrt{\|\partial_{\lambda\lambda'} \Delta\|_{\infty}\|\partial_{\lambda\lambda'} \Delta\|_{1}} = \mathcal{O}(n^{-1}).$ Finally, each entry of $\partial_{\theta\lambda'}\Delta$ is $O(n^{-1})$. Since $\dim(\theta)$ is fixed, $\|\partial_{\theta\lambda'}\Delta\| \leq n^{1/2}O(n^{-1})=O(n^{-1/2})$.
	}
	Taking more derivatives with respect to $\lambda$ does not enlarge the set of affected dyads. 
	%A third-order fixed effects remainder is therefore bounded by the cube of the fixed effects perturbation.
	These structural limits guarantee that the Taylor remainders of the structural function shrink at appropriate rates.

	Bias correction for coefficients does not automatically imply bias correction for APEs. As explained in Section \ref{sec:APEintro}, the PL estimator removes the first-order bias coming from the common parameters and the fixed-effects point estimates. The only remaining first-order term is induced by fixed-effects uncertainty. Equation \eqref{eq:ape_pl} subtracts exactly this bias, and the next corollary establishes the resulting asymptotic distribution.
	
	\begin{corollary}
		\label{cor:ape_pl}
		Suppose Assumptions \ref{assumption2}, \ref{assumption:spectral_hybrid_simple} and \ref{assumption3} hold. Assume $\|\lambda_0\|_\infty \le \tau \ln n$ with $0<\tau<1/12$. Then, 
		$\sqrt{N}\Bigl( \Delta(\widehat\theta_{\mathrm{PL}},\widehat\lambda_{\mathrm{PL}}) - \Delta- \tfrac{1}{2}\operatorname{tr}\bigl( [\partial_{\lambda\lambda'}\Delta] S \bigr) \Bigr) \overset{d}{\rightarrow} \mathcal N(0, V_{\Delta,\infty}),$
		where the asymptotic variance $V_{\Delta,\infty} :=\lim_{n \to \infty} V_{\Delta,n}$,
		with $V_{\Delta,n}$  defined in  \eqref{eq:APEvariance}.
	\end{corollary}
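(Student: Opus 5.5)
We decompose $\Delta(\widehat\theta_{\mathrm{PL}},\widehat\lambda_{\mathrm{PL}})-\Delta$ by a second-order Taylor expansion around $(\theta_0,\lambda_0)$:
\begin{align*}
\Delta(\widehat\theta_{\mathrm{PL}},\widehat\lambda_{\mathrm{PL}})-\Delta
&=\partial_{\theta'}\Delta\,(\widehat\theta_{\mathrm{PL}}-\theta_0)+\partial_{\lambda'}\Delta\,(\widehat\lambda_{\mathrm{PL}}-\lambda_0)\\
&\quad+\tfrac12(\widehat\lambda_{\mathrm{PL}}-\lambda_0)'[\partial_{\lambda\lambda'}\Delta](\widehat\lambda_{\mathrm{PL}}-\lambda_0)+R_n .
\end{align*}
Here $R_n$ collects the $\theta\theta$, $\theta\lambda$ and third-order terms. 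Using Corollary \ref{thm:thetaBC_asynormal}, $\|\widehat\theta_{\mathrm{PL}}-\theta_0\|=\mathcal O_P(n^{-1})$. By the analogue of Theorem \ref{thm:lamba_consist}(iii) for the penalized problem, $\|\widehat\lambda_{\mathrm{PL}}-\lambda_0\|_\infty=\mathcal O_P(\sqrt{\ln n}\,n^{2\tau-1/2})$. Together with the derivative rates listed after Assumption \ref{assumption3} (in particular that $\lambda$-derivatives only touch $2(n-1)$ dyads, so third-order terms are bounded by $n^{-1}\cdot n\|\widehat\lambda_{\mathrm{PL}}-\lambda_0\|_\infty^3$ up to constants), this gives $\sqrt N R_n=o_P(1)$ whenever $\tau<1/12$.

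The key step is a stochastic expansion of the penalized fixed-effect estimator. We expand the penalized first-order condition $\partial_\lambda(\ell+\eta)=0$ in $\lambda$ to obtain
\[
\widehat\lambda_{\mathrm{PL}}-\lambda_0=(-H_{\lambda\lambda})^{-1}\bigl[\partial_\lambda\ell+H_{\lambda\theta}(\widehat\theta_{\mathrm{PL}}-\theta_0)+\partial_\lambda\eta+\tfrac12\textstyle\sum_k\partial_{\lambda\lambda'\lambda_k}\ell\,(\cdot)(\cdot)\bigr]+\text{rem}.
\]
We then replace $(-H_{\lambda\lambda})^{-1}$ by $S$ using Lemma \ref{prop:hybrid_ranktwo_simple}(i). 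To do so, we verify its two conditions on $\mathbf f$ for the score (degree deviations, with $a_n=\sqrt{\ln n/n}$ by Bernstein and a union bound) and for the $\mathcal O(1)$-per-node vectors $\partial_\lambda\eta$ and the quadratic term. The point of the penalty, as in the coefficient result, is that $\partial_\lambda\eta$ offsets $\mathbb E$ of the quadratic score term $\tfrac12\sum_k\partial_{\lambda\lambda'\lambda_k}\ell\,S_{\cdot k}$ to order $o(n^{-1/2})$ after premultiplication by $\partial_{\lambda'}\Delta$. Because $\partial_\lambda\Delta$ has entries $\mathcal O(n^{-1})$, the linear term contributes
\[
\partial_{\lambda'}\Delta\,S\,\partial_\lambda\ell+\bigl(\partial_{\theta'}\Delta-\partial_{\lambda'}\Delta\,S H_{\lambda\theta}\bigr)(\widehat\theta_{\mathrm{PL}}-\theta_0)+o_P(N^{-1/2}).
\]
The fixed-effect estimation bias, which is of order $n^{-1}$, is cancelled rather than left behind. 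This is the sense in which the PL evaluation removes sources (i) and (ii).

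Next, the quadratic term is handled by writing $\widehat\lambda_{\mathrm{PL}}-\lambda_0\approx S\partial_\lambda\ell$. We show that
\[
\tfrac12\partial_\lambda\ell'\,S[\partial_{\lambda\lambda'}\Delta]S\,\partial_\lambda\ell-\tfrac12\operatorname{tr}([\partial_{\lambda\lambda'}\Delta]S)=o_P(N^{-1/2}).
\]
Its expectation equals $\tfrac12\operatorname{tr}(S[\partial_{\lambda\lambda'}\Delta]S\,\mathbb E[\partial_\lambda\ell\partial_{\lambda'}\ell])$, and $\mathbb E[\partial_\lambda\ell\partial_{\lambda'}\ell]=-H_{\lambda\lambda}$, so this is $\tfrac12\operatorname{tr}([\partial_{\lambda\lambda'}\Delta]S)$ up to the Lemma \ref{prop:hybrid_ranktwo_simple}(i) error. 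Its variance is bounded through dyadic independence: a U-statistic of $\mathcal O(n^2)$ independent dyad scores with kernel weights $\mathcal O(n^{-2})$ gives standard deviation $\mathcal O(n^{-2}\cdot n^{c\tau})$. We expect this quadratic-form step to be the main obstacle. The low-rank part $U[U'(-H_{\lambda\lambda})U]^{-1}U'$ of $S$ is not local, so we must check that $u_\pm'\partial_\lambda\ell$ and $u_\pm'[\partial_{\lambda\lambda'}\Delta]$ are small enough. We would first exploit that $\partial_{\lambda\lambda'}\Delta$ is essentially block diagonal with $\mathcal O(n^{-1})$ blocks, and that the rank-two correction contributes entries $\mathcal O(n^{-2}e^{c\|\lambda_0\|_\infty})$.

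Finally, the leading stochastic term is
\[
\bigl(\partial_{\theta'}\Delta-\partial_{\lambda'}\Delta SH_{\lambda\theta}\bigr)\,\mathcal I_n^{-1}\tfrac{1}{(n-1)^2}\bigl(\partial_\theta\ell-H_{\theta\lambda}S\partial_\lambda\ell\bigr)+\partial_{\lambda'}\Delta\,S\,\partial_\lambda\ell,
\]
using the linear representation of $\widehat\theta_{\mathrm{PL}}$ from the proof of Corollary \ref{thm:thetaBC_asynormal}. This is a sum over independent dyads of mean-zero bounded terms. A Lyapunov CLT, using boundedness of covariates and $V_{\Delta,\infty}>0$ from Assumption \ref{assumption3}, gives $\sqrt N(\cdot)\to\mathcal N(0,V_{\Delta,\infty})$, with $V_{\Delta,n}$ the variance of this sum. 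Collecting the pieces yields the corollary.
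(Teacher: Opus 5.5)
Your plan follows the paper's proof. It uses the same second-order expansion of $\Delta$ in $\lambda$, with the $\theta\theta$, $\theta\lambda$ and cubic pieces in the remainder. It uses the same stochastic expansion of $\widehat\lambda_{\mathrm{PL}}$, in which $\partial_\lambda\eta$ removes the quadratic score term. It centres the $\lambda\lambda$ term at $\tfrac12\operatorname{tr}([\partial_{\lambda\lambda'}\Delta]S)$ via $\mathbb E_0[\partial_\lambda\ell\,\partial_{\lambda'}\ell]=-H_{\lambda\lambda}$, and it arrives at the same leading term $U_\Delta$. Where you deviate, you fill in steps the paper passes over quickly:
\begin{itemize}
\item The paper obtains the penalty cancellation in sup norm (Lemma \ref{lemma:guiyi}(vii), via the bookkeeping from Corollary \ref{thm:thetaBC_asynormal}) and then applies Cauchy--Schwarz with $\|\partial_\lambda\Delta\|=\mathcal O(n^{-1/2})$; you cancel in mean after averaging against $\partial_\lambda\Delta$.
\item The paper computes only the mean of the quadratic form; you also bound its variance.
\item The paper uses the marginal, fixed-$L$ statements of Lemma \ref{lemma2_thm:lamba_asynormal}(ii)--(iii) for the limit of $U_\Delta$; you apply a Lyapunov CLT directly over independent dyads.
\end{itemize}

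What you must fix is the sign of the $\lambda$-feedback. $S$ approximates $-H_{\lambda\lambda}^{-1}$, so your own expansion $\widehat\lambda_{\mathrm{PL}}-\lambda_0=(-H_{\lambda\lambda})^{-1}[\partial_\lambda\ell+H_{\lambda\theta}(\widehat\theta_{\mathrm{PL}}-\theta_0)+\cdots]$ produces $+\partial_{\lambda'}\Delta\,SH_{\lambda\theta}(\widehat\theta_{\mathrm{PL}}-\theta_0)$. Likewise, the profile score in the representation of $\widehat\theta_{\mathrm{PL}}$ is $\partial_\theta\ell-H_{\theta\lambda}H_{\lambda\lambda}^{-1}\partial_\lambda\ell\approx\partial_\theta\ell+H_{\theta\lambda}S\partial_\lambda\ell$. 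You have a minus sign in both places, and this changes the answer:
\begin{itemize}
\item With your signs, $\mathbb E_0[(\partial_\theta\ell-H_{\theta\lambda}S\partial_\lambda\ell)\partial_{\lambda'}\ell]\approx-2H_{\theta\lambda}\neq0$. The two score pieces are then correlated, and the variance of your leading term is not $V_{\Delta,n}$ of \eqref{eq:APEvariance}.
\item With the correct signs, the covariance is $-H_{\theta\lambda}+H_{\theta\lambda}S(-H_{\lambda\lambda})\approx0$, and the variance splits into exactly the two terms of $V_{\Delta,n}$.
\end{itemize}

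Two smaller corrections.
\begin{itemize}
\item The quadratic form $\tfrac12\partial_{\lambda'}\ell\,S[\partial_{\lambda\lambda'}\Delta]S\,\partial_\lambda\ell$ has standard deviation of order $n^{-3/2+c\tau}$, not $n^{-2+c\tau}$. The matrix $S[\partial_{\lambda\lambda'}\Delta]S$ has $2\times2$ diagonal blocks of order $n^{-3}$, and there are $\mathcal O(n^3)$ pairs of distinct dyads sharing a node. The conclusion survives, since this is still $o(N^{-1/2})$.
\item You cancel $\partial_\lambda\eta$ only against the mean of the quadratic score term. That is the right choice, since the term does not concentrate coordinate by coordinate, but it creates two obligations. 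First, you need the analogous variance bound for $\partial_{\lambda'}\Delta(-H_{\lambda\lambda})^{-1}$ applied to the centred quadratic score term. Second, after that premultiplication the mean cancellation must leave $o(n^{-1})$, not the $o(n^{-1/2})$ you state.
\end{itemize}
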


	Corollary \ref{cor:ape_pl} clarifies why the penalized likelihood approach is especially convenient for structural functions. After penalizing the likelihood, the only remaining bias term can be removed by subtracting the trace term above. It does not require third-order derivatives. In this sense, the PL approach keeps coefficient correction and APE correction within a single likelihood-based framework. 
	From Corollary \ref{cor:ape_pl}, the bias-corrected APE estimator  $\widetilde{\Delta}(\widehat\theta_{\mathrm{PL}},\widehat\lambda_{\mathrm{PL}})$ defined in equation \eqref{eq:ape_pl} 
	is asymptotically centered at the true APE.  
	Consequently,
	$
	\sqrt{N}\bigl(\widetilde{\Delta}(\widehat\theta_{\mathrm{PL}},\widehat\lambda_{\mathrm{PL}})-\Delta\bigr)\overset{d}{\rightarrow}\mathcal N(0,V_{\Delta,\infty}).
	$

	\section{Monte Carlo Experiments}\label{sec:montecarlo}
	This section studies the finite sample performance of the PL estimator in designs tailored to the existence problem. 
	We find that in sparse networks, where isolated nodes occur frequently, standard fixed effects methods often become unavailable, while the PL estimator remains computable and continues to deliver accurate bias correction. We benchmark the PL estimator against the uncorrected  MLE, the estimator-level correction (EC), and conditional logit methods. We evaluate  both the common parameters and the APEs.
	
	\subsection{Designs}\label{sec:DGP}
	Our data-generating processes adapt the simulation framework in \citet{graham2017econometric} to accommodate asymmetric links and reciprocity. We consider three model classes: (\textit{i}) the reciprocal directed model that is the leading case of the paper, (\textit{ii}) the directed model without reciprocity, and (\textit{iii}) the undirected model.

	\begin{table}[H]
		\caption{Monte Carlo Designs and Network Sparsity}\label{tab: Monte Carlo Designs}
		\centering
		{\fontsize{8.5pt}{9.0pt}\selectfont
			\begin{tabularx}{\linewidth}{
					>{\hspace*{1em}\raggedright\arraybackslash}X
					>{\centering\arraybackslash}m{0.055\linewidth}
					>{\centering\arraybackslash}m{0.085\linewidth}
					>{\centering\arraybackslash}m{0.085\linewidth}
					>{\centering\arraybackslash}m{0.025\linewidth}
					>{\centering\arraybackslash}m{0.085\linewidth}
					>{\centering\arraybackslash}m{0.085\linewidth}
					>{\centering\arraybackslash}m{0.085\linewidth}
				}
				\toprule
				& \multicolumn{3}{c}{\begin{tabular}{c}
						Symmetric\\
						Uncorrelated Heterogeneity
				\end{tabular}}
				& & \multicolumn{3}{c}{\begin{tabular}{c}
						Right-Skewed\\
						Correlated Heterogeneity
				\end{tabular}}\\
				\cline{2-4}\cline{6-8}
				\rule{0pt}{2.4ex} & A.1 & A.2 & A.3 & & B.1 & B.2 & B.3 \\
				\hline
				
				\multicolumn{8}{l}{\cellcolor{gray!7}\textit{Panel A. Parameter settings}}\\
				$\varrho_L$ & $-1/2$ & $-1$ & $-2$ & & $-2/3$ & $-7/6$ & $-13/6$ \\
				$\varrho_H$ & $-1/2$ & $-1$ & $-2$ & & $-1/6$ & $-2/3$ & $-5/3$ \\
				$\varpi_0$  & $1$    & $1$  & $1$  && $1/4$  & $1/4$  & $1/4$ \\
				$\varpi_1$  & $1$    & $1$  & $1$  && $3/4$  & $3/4$  & $3/4$ \\
				\hline
				
				\multicolumn{8}{l}{\cellcolor{gray!7}\textit{Panel B. Directed network model with mutual utility} $(n=100)$}\\
				Density           & $0.416$ & $0.219$ & $0.039$ && $0.451$ & $0.254$ & $0.049$ \\
				Transitivity      & $0.430$ & $0.233$ & $0.053$ && $0.487$ & $0.297$ & $0.071$ \\
				\% Succ.\ of MLE  & $1.000$ & $1.000$ & $0.002$ && $1.000$ & $1.000$ & $0.003$ \\
				\hline
				
				\multicolumn{8}{l}{\cellcolor{gray!7}\textit{Panel C. Directed network model without mutual utility} $(n=100)$}\\
				Density           & $0.315$ & $0.166$ & $0.032$ && $0.344$ & $0.193$ & $0.040$ \\
				Transitivity      & $0.397$ & $0.239$ & $0.061$ && $0.453$ & $0.315$ & $0.094$ \\
				\% Succ.\ of MLE  & $1.000$ & $1.000$ & $0.000$ && $1.000$ & $1.000$ & $0.000$ \\
				\hline
				
				\multicolumn{8}{l}{\cellcolor{gray!7}\textit{Panel D. Undirected network model} $(n=100)$}\\
				Density           & $0.313$ & $0.163$ & $0.029$ && $0.342$ & $0.190$ & $0.038$ \\
				Transitivity      & $0.395$ & $0.234$ & $0.047$ && $0.453$ & $0.312$ & $0.084$ \\
				\% Succ.\ of MLE  & $1.000$ & $1.000$ & $0.000$ && $1.000$ & $1.000$ & $0.000$ \\
				\bottomrule
		\end{tabularx}}
		
		{\legend{Panel A lists parameter values used to simulate individual-specific degree heterogeneity as specified in Section \ref{sec:DGP}.
				Panels B--D report average network statistics over 1,000 Monte Carlo replications for $n=100$. 
				A.1--A.3 draw degree heterogeneity from a symmetric distribution independent of the covariates, while B.1--B.3 introduce right-skewed heterogeneity correlated with the covariates. 
				``\% Succ.\ of MLE'' denotes the fraction of replications in which the fixed effects  MLE  is successfully computed.}}
	\end{table}

	The directed covariate $X_{ij}$ is drawn from a discrete uniform distribution with possible values $\{0, 1\}$. We construct the symmetric mutual utility covariate as $Z_{ij} = \tilde{Z}_{i}\tilde{Z}_{j}$, where $\tilde{Z}_{i}$ is drawn from a discrete uniform distribution over $\{-1, 1\}$.
	The idiosyncratic error $\epsilon_{ij}$ follows a standard logistic distribution. 
	To capture the correlation between observables and unobserved degree heterogeneity, we specify the sender and receiver fixed effects as $\alpha_i = \varrho_L \mathbbm{1}_{\{\tilde{Z}_i=-1\}} + \varrho_H \mathbbm{1}_{\{\tilde{Z}_i=1\}} + \vartheta_i^{\alpha}$
	and
	$\gamma_i = \varrho_L \mathbbm{1}_{\{\tilde{Z}_i=-1\}} + \varrho_H \mathbbm{1}_{\{\tilde{Z}_i=1\}} + \vartheta_i^{\gamma}$ 
	with $\varrho_L\leq \varrho_H$. Both $\vartheta_i^{\alpha}$ and $\vartheta_i^{\gamma}$ are  drawn from the centered Beta distribution, $\text{Beta}(\varpi_0, \varpi_1) - \varpi_0/(\varpi_0 + \varpi_1)$. For the directed models, we simulate the network using a stochastic best-response dynamic, iterating 1,000 times for each dyad to ensure equilibrium convergence.\footnote{
		The directed network graph with reciprocity is simulated via an iterative updating scheme based on
		$
		g_{ij} = \mathbbm{1}{\{ X_{ij}\beta_0 + g_{ji}Z_{ij}\rho_0 +\alpha_i + \gamma_j \geq \epsilon_{ij}   \}}. 
		$
		We begin by initializing $g_{ji}=0$.  
		The directed network graph without reciprocity uses the same iterative scheme based on
		$
		g_{ij} \mathbbm{1}{\{Z_{ij}\beta_0 +\alpha_i + \gamma_j \geq \epsilon_{ij}   \}}.
		$
		The undirected network graph is generated directly by
		$
		g_{ij} = g_{ji}= \mathbbm{1}{\{ Z_{ij}\rho_0 +\alpha_i + \gamma_j \geq \epsilon_{ij}   \}}.
		$
	}
	We set the true common parameters to $\beta_0 = \rho_0 = 1$. For each design, we report results for $n \in \{100, 200\}$ based on 1,000 Monte Carlo replications.

	Table \ref{tab: Monte Carlo Designs}  summarizes six designs. 
	In Designs A.3 and B.3, the fixed effects MLE succeeds in only 0.2\%–0.3\% of replications in the reciprocal model and never succeeds in the nested models. In these same designs, the PL estimator is available in every replication and maintains near-nominal coverage for both coefficients and APEs.

\subsection{Monte Carlo Results}
Tables \ref{tab:MCresults_directedwithmutual} and \ref{tab:MCresults_directedwithmutual200} contain the Monte Carlo evidence for the reciprocal directed model. 
In the moderate dense network designs (A.1, A.2, B.1, B.2), the uncorrected  MLE and estimator based correction (EC) are available throughout. The bias of the uncorrected MLE is visible in both $\widehat{\beta}$ and $\widehat{\rho}$, and EC removes most of that bias. The penalized-likelihood estimator (PL) performs excellent in these designs.

In the sparse designs (A.3, B.3), existence becomes the primary issue. The uncorrected MLE is undefined and therefore reported as ``n.a.'' throughout the tables. The EC method  is unavailable for the same reason, since it is built on the underlying MLE. By contrast, the PL estimator is  available in every replication. This is the central finite-sample implication of Theorem \ref{thm:lambdaBC_exist}. 

The PL estimator remains accurate once sparsity becomes severe. In A.3 and B.3, the empirical coverage of PL estimates stays close to the nominal level for the common parameters. Panels C and D show the same pattern for the APEs.
In sparse networks, PL is useful not only because the estimator exists. It also supports inference on coefficients and APEs when MLE-based corrections are infeasible.
The same conclusion is even clearer when $n$ increased to 200 (Table \ref{tab:MCresults_directedwithmutual200}).

\textit{Nested Benchmark Models}.
The supplemental appendix \ref{sec:MCappendix} reports parallel results for the directed model without reciprocity and for the undirected model (Tables \ref{tab:MCresults_directedWOmutual} and  \ref{tab:MCresults_undirected}). For the directed model without reciprocity, we also report the quadruple-logit (QL) estimates of \citet{jochmans2018semiparametric}. For the undirected model, we report the tetrad-logit (TL)  benchmark of \citet{graham2017econometric}. The main pattern, however, is unchanged. As the network becomes sparse, the same existence issue arises, while the PL method remains  implementable and continues to perform well for both common parameters and APEs bias corrections. 

	\begingroup
	\fontsize{8.5pt}{9.0pt}\selectfont
	\begin{longtable}{
			>{\hspace*{1em}\raggedright\arraybackslash}p{0.16\linewidth}
			>{\centering\arraybackslash}m{0.1\linewidth}
			>{\centering\arraybackslash}m{0.1\linewidth}
			>{\centering\arraybackslash}m{0.1\linewidth}
			>{\centering\arraybackslash}m{0.025\linewidth}
			>{\centering\arraybackslash}m{0.1\linewidth}
			>{\centering\arraybackslash}m{0.1\linewidth}
			>{\centering\arraybackslash}m{0.1\linewidth}
		}
		\caption{Finite Sample Properties in  Reciprocal Directed Network Model ($n=100$)}\label{tab:MCresults_directedwithmutual}\\
		\toprule
		& \multicolumn{3}{c}{\begin{tabular}{c}
				Symmetric\\
				Uncorrelated Heterogeneity
		\end{tabular}}
		& & \multicolumn{3}{c}{\begin{tabular}{c}
				Right-Skewed\\
				Correlated Heterogeneity
		\end{tabular}}\\
		\cline{2-4}\cline{6-8}
		\rule{0pt}{2.4ex} & A.1 & A.2 & A.3 & & B.1 & B.2 & B.3 \\
		\hline
		\endfirsthead

		\hline
		\multicolumn{8}{r}{\textit{Continued on next page}} \\
		\endfoot
		
		\bottomrule
		\multicolumn{8}{l} {\legend{Panel A reports the bias of the Monte Carlo median estimates of $(\beta,\rho)$ for each  design across 1{,}000 replications; Monte Carlo standard deviation is shown in parentheses below. Panel B reports empirical coverage of nominal 95\% confidence intervals. Panel C reports the bias and standard deviation for APE estimates. Panel D reports empirical coverage of nominal 95\% confidence intervals for the APEs. ``n.a.'' indicates that the method is unavailable in the corresponding design, typically because the underlying fixed-effects MLE does not exist.}} 
		\endlastfoot

		\multicolumn{8}{l}{\cellcolor{gray!7}\textit{Panel A. Bias and standard deviation (common parameters)}}\\
		$\widehat{\beta}_{\text{MLE}}$  & $2.3$ & $2.3$ & $\mathrm{n.a.}$ & & $2.0$ & $2.1$ & $\mathrm{n.a.}$ \\
		& $(0.045)$ & $(0.054)$ &  & & $(0.045)$ & $(0.053)$ &  \\
		$\widehat{\beta}_{\text{EC}}$   & $0.2$ & $0.1$ & $\mathrm{n.a.}$ & & $0.2$ & $0.1$ & $\mathrm{n.a.}$ \\
		& $(0.045)$ & $(0.054)$ &  & & $(0.044)$ & $(0.053)$ &  \\
		$\widehat{\beta}_{\text{PL}}$   & $0.1$ & $0.0$ & $-0.3$ & & $-0.2$ & $-0.1$ & $-0.1$ \\
		& $(0.044)$ & $(0.053)$ & $(0.120)$ & & $(0.044)$ & $(0.051)$ & $(0.113)$ \\
		$\widehat{\rho}_{\text{MLE}}$ & $1.6$ & $1.1$ & $\mathrm{n.a.}$ & & $1.9$ & $1.4$ & $\mathrm{n.a.}$ \\
		& $(0.046)$ & $(0.074)$ &  & & $(0.043)$ & $(0.067)$ &  \\
		$\widehat{\rho}_{\text{EC}}$  & $0.4$ & $0.2$ & $\mathrm{n.a.}$ & & $0.4$ & $-0.1$ & $\mathrm{n.a.}$ \\
		& $(0.046)$ & $(0.075)$ &  & & $(0.043)$ & $(0.071)$ &  \\
		$\widehat{\rho}_{\text{PL}}$  & $0.3$ & $0.1$ & $1.8$ & & $0.4$ & $0.3$ & $0.9$ \\
		& $(0.046)$ & $(0.073)$ & $(0.374)$ & & $(0.042)$ & $(0.067)$ & $(0.279)$ \\
		\hline
		
		\multicolumn{8}{l}{\cellcolor{gray!7}\textit{Panel B. 95\% coverage probability (common parameters)}}\\
		$\widehat{\beta}_{\text{MLE}}$  & $0.921$ & $0.940$ & $\mathrm{n.a.}$ & & $0.933$ & $0.925$ & $\mathrm{n.a.}$ \\
		$\widehat{\beta}_{\text{EC}}$   & $0.947$ & $0.955$ & $\mathrm{n.a.}$ & & $0.962$ & $0.945$ & $\mathrm{n.a.}$ \\
		$\widehat{\beta}_{\text{PL}}$   & $0.947$ & $0.955$ & $0.953$ & & $0.967$ & $0.965$ & $0.937$ \\
		$\widehat{\rho}_{\text{MLE}}$ & $0.938$ & $0.943$ & $\mathrm{n.a.}$ & & $0.934$ & $0.937$ & $\mathrm{n.a.}$ \\
		$\widehat{\rho}_{\text{EC}}$  & $0.947$ & $0.938$ & $\mathrm{n.a.}$ & & $0.956$ & $0.941$ & $\mathrm{n.a.}$ \\
		$\widehat{\rho}_{\text{PL}}$  & $0.948$ & $0.951$ & $0.925$ & & $0.957$ & $0.939$ & $0.936$ \\
		\hline
		
		\multicolumn{8}{l}{\cellcolor{gray!7}\textit{Panel C. Bias and standard deviation (average partial effects)}}\\
		$\widehat{\beta}_{\text{MLE}}$  & $-0.1$ & $-0.5$ & $\mathrm{n.a.}$ & & $-0.1$ & $-0.1$ & $\mathrm{n.a.}$ \\
		& $(0.010)$ & $(0.009)$ &  & & $(0.009)$ & $(0.011)$ &  \\
		$\widehat{\beta}_{\text{PL}}$   & $-0.1$ & $-0.3$ & $0.5$ & & $-0.0$ & $0.0$ & $0.0$ \\
		& $(0.010)$ & $(0.009)$ & $(0.004)$ & & $(0.009)$ & $(0.011)$ & $(0.005)$ \\
		$\widehat{\rho}_{\text{MLE}}$ & $-0.0$ & $-0.4$ & $\mathrm{n.a.}$ & & $0.2$ & $-0.7$ & $\mathrm{n.a.}$ \\
		& $(0.006)$ & $(0.005)$ &  & & $(0.007)$ & $(0.008)$ &  \\
		$\widehat{\rho}_{\text{PL}}$  & $0.3$ & $0.4$ & $-1.4$ & & $0.4$ & $-0.1$ & $-2.1$ \\
		& $(0.006)$ & $(0.005)$ & $(0.001)$ & & $(0.007)$ & $(0.008)$ & $(0.002)$ \\
		\hline
		
		\multicolumn{8}{l}{\cellcolor{gray!7}\textit{Panel D. 95\% coverage probability (average partial effects)}}\\
		$\widehat{\beta}_{\text{MLE}}$  & $0.944$ & $0.959$ & $\mathrm{n.a.}$ & & $0.962$ & $0.926$ & $\mathrm{n.a.}$ \\
		$\widehat{\beta}_{\text{PL}}$   & $0.943$ & $0.960$ & $0.953$ & & $0.962$ & $0.924$ & $0.937$ \\
		$\widehat{\rho}_{\text{MLE}}$ & $0.937$ & $0.935$ & $\mathrm{n.a.}$ & & $0.915$ & $0.905$ & $\mathrm{n.a.}$ \\
		$\widehat{\rho}_{\text{PL}}$  & $0.934$ & $0.934$ & $0.925$ & & $0.916$ & $0.906$ & $0.936$ \\
	\end{longtable}
	\endgroup

	\begingroup
	\fontsize{8.5pt}{9.0pt}\selectfont
	\begin{longtable}{
			>{\hspace*{1em}\raggedright\arraybackslash}p{0.16\linewidth}
			>{\centering\arraybackslash}m{0.1\linewidth}
			>{\centering\arraybackslash}m{0.1\linewidth}
			>{\centering\arraybackslash}m{0.1\linewidth}
			>{\centering\arraybackslash}m{0.025\linewidth}
			>{\centering\arraybackslash}m{0.1\linewidth}
			>{\centering\arraybackslash}m{0.1\linewidth}
			>{\centering\arraybackslash}m{0.1\linewidth}
		}
		\caption{Finite Sample Properties in  Reciprocal Directed Network Model ($n=200$)}\label{tab:MCresults_directedwithmutual200}\\
		\toprule
		& \multicolumn{3}{c}{\begin{tabular}{c}
				Symmetric\\
				Uncorrelated Heterogeneity
		\end{tabular}}
		& & \multicolumn{3}{c}{\begin{tabular}{c}
				Right-Skewed\\
				Correlated Heterogeneity
		\end{tabular}}\\
		\cline{2-4}\cline{6-8}
		\rule{0pt}{2.4ex} & A.1 & A.2 & A.3 & & B.1 & B.2 & B.3 \\
		\hline
		\endfirsthead
		
		\hline
		\multicolumn{8}{r}{\textit{Continued on next page}} \\
		\endfoot
		
		\bottomrule
		\multicolumn{8}{l} {\legend{Same as Table \ref{tab:MCresults_directedwithmutual}.}} 
		\endlastfoot
		
		\multicolumn{8}{l}{\cellcolor{gray!7}\textit{Panel A. Bias and standard deviation (common parameters)}}\\
		$\widehat{\beta}_{\text{MLE}}$  & $1.1$ & $1.0$ & $\mathrm{n.a.}$ & & $1.2$ & $1.1$ & $\mathrm{n.a.}$ \\
		& $(0.022)$ & $(0.026)$ &  & & $(0.023)$ & $(0.027)$ &  \\
		$\widehat{\beta}_{\text{EC}}$   & $0.1$ & $-0.0$ & $\mathrm{n.a.}$ & & $0.3$ & $0.2$ & $\mathrm{n.a.}$ \\
		& $(0.021)$ & $(0.026)$ &  & & $(0.023)$ & $(0.027)$ &  \\
		$\widehat{\beta}_{\text{PL}}$   & $-0.0$ & $-0.1$ & $-0.3$ & & $0.1$ & $0.0$ & $-0.2$ \\
		& $(0.021)$ & $(0.026)$ & $(0.059)$ & & $(0.023)$ & $(0.026)$ & $(0.055)$ \\
		$\widehat{\rho}_{\text{MLE}}$ & $0.7$ & $0.5$ & $\mathrm{n.a.}$ & & $0.9$ & $0.7$ & $\mathrm{n.a.}$ \\
		& $(0.022)$ & $(0.037)$ &  & & $(0.022)$ & $(0.034)$ &  \\
		$\widehat{\rho}_{\text{EC}}$  & $0.1$ & $0.2$ & $\mathrm{n.a.}$ & & $0.1$ & $-0.1$ & $\mathrm{n.a.}$ \\
		& $(0.022)$ & $(0.038)$ &  & & $(0.021)$ & $(0.036)$ &  \\
		$\widehat{\rho}_{\text{PL}}$  & $0.1$ & $-0.0$ & $1.8$ & & $0.2$ & $0.2$ & $0.5$ \\
		& $(0.022)$ & $(0.037)$ & $(0.175)$ & & $(0.021)$ & $(0.033)$ & $(0.135)$ \\
		\hline
		
		\multicolumn{8}{l}{\cellcolor{gray!7}\textit{Panel B. 95\% coverage probability (common parameters)}}\\
		$\widehat{\beta}_{\text{MLE}}$  & $0.931$ & $0.934$ & $\mathrm{n.a.}$ & & $0.922$ & $0.926$ & $\mathrm{n.a.}$ \\
		$\widehat{\beta}_{\text{EC}}$   & $0.955$ & $0.951$ & $\mathrm{n.a.}$ & & $0.948$ & $0.938$ & $\mathrm{n.a.}$ \\
		$\widehat{\beta}_{\text{PL}}$   & $0.956$ & $0.954$ & $0.954$ & & $0.952$ & $0.935$ & $0.953$ \\
		$\widehat{\rho}_{\text{MLE}}$ & $0.941$ & $0.942$ & $\mathrm{n.a.}$ & & $0.936$ & $0.949$ & $\mathrm{n.a.}$ \\
		$\widehat{\rho}_{\text{EC}}$  & $0.955$ & $0.937$ & $\mathrm{n.a.}$ & & $0.950$ & $0.934$ & $\mathrm{n.a.}$ \\
		$\widehat{\rho}_{\text{PL}}$  & $0.958$ & $0.943$ & $0.950$ & & $0.947$ & $0.954$ & $0.939$ \\
		\hline
		
		\multicolumn{8}{l}{\cellcolor{gray!7}\textit{Panel C. Bias and standard deviation (average partial effects)}}\\
		$\widehat{\beta}_{\text{MLE}}$  & $-0.1$ & $-0.2$ & $\mathrm{n.a.}$ & & $0.1$ & $0.1$ & $\mathrm{n.a.}$ \\
		& $(0.005)$ & $(0.005)$ &  & & $(0.005)$ & $(0.006)$ &  \\
		$\widehat{\beta}_{\text{PL}}$   & $-0.0$ & $-0.2$ & $0.0$ & & $0.1$ & $0.2$ & $0.3$ \\
		& $(0.005)$ & $(0.005)$ & $(0.002)$ & & $(0.005)$ & $(0.006)$ & $(0.003)$ \\
		$\widehat{\rho}_{\text{MLE}}$ & $-0.2$ & $-0.2$ & $\mathrm{n.a.}$ & & $0.0$ & $-0.1$ & $\mathrm{n.a.}$ \\
		& $(0.004)$ & $(0.003)$ &  & & $(0.004)$ & $(0.005)$ &  \\
		$\widehat{\rho}_{\text{PL}}$  & $-0.0$ & $0.2$ & $2.7$ & & $0.1$ & $0.2$ & $0.9$ \\
		& $(0.004)$ & $(0.003)$ & $(0.001)$ & & $(0.004)$ & $(0.005)$ & $(0.001)$ \\
		\hline
		
		\multicolumn{8}{l}{\cellcolor{gray!7}\textit{Panel D. 95\% coverage probability (average partial effects)}}\\
		$\widehat{\beta}_{\text{MLE}}$  & $0.961$ & $0.953$ & $\mathrm{n.a.}$ & & $0.946$ & $0.930$ & $\mathrm{n.a.}$ \\
		$\widehat{\beta}_{\text{PL}}$   & $0.960$ & $0.952$ & $0.971$ & & $0.946$ & $0.932$ & $0.950$ \\
		$\widehat{\rho}_{\text{MLE}}$ & $0.916$ & $0.924$ & $\mathrm{n.a.}$ & & $0.908$ & $0.884$ & $\mathrm{n.a.}$ \\
		$\widehat{\rho}_{\text{PL}}$  & $0.913$ & $0.920$ & $0.954$ & & $0.910$ & $0.881$ & $0.926$ \\
	\end{longtable}
	\endgroup

	\section{Empirical Application: The Global Textiles Trade Network}\label{sec:empirical}  
	
	We apply our penalized likelihood estimator to model the formation of the 1986 global textiles trade network. Using data from the CEPII TradeProd database, we observe directed export links among 133 countries, alongside standard gravity covariates from the trade literature \citep{helpman2008estimating}, including geographic, institutional, and cultural proximity measures.\footnote{The covariates include log distance, indicators for a shared land border, island status, landlocked status, common legal origin, common language, currency union, common free trade agreement, and religious proximity. Variable definitions and descriptive statistics are reported in Appendix \ref{sec:additionalEmpiricalTable} of the supplemental material.}
	A directed link $g_{ij}=1$ indicates that country $i$ exports textiles to country $j$. 
	
	The application is useful for two reasons. 
	First, international trade is naturally directed and highly reciprocal. To consistently estimate the effects of bilateral frictions, researchers may control for unobserved degree heterogeneity via exporter and importer fixed effects.
	
	Second, the network contains boundary-degree countries even though it is not globally sparse. The standard MLE cannot accommodate such nodes and therefore requires iterative trimming.  In this application, it also removes central economies once the initial zero in-degree countries are dropped. The PL estimator avoids this sample attrition and estimates the model on the full observed network.
	
	\subsection{Network Topology and Existence Failure}
	
	Figure \ref{fig:visa} summarizes the topology of the textiles trade network. 
	The network contains 133 countries and has density 39.6 percent. It is therefore not globally sparse. 
	Reciprocity is high. Mutual dyads account for 30.1 percent of all country pairs, almost twice the rate implied by an independent network formation benchmark. 
	This high reciprocity rate validates the inclusion of the mutual utility component in our specification.
	
	Panels A and B show substantial degree heterogeneity.  
	11 countries (e.g., Albania, Bulgaria, and Vietnam) import no textiles from any partner in the network ($b_i=0$). They export textiles to at least one destination but import textiles from no country in the observed network. 
	Standard estimation protocols require dropping these 11 countries from the sample. 
	
	The trimming problem is iterative, and triggers a cascading selection problem. Removing the 11 importing countries alters the network sample for the remaining exporters. Conditional on this restricted 122 country sample, five major economies—France, Germany, Italy, Japan, and the UK—become full out-degree nodes, meaning they export to every remaining country in the data. Consequently, the standard MLE forces the deletion of these 5 global trade powerhouses.

	\begin{figure}[H]
		\centering
		\includegraphics[width=1\linewidth]{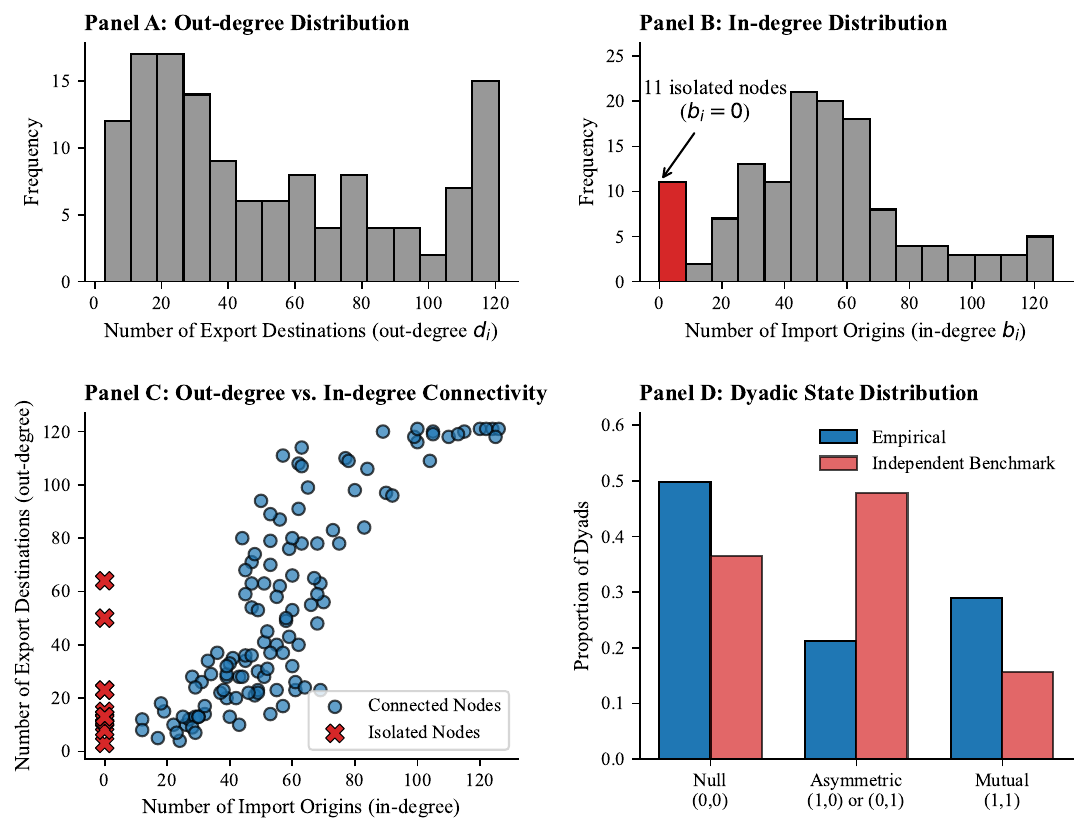}
		\caption{Degree and Dyadic State Distributions of the Textiles Trade Network. \textit{Notes}: Data are from 1986 CEPII-TradeProd for 133 countries.}\label{fig:visa}
	\end{figure}

	The final MLE sample shrinks to 117 countries. Estimating a global trade model without France, Germany, Italy, Japan, and the UK severely distorts the economic reality of the network and removes the most informative nodes. In contrast, the PL estimator retains the full 133 countries network and avoid this structural sample selection bias.

	\subsection{Estimation Results}
	Table \ref{tab:resultTrade_directed_mutual} reports the coefficient estimates and average partial effects for the reciprocal directed model.
	In Panel A, the directed utility parameters follow established gravity patterns. Distance strictly reduces the probability of a unilateral trade link (PL APE: $-0.098$), whereas sharing a common language (PL APE: $0.032$) and participating in a free trade agreement (PL APE: $0.222$) increase it.

	\begin{table}[!htbp]
		\caption{Reciprocal Directed Model: Coefficients and Average Partial Effects in the Global Textiles Trade Network}
		\label{tab:resultTrade_directed_mutual}
		\centering
		{\fontsize{8.5pt}{9.0pt}\selectfont
			\setlength{\tabcolsep}{4.2pt}%
			\renewcommand{\arraystretch}{1.05}%
			\begin{tabularx}{\linewidth}{
					>{\hspace*{1em}\raggedright\arraybackslash}X
					>{\centering\arraybackslash}p{0.14\linewidth}
					>{\centering\arraybackslash}p{0.14\linewidth}
					>{\centering\arraybackslash}p{0.04\linewidth}
					>{\centering\arraybackslash}p{0.14\linewidth}
					>{\centering\arraybackslash}p{0.14\linewidth}
				}
				\toprule
				& \multicolumn{2}{c}{Coefficients}
				& &
				\multicolumn{2}{c}{Average Partial Effects} \\
				\cline{2-3}\cline{5-6}
				\rule{0pt}{2.4ex}
				& MLE & PL & & MLE & PL \\
				\hline
				
				% ---------------- Panel A ----------------
				\multicolumn{6}{l}{\cellcolor{gray!7}\textit{Panel A: Directed Utility}}\\
				
				$\mathrm{Ln(Distance)}$   & $-1.374$\textsuperscript{***} & $-1.320$\textsuperscript{***} & & $-0.116$\textsuperscript{***} & $-0.098$\textsuperscript{***} \\
				& $(0.079)$ & $(0.070)$ & & $(0.008)$ & $(0.007)$ \\
				
				$\mathrm{Land\ border}$   & $-1.160$\textsuperscript{***} & $-0.985$\textsuperscript{***} & & $-0.103$\textsuperscript{***} & $-0.076$\textsuperscript{***} \\
				& $(0.413)$ & $(0.364)$ & & $(0.034)$ & $(0.026)$ \\
				
				$\mathrm{Island}$         & $1.274$\textsuperscript{***}  & $1.160$\textsuperscript{***}  & & $0.126$\textsuperscript{***}  & $0.099$\textsuperscript{***} \\
				& $(0.283)$ & $(0.269)$ & & $(0.029)$ & $(0.024)$ \\
				
				$\mathrm{Landlock}$       & $0.668$\textsuperscript{**}   & $0.586$\textsuperscript{*}    & & $0.065$\textsuperscript{**}   & $0.049$\textsuperscript{*} \\
				& $(0.316)$ & $(0.301)$ & & $(0.031)$ & $(0.026)$ \\
				
				$\mathrm{Legal}$          & $0.183$\textsuperscript{*}    & $0.120$                       & & $0.017$\textsuperscript{*}    & $0.010$ \\
				& $(0.101)$ & $(0.094)$ & & $(0.010)$ & $(0.008)$ \\
				
				$\mathrm{Language}$       & $0.392$\textsuperscript{***}  & $0.389$\textsuperscript{***}  & & $0.038$\textsuperscript{***}  & $0.032$\textsuperscript{***} \\
				& $(0.122)$ & $(0.115)$ & & $(0.012)$ & $(0.010)$ \\
				
				$\mathrm{Currency\ union}$& $1.972$\textsuperscript{***}  & $1.917$\textsuperscript{***}  & & $0.198$\textsuperscript{***}  & $0.167$\textsuperscript{***} \\
				& $(0.490)$ & $(0.479)$ & & $(0.051)$ & $(0.043)$ \\
				
				$\mathrm{FTA}$            & $2.494$\textsuperscript{*}    & $2.516$\textsuperscript{*}    & & $0.253$\textsuperscript{*}    & $0.222$\textsuperscript{*} \\
				& $(1.321)$ & $(1.295)$ & & $(0.135)$ & $(0.117)$ \\
				
				$\mathrm{Religion}$       & $0.635$\textsuperscript{***}  & $0.709$\textsuperscript{***}  & & $0.054$\textsuperscript{***}  & $0.052$\textsuperscript{***} \\
				& $(0.200)$ & $(0.190)$ & & $(0.017)$ & $(0.014)$ \\
				
				$\mathrm{Constant}$       & $3.918$\textsuperscript{***}  & $3.952$\textsuperscript{***}  & & $0.331$\textsuperscript{***}  & $0.292$\textsuperscript{***} \\
				& $(0.511)$ & $(0.479)$ & & $(0.045)$ & $(0.038)$ \\
				
				\hline
				
				% ---------------- Panel B ----------------
				\multicolumn{6}{l}{\cellcolor{gray!7}\textit{Panel B: Mutual Utility}}\\
				
				$\mathrm{Ln(Distance)}$   & $0.240$\textsuperscript{**}   & $0.220$\textsuperscript{**}    & & $0.017$\textsuperscript{**}   & $0.013$\textsuperscript{**} \\
				& $(0.111)$ & $(0.102)$ & & $(0.008)$ & $(0.006)$ \\
				
				$\mathrm{Land\ border}$   & $1.427$\textsuperscript{**}   & $1.249$\textsuperscript{**}   & & $0.108$\textsuperscript{**}   & $0.078$\textsuperscript{**} \\
				& $(0.615)$ & $(0.554)$ & & $(0.049)$ & $(0.035)$ \\
				
				$\mathrm{Island}$         & $-1.292$\textsuperscript{***} & $-1.148$\textsuperscript{***}  & & $-0.087$\textsuperscript{***} & $-0.067$\textsuperscript{***} \\
				& $(0.444)$ & $(0.404)$ & & $(0.028)$ & $(0.023)$ \\
				
				$\mathrm{Landlock}$       & $-0.396$                      & $-0.447$                      & & $-0.028$                      & $-0.027$ \\
				& $(0.535)$ & $(0.517)$ & & $(0.037)$ & $(0.030)$ \\
				
				$\mathrm{Legal}$          & $-0.107$                      & $-0.011$                      & & $-0.008$                      & $-0.001$ \\
				& $(0.163)$ & $(0.155)$ & & $(0.012)$ & $(0.009)$ \\
				
				$\mathrm{Language}$       & $0.097$                       & $0.140$                       & & $0.007$                       & $0.008$ \\
				& $(0.179)$ & $(0.168)$ & & $(0.013)$ & $(0.010)$ \\
				
				$\mathrm{Currency\ union}$& $0.307$                       & $0.385$                       & & $0.022$                       & $0.023$ \\
				& $(0.849)$ & $(0.840)$ & & $(0.063)$ & $(0.051)$ \\
				
				$\mathrm{FTA}$            & $0.592$                       & $0.564$                       & & $0.044$                       & $0.035$ \\
				& $(1.797)$ & $(1.784)$ & & $(0.136)$ & $(0.110)$ \\
				
				$\mathrm{Religion}$       & $0.009$                       & $-0.081$                      & & $0.001$                       & $-0.005$ \\
				& $(0.296)$ & $(0.283)$ & & $(0.021)$ & $(0.017)$ \\
				
				$\mathrm{Constant}$       & $0.140$                       & $0.087$                       & & $0.010$                       & $0.005$ \\
				& $(0.503)$ & $(0.465)$ & & $(0.036)$ & $(0.028)$ \\
				
				\hline
				
				% ---------------- Bottom block ----------------
				$\mathrm{Additional\ gravity\ controls}$ & Yes & Yes & & Yes & Yes \\
				$\mathrm{Sender/Receiver\ FEs}$ & Yes & Yes & & Yes & Yes \\
				$\mathrm{Estimated\ Sample\ (Nodes)}$ & $\mathbf{117}$ & $\mathbf{133}$ & & $\mathbf{117}$ & $\mathbf{133}$ \\
				
				\bottomrule
		\end{tabularx}}
		{\legend{This table reports the estimated coefficients and average partial effects for the directed network formation model with reciprocity, applied to the 1986 CEPII-TradeProd textiles network.  
				The full set of estimates for the nested specifications are reported in Appendix \ref{sec:additionalEmpiricalTable} of the supplemental material.
				Standard errors in parentheses. \textsuperscript{*} $p < 0.10$, \textsuperscript{**} $p < 0.05$, \textsuperscript{***} $p < 0.01$.}}
	\end{table}

	Panel B reports the mutual utility parameters, and demonstrates that bilateral variables affect unilateral and mutual trade through distinct channels. Sharing a land border strongly drives mutual trade (PL APE: $0.078$) but exhibits a negative direct effect (PL APE: $-0.076$). Conversely, island status reduces the probability of reciprocal trade (PL APE: $-0.067$) despite increasing the likelihood of unilateral links. These diverging partial effects confirm that explicitly modeling reciprocity is critical for recovering the correct mechanisms of network formation.
	
	The differences between MLE and PL are economically meaningful. Relative to the MLE, the PL estimates are systematically less pronounced, while the standard errors remain stable. For example, the directed utility APE for a land border falls in magnitude from $-0.103$ to $-0.076$, and the mutual utility APE falls from $0.108$ to $0.078$. This pattern aligns precisely with our asymptotic prediction: the incidental parameter problem primarily shifts the center of the distribution rather than scaling the variance. Failure to correct this bias, combined with the mechanical exclusion of core trading counties, may lead researchers to  overstate the  effects of trade frictions.

	\section{Concluding Remarks}\label{sec:conclusion}
	
	This paper studies fixed-effects estimation in dyadic network formation models with degree heterogeneity when the likelihood is affected by boundary-degree nodes and incidental parameter bias. 
	In empirical networks, the fixed effects MLE frequently fails to exist due to boundary-degree nodes. Even when it exists, the incidental parameter problem biases both common parameters and average partial effects. 
	We address these problems with a penalized likelihood estimator for a reciprocal directed network formation model that nests the existing undirected and nonreciprocal directed specifications. 
	The penalty keeps the estimator well defined without trimming isolated or boundary-degree nodes and yields bias corrections for both coefficients and APEs.

	The asymptotic analysis does not assume compactness of the fixed-effects parameter space. We allow the true sender and receiver effects to grow at a logarithmic rate with network size, so some link probabilities may shrink as the network expands. 
	The reciprocal model also requires a different fixed-effects Hessian approximation because sender and receiver effects for the same node are locally coupled. We derive an explicit inverse-Hessian approximation that accounts for this coupling and is sufficient for the coefficient and APE expansions.
	 
	In the empirical application, the textiles network has moderate density, yet local zero- or full-degree countries still force the standard MLE onto a selected subnetwork. The PL estimator instead uses the full observed network and yields bias-corrected coefficients and APEs.
	
	Several extensions are natural. One is likelihood-based inference built directly on the penalized objective. Another is to adapt the same logic to richer forms of dyadic dependence beyond reciprocity. These questions are left for future work. 
	  
	%-----------------------
	%  Appendix in Main text
	%-----------------------
	\begin{appendix}
		
		\section{Proofs of Main Results}\label{sec:proofmaintext}
		Appendix \ref{sec:proofmaintext} collects the proofs of the main results reported in the paper. It contains the proofs of Theorems \ref{thm:lambdaBC_exist}-- \ref{thm:lambda_asynormal}, and Corollaries \ref{thm:thetaBC_asynormal} and \ref{cor:ape_pl}.   
		The Supplementary Material (Appendices \ref{sec:clogitappendix}--\ref{sec:additionalEmpiricalTable}) collects the hexiad conditional logit construction for the reciprocal model, the proofs of Proposition \ref{thm:micro}, Lemma \ref{prop:hybrid_ranktwo_simple} and intermediate Lemmas, as well as additional Monte Carlo and empirical results.

		\begin{proof}[Proof of Theorem \ref{thm:lambdaBC_exist}] 
			Fix $\theta\in\Theta$. We first show that the penalized log-likelihood is \textit{coercive} in $\lambda$.\footnote{A function is coercive (for maximization) if it becomes arbitrarily small when parameters diverge, so the maximizer cannot escape to infinity and must lie in a bounded region.}
			Since $ D(\theta,\lambda)$ is block diagonal,  $\eta(\theta,\lambda) =     \frac12\sum_{i=1}^{n-1}\ln\det D_i(\theta,\lambda),$
			where $D_i$ is defined in \eqref{eqn:Di}.
			It therefore suffices to show that $\det D_i(\theta,\lambda)\to 0$. 
			
			Because $D_i(\theta,\lambda)$ is positive definite,
			$$
			0< \det D_i(\theta,\lambda)\le  
			\left(\textstyle\sum_{j\neq i}^{n-1}  p_{ij}(\theta,\lambda)\bigl[1-p_{ij}(\theta,\lambda)\bigr]\right)
			\left(\textstyle\sum_{j\neq i}^{n-1}p_{ji}(\theta,\lambda)\bigl[1-p_{ji}(\theta,\lambda)\bigr]\right)
			$$
			Note that
			$
			0< \sum_{j\neq i}^{n-1}  p_{ij}(\theta,\lambda)\bigl[1-p_{ij}(\theta,\lambda)\bigr] \le \frac{n-1}{4},
			$
			and
			$
			0< \sum_{j\neq i}^{n-1}  p_{ji}(\theta,\lambda)\bigl[1-p_{ji}(\theta,\lambda)\bigr] \le \frac{n-1}{4}.
			$
			In the directed network model with reciprocity,
			$$
			p_{ij}(\theta,\lambda)
			=
			\tfrac{\exp(B_{ij})+\exp(B_{ij}+B_{ji}+C_{ij})}
			{1+\exp(B_{ij})+\exp(B_{ji})+\exp(B_{ij}+B_{ji}+C_{ij})},
			$$

			Recall the identifying restriction  $\alpha_n=\gamma_n=0$. Consider first $|\alpha_i|\to\infty$. For every $j\neq i$,
			$
			B_{ij}(\theta,\lambda):=x_{ij}'\beta+\alpha_i+\gamma_j.
			$
			Since $\theta\in\Theta$ and the remaining fixed effects are bounded,
			$
			B_{ij}(\theta,\lambda)\to +\infty
			$
			if $\alpha_i\to+\infty$,
			and
			$
			B_{ij}(\theta,\lambda)\to -\infty
			$
			if $\alpha_i\to-\infty$.
			So $p_{ij}(\theta,\lambda)\to 1$ when $B_{ij}\to+\infty$, and $p_{ij}(\theta,\lambda)\to 0$ when $B_{ij}\to-\infty$. Therefore, we have
			$
			p_{ij}(\theta,\lambda)\bigl[1-p_{ij}(\theta,\lambda)\bigr]\to 0
			$
			for every $j\neq i$.
			Summing over $j\neq i$ gives
			$
			\textstyle\sum_{j\neq i}^{n-1}  p_{ij}(\theta,\lambda)\bigl[1-p_{ij}(\theta,\lambda)\bigr] \to 0.
			$
			Hence
			$
			0 < \det D_i(\theta,\lambda) \to 0.
			$
			So $\ln\det D_i(\theta,\lambda)\to -\infty$, and therefore $\eta(\theta,\lambda)\to-\infty$.
			The argument for $|\gamma_i|\to\infty$ is symmetric, which  yields $\eta(\theta,\lambda)\to-\infty$.
			
			Since the log-likelihood $\ell(\theta,\lambda)$ is bounded above by zero, whereas
			$
			\eta(\theta,\lambda)\to -\infty
			$
			whenever
			$
			\|\lambda\|_\infty\to\infty,
			$
			it follows that
			$
			\ell(\theta,\lambda)+\eta(\theta,\lambda)\to -\infty
			$
			as
			$
			\|\lambda\|_\infty\to\infty.
			$
			Thus, for each fixed $\theta$, the penalized log-likelihood is coercive in $\lambda$.

			Now let
			$
			\mathcal A_c(\theta)
			=
			\bigl\{
			\lambda\in\Lambda_n:
			\ell(\theta,\lambda)+\eta(\theta,\lambda)\ge c
			\bigr\},
			$
			where $c<\sup_{\lambda\in\Lambda_n}\{\ell(\theta,\lambda)+\eta(\theta,\lambda)\}$. By continuity of $\ell+\eta$, each upper level set $A_c(\theta)$ is closed. By coercivity, $\mathcal A_c(\theta)$ is bounded. Since $\Lambda_n$ is closed, $\mathcal A_c(\theta)$ is compact. By Weierstrass extreme value theorem,  the function $\lambda\mapsto \ell(\theta,\lambda)+\eta(\theta,\lambda)$ attains its maximum on $\Lambda_n$ for all $\theta$.\footnote{
				The coercivity is uniform in $\theta\in\Theta$, because $\Theta$ is compact and covariates have bounded support.
			} In other words, $\widehat{\lambda}_{\mathrm{PL}}$ have finite components.
			
			Finally, because $\Theta$ is compact and $\ell(\theta,\lambda)+\eta(\theta,\lambda)$ is continuous in $(\theta,\lambda)$, the same argument implies that the joint upper level set in $\Theta\times\Lambda_n$ is compact. Hence the penalized log-likelihood attains its maximum jointly in $(\theta,\lambda)$. Therefore,
			$
			(\widehat\theta_{\mathrm{PL}},\widehat\lambda_{\mathrm{PL}})
			$
			exists.
		\end{proof}

		Lemma \ref{lemma:guiyi} records  a set of rate bounds that will be used repeatedly. All bounds follow from the inverse Hessian approximation in Lemma \ref{prop:hybrid_ranktwo_simple}.  
		
		\begin{lemma}\label{lemma:guiyi}
			Suppose Assumption \ref{assumption2} and \ref{assumption:spectral_hybrid_simple} hold. Assume also that $\|\lambda_0\|=\tau \ln n$ with $\tau>0$.
			Then the following bounds hold:
			{\fontsize{10.5pt}{18pt}\selectfont
				\begin{enumerate}[label=(\roman*)]
					\item 
					$
					\bigl\|\bigl[-H_{\lambda\lambda}^{-1}(\widehat{\theta},\bar{\lambda}(\widehat{\theta}))-S(\widehat{\theta},\bar{\lambda}(\widehat{\theta}))\bigr]\partial_\lambda\ell(\widehat{\theta},\bar{\lambda}(\widehat{\theta})) \bigr\|_\infty
					=
					\mathcal{O}_P \bigl(\sqrt{\ln n} \, n^{-1+6\tau}  
					e^{6\|\bar\lambda(\widehat{\theta})-\lambda_0\|_\infty  }\bigr),
					$
					\\
					$
					\bigl\|S(\widehat{\theta},\bar{\lambda}(\widehat{\theta}))\partial_\lambda\ell(\widehat{\theta},\bar{\lambda}(\widehat{\theta})) \bigr\|_\infty
					=
					\mathcal{O}_P \bigl(\sqrt{\ln n} \, n^{-1/2+2\tau}  
					e^{2\|\bar\lambda(\widehat{\theta})-\lambda_0\|_\infty  }\bigr).
					$
					\item
					$
					\bigl\|\bigl[-H_{\lambda\lambda}^{-1}(\widehat{\theta},\bar{\lambda}(\widehat{\theta}))-S(\widehat{\theta},\bar{\lambda}(\widehat{\theta}))\bigr]H_{\lambda\theta}(\widehat{\theta},\bar{\lambda}(\widehat{\theta})) (\widehat{\theta}-\theta_0) \bigr\|_\infty
					=
					\mathcal{O}_P \bigl( n^{-1/2+6\tau}  
					e^{6\|\bar\lambda(\widehat{\theta})-\lambda_0\|_\infty  }  \|\widehat{\theta}-\theta_0\| \bigr),
					$
					\\
					$
					\bigl\|[S(\widehat{\theta},\bar{\lambda}(\widehat{\theta}))][H_{\lambda\theta}(\widehat{\theta},\bar{\lambda}(\widehat{\theta}))] (\widehat{\theta}-\theta_0) \bigr\|_\infty
					=
					\mathcal{O}_P \bigl( n^{2\tau}  
					e^{2\|\bar\lambda(\widehat{\theta})-\lambda_0\|_\infty  } \|\widehat{\theta}-\theta_0\| \bigr).
					$
					
					\item
					$
					\big\|
					-H^{-1}_{\lambda\lambda}
					\big[\sum_{k=1}^{2n-2}\partial_{\lambda\lambda'\lambda_k}\ell
					(\widehat\lambda-\lambda_0)_k\big]
					(\widehat\lambda-\lambda_0)
					\big\|_\infty
					=
					\mathcal{O}_P\bigl(n^{2\tau} \|\widehat\lambda-\lambda_0\|_\infty^2\bigr),
					$
					\\
					$
					\big\|
					-H^{-1}_{\lambda\lambda}
					\big[
					\sum_{k=1}^{2n-2}\partial_{\lambda\theta'\lambda_k}\ell(\theta_0,\widehat\lambda)
					(\widehat\lambda-\lambda_0)_k
					\big]
					(\widehat\theta-\theta_0)
					\big\|_\infty 
					=
					\mathcal{O}_P\bigl(n^{2\tau} \|\widehat\lambda-\lambda_0\|_\infty\|\widehat\theta-\theta_0\|\bigr),
					$
					\\
					$
					\big\|
					-H^{-1}_{\lambda\lambda}
					\big[
					\sum_{k=1}^{\dim\theta}\partial_{\lambda\theta'\theta_k}\ell(\widehat\theta,\lambda_0)
					(\widehat\theta-\theta_0)_k
					\big]
					(\widehat\theta-\theta_0)
					\big\|_\infty
					=
					\mathcal{O}_P\bigl(n^{2\tau}\|\widehat\theta-\theta_0\|^2\bigr).
					$
					\\
					$
					\big\|
					-H^{-1}_{\lambda\lambda}
					\big[
					\sum_{k=1}^{2n-2}\sum_{j=1}^{2n-2}
					\partial_{\lambda\lambda'\lambda_k\lambda_j}\ell(\theta_0,\widehat\lambda)
					(\widehat\lambda-\lambda_0)_k(\widehat\lambda-\lambda_0)_j
					\big]
					(\widehat\lambda-\lambda_0)
					\big\|_\infty 
					=
					\mathcal{O}_P\bigl(n^{2\tau} \|\widehat\lambda-\lambda_0\|_\infty^3\bigr).
					$
					
					\item
					$
					\bigl\|-H^{-1}_{\lambda\lambda}
					[\partial_{\lambda\lambda'}\eta(\widehat\theta,\widehat\lambda)]
					(\widehat\lambda-\lambda_0)\bigr\|_\infty
					=
					\mathcal{O}_P\bigl(n^{-1+6\tau}\|\widehat\lambda-\lambda_0\|_\infty\bigr).
					$
					
					\item 
					$
					\bigl\|-H^{-1}_{\lambda\lambda}
					[\partial_{\lambda\theta'}\eta(\widehat\theta,\widehat\lambda)]
					(\widehat\theta-\theta_0)\bigr\|_\infty
					=
					\mathcal{O}_P\bigl(n^{-1+6\tau}\|\widehat\theta-\theta_0\|\bigr).
					$
					
					\item 
					$
					\tfrac{1}{2n}
					\big[\textstyle\sum_{i=1}^{2n-2} (S \partial_{\lambda}\ell)_i\,\partial_{\theta\lambda'\lambda_i}\ell\big]
					\big(S \partial_{\lambda}\ell\big)
					=
					-\frac{1}{n}\partial_{\theta}\eta
					+
					\mathcal{O}_P(n^{2\tau-1/2}).
					$ 
					
					\item
					$
					\left\|
					-\frac12
					\big[
					\sum_{k=1}^{2n-2} (S \partial_{\lambda}\ell)_i\
					\partial_{\lambda\lambda'\lambda_i}\ell\, 
					\big]
					(S\partial_\lambda\ell)
					-
					\partial_\lambda\eta
					\right\|_\infty
					=
					\mathcal{O}_P(n^{2\tau-1/2} \ln n).
					$
				\end{enumerate}
				\par}
		\end{lemma}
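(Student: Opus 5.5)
The plan is to derive every bound in Lemma \ref{lemma:guiyi} from Lemma \ref{prop:hybrid_ranktwo_simple}. For each item we write the object as a matrix ($-H_{\lambda\lambda}^{-1}$, $S$, or their difference) applied to a vector $\mathbf f$. We then verify the two input conditions $\|\mathbf f\|_\infty=\mathcal O_P(na_n)$ and $|u_-'\mathbf f|=\mathcal O_P(na_n)$ for a suitable $a_n$. Where $-H_{\lambda\lambda}^{-1}$ itself appears, as in items (iii)--(v), we split it as $S+(-H_{\lambda\lambda}^{-1}-S)$. The two parts of Lemma \ref{prop:hybrid_ranktwo_simple} then give a bound of order $a_n n^{2\tau}$, since the error term $a_n n^{6\tau-1/2}$ is dominated by it when $\tau<1/8$. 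The factor $e^{2\|\lambda-\bar\lambda\|_\infty+2\|\bar\lambda-\lambda_0\|_\infty}$ is carried along and, wherever the statement suppresses it, absorbed into $\mathcal O_P(1)$ using $\|\widehat\lambda-\lambda_0\|_\infty=o_P(1)$.

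\textbf{Items (i)--(ii).} Take $\mathbf f=\partial_\lambda\ell(\widehat\theta,\bar\lambda(\widehat\theta))$. Each coordinate is a centered sum of $n-1$ independent bounded dyadic terms, because $\bar\lambda(\theta)$ solves the population score equation. Hoeffding's inequality with a union bound over the $2n-2$ coordinates gives $\|\mathbf f\|_\infty=\mathcal O_P(\sqrt{n\ln n})$. For $u_-'\mathbf f$, the sum $\sum_i(\partial_{\alpha_i}\ell-\partial_{\gamma_i}\ell)$ equals $\sum_{i<n}(d_i-b_i)$ minus its mean, which is $-(d_n-b_n)$ minus its mean. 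It is therefore a single-node quantity and also $\mathcal O_P(\sqrt{n\ln n})$. Hence $a_n=\sqrt{\ln n/n}$, which yields the rates $n^{-1+6\tau}$ and $n^{-1/2+2\tau}$ as stated. Uniformity in $\theta$ is handled by a bracketing argument over compact $\Theta$, or by a mean-value step with the Lipschitz dependence of the score. For (ii), take $\mathbf f=H_{\lambda\theta}(\widehat\theta-\theta_0)$. Each entry is a sum of $n-1$ bounded terms, so $\|\mathbf f\|_\infty=\mathcal O(n)\|\widehat\theta-\theta_0\|$. The $u_-$-aggregate again telescopes to node $n$ terms, so $a_n=\|\widehat\theta-\theta_0\|$.

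\textbf{Items (iii)--(v).} Third- and fourth-order derivatives of $\ell_{ij}$ in $\lambda$ are bounded and sparse: $\partial_{\lambda\lambda'\lambda_k}\ell$ is nonzero only for dyads touching node $k$. So the vector $[\sum_k\partial_{\lambda\lambda'\lambda_k}\ell\,\delta_k]\delta$ has sup-norm $\mathcal O(n\|\delta\|_\infty^2)$. Its $u_-$-projection is controlled the same way, by summing over dyads with at most two nodes each, which gives $\mathcal O(n\|\delta\|_\infty^2)$ after the sender/receiver cancellation. Applying the split above with $a_n=\|\delta\|_\infty^2$, $\|\delta\|_\infty\|\widehat\theta-\theta_0\|$, $\|\widehat\theta-\theta_0\|^2$ or $\|\delta\|^3_\infty$ gives the four bounds with the $n^{2\tau}$ factor. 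For (iv)--(v), $\eta=\tfrac12\sum\ln\det D_i$ with $D_i=\mathcal O(n)$ entries and $\det D_i\gtrsim n^2e^{-4\|\lambda\|_\infty}$. Derivatives of $\eta$ are then ratios whose entries satisfy $\|\partial_{\lambda\lambda'}\eta\,\delta\|_\infty=\mathcal O(n^{4\tau}\|\delta\|_\infty)$ and the analogous bound for $\theta$. This gives $a_n=n^{-1+4\tau}\|\delta\|_\infty$, and hence the factor $n^{-1+6\tau}$.

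\textbf{Items (vi)--(vii) and the main obstacle.} These are the genuine bias identities, and I expect them to be the hardest step. I would replace $S\partial_\lambda\ell$ by its leading part $D^{-1}\partial_\lambda\ell$. The low-rank piece has sup-norm of order $n^{-1}$ times the aggregate score, which is negligible. The quadratic form then becomes $\sum_{i,j}$ of products of node-level scores weighted by third derivatives. The cross-node terms $i\neq j$ are mean-zero degenerate U-statistics, and a variance computation shows they are $\mathcal O_P(n^{2\tau-1/2})$. The diagonal terms have expectation $\mathrm{tr}(D_i^{-1}\,\mathbb E[\cdot])$. The key algebraic check is that $\tfrac12\,\mathrm{tr}(D_i^{-1}\partial_\theta D_i)$, which is exactly $\partial_\theta$ of $\tfrac12\ln\det D_i$, matches this expectation. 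The match holds because $\mathbb E[\partial_{\lambda_i}\ell\,\partial_{\lambda_i'}\ell]=D_i$ up to off-block terms, and because $\partial_{\theta\lambda\lambda'}\ell=-\partial_\theta D$ on the diagonal blocks. Item (vii) is the same computation with $\partial_{\lambda_k}$ in place of $\partial_\theta$. The extra $\ln n$ comes from the sup over $2n-2$ coordinates in the concentration of the centered diagonal sums. Care is needed that the off-block-diagonal Hessian entries, of size $\mathcal O(1)$ per dyad, contribute only $\mathcal O(n^{-1})$ relative corrections after weighting by $D_i^{-1}$.
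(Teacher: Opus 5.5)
Your items (i)--(v) follow the route the paper indicates: every bound is read off Lemma \ref{prop:hybrid_ranktwo_simple} after checking $\|\mathbf f\|_\infty$ and $|u_-'\mathbf f|$, and your checks there are fine. Two small additions. In (iv)--(v) you also need the $u_-$ condition; it holds because $u_-'\partial_\lambda\eta$ involves only the dyads $\{i,n\}$. And the split $-H_{\lambda\lambda}^{-1}=S+(-H_{\lambda\lambda}^{-1}-S)$ yields the $n^{2\tau}$ rate only when $\tau<1/8$. The real problems are in (vi)--(vii).

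\textbf{The low-rank part of $S\partial_\lambda\ell$ is not negligible.} The direction $u_-$ is genuinely weak. We have $u_-'(-H_{\lambda\lambda})u_-=\operatorname{Var}(b_n-d_n)=\mathcal O(n)$, so the $u_-$-block of $[U'(-H_{\lambda\lambda})U]^{-1}$ is of order $n^{-1}$. This is exactly why Lemma \ref{prop:hybrid_ranktwo_simple} needs a separate hypothesis on $u_-'\mathbf f$. Meanwhile $u_-'\partial_\lambda\ell=\mathcal O_P(\sqrt{n\ln n})$ by your own computation in (i).

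Hence $U[U'(-H_{\lambda\lambda})U]^{-1}U'\partial_\lambda\ell=c_-u_-+c_+u_+$, where $c_+=\mathcal O_P(n^{-1})$ but $c_-$ is of order $n^{-1/2}$, up to logarithmic and $e^{c\|\lambda_0\|_\infty}$ factors. That is the same order as the entries of $D^{-1}\partial_\lambda\ell$. A sup-norm bound on $c_-u_-$ therefore leaves an error of order $\sqrt{\ln n}$ in (vii), and the same in (vi) even after the $1/(2n)$ scaling. Both exceed the target rates.

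The repair is structural. Since $H_{\lambda\lambda}u_-=\partial_\lambda(u_-'\partial_\lambda\ell)$ depends only on the dyads $\{j,n\}$, two facts follow:
\begin{enumerate}
\item $\partial_\theta(H_{\lambda\lambda}u_-)$ has $\mathcal O(1)$ entries;
\item $\partial_{\lambda_l}(H_{\lambda\lambda}u_-)$ has $\mathcal O(1)$ entries supported on the block of $l$'s node.
\end{enumerate}
With these, every term involving $c_-u_-$ is of smaller order, and the $u_+$ terms are controlled by $c_+=\mathcal O_P(n^{-1})$.

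\textbf{The concentration step for (vii) fails outright.} Fix a coordinate $l$ belonging to node $a$, and let $w_a=D_a^{-1}s_a$ be the node-$a$ part of $D^{-1}\partial_\lambda\ell$. The own-node term is
\[
-\tfrac12\, w_a'[\partial_{\lambda_l}H_{aa}]w_a=\tfrac12\, w_a'[\partial_{\lambda_l}D_a]w_a .
\]
This is a single quadratic form with $\partial_{\lambda_l}D_a=\mathcal O(n)$ and $D_a^{-1/2}s_a\Rightarrow\mathcal N(0,I_2)$. It therefore converges to a nondegenerate weighted $\chi^2$. For example, at $\rho_0=0$ the $\alpha_a$-coordinate is asymptotically $\tfrac{\sum_j p_{aj}(1-p_{aj})(1-2p_{aj})}{2\sum_j p_{aj}(1-p_{aj})}\,\chi^2_1$.

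Its deviation from the mean $\tfrac12\operatorname{tr}(D_a^{-1}\partial_{\lambda_l}D_a)$ is of exact order one, and the maximum over the $2n-2$ coordinates is of order $\ln n$. No other term in that coordinate cancels it: third derivatives vanish for three distinct nodes. Nothing averages over nodes either. That is the difference from (vi), where the sum over $a$ and the factor $1/(2n)$ shrink the same fluctuations to $\mathcal O_P(n^{-1/2})$.

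So the $\ell_\infty$ bound in (vii) is not attainable as stated: the quadratic form equals $\partial_\lambda\eta$ only in expectation. Your trace computation proves exactly that mean identity. What can be shown is that the difference is a centered vector $\xi$ with $\mathcal O_P(1)$ coordinates, weakly dependent across nodes, plus a small remainder. Wherever (vii) is used downstream, $\xi$ must be carried as a separate term rather than absorbed into an $\ell_\infty$ remainder. For instance, $H_{\theta\lambda}H_{\lambda\lambda}^{-1}\xi$ does average over nodes.
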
 
		
		To prove Theorem \ref{thm:lamba_consist}, we provide a supporting lemma to control the mean-value terms in the asymptotic expansion. 
		
		\begin{lemma}\label{lemma1_thm:lamba_consist}
			Suppose Assumption \ref{assumption2} holds and let $\theta\in\Theta$. 
			Then, for any $\mathbf{x}, \mathbf{y}$, $ \mathbf{z}$ that lie in a convex subset of $\mathbb{R}^{2 n-2}$,  
			$
			\|[H_{\lambda\lambda}(\theta, \mathbf{x})-H_{\lambda \lambda}(\theta, \mathbf{y})] \mathbf{z}\|_{\infty} \leq\|\mathbf{x}-\mathbf{y}\|_{\infty}\|\mathbf{z}\|_{\infty} 4(n-1).
			$
		\end{lemma}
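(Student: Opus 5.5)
The plan is to bound each row of $[H_{\lambda\lambda}(\theta,\mathbf x)-H_{\lambda\lambda}(\theta,\mathbf y)]\mathbf z$ directly, through a mean-value argument on the individual Hessian entries, and then count how many nonzero entries a row has. Since the log-likelihood is a sum over unordered dyads, $\ell=\sum_{i<j}\ell_{ij}$, and $\ell_{ij}$ depends on $\lambda$ only through $(\alpha_i,\gamma_i,\alpha_j,\gamma_j)$, the Hessian entry $\partial_{\lambda_k\lambda_m}\ell$ is a sum of second derivatives of the dyadic log partition function $\mathcal C_{ij}$ (equivalently of $\ln$ of the denominator in \eqref{eqn:prop1pi}). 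These second derivatives are covariances of the indicators $(g_{ij},g_{ji})$ under the dyadic Gibbs distribution. The first step is to write these entries explicitly. For example, $-\partial_{\alpha_i\alpha_i}\ell=\sum_{j\neq i}p_{ij}(1-p_{ij})$, $-\partial_{\alpha_i\gamma_i}\ell=\sum_{j\neq i}(p^{(11)}_{ij}-p_{ij}p_{ji})$, and $-\partial_{\alpha_i\gamma_j}\ell=p_{ij}(1-p_{ij})$, with the analogous formulas for the remaining index pairs.

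The second step is to bound the third derivatives of the dyadic log partition function uniformly. These are third cumulants of the bounded variables $g_{ij},g_{ji}\in\{0,1\}$, so each one is bounded in absolute value by an absolute constant, independently of $\theta$, $\lambda$ and the covariates. For instance $|\partial_B^3\ln(1+e^B)|=|p(1-p)(1-2p)|\le 1/4$, and the mixed third cumulants of two binary variables are similarly bounded. A mean-value theorem along the segment between $\mathbf x$ and $\mathbf y$, which stays in the convex set, then gives the entrywise bound
\[
|\partial_{\lambda_k\lambda_m}\ell_{ij}(\theta,\mathbf x)-\partial_{\lambda_k\lambda_m}\ell_{ij}(\theta,\mathbf y)|\le c\,\|\mathbf x-\mathbf y\|_\infty .
\]
Here $c$ must absorb the fact that $\ell_{ij}$ depends on at most four fixed effects, so the gradient of a Hessian entry involves at most four coordinates of $\mathbf x-\mathbf y$.

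The third step is the row count. Row $\alpha_i$ contains the diagonal entry $\alpha_i$, the within-node entry $\gamma_i$, and the off-node entries $\gamma_j$ and $\alpha_j$ for $j\neq i$. The diagonal and within-node entries are sums of $n-1$ dyadic terms, whereas each off-node entry involves a single dyad. Applying the triangle inequality and summing the absolute differences times $\|\mathbf z\|_\infty$, row $i$ is controlled by the total variation over all dyads containing node $i$. Every dyad $(i,j)$ contributes through at most four entries of that row, namely $\alpha_i,\gamma_i,\alpha_j,\gamma_j$. This yields a bound of the form $C(n-1)\|\mathbf x-\mathbf y\|_\infty\|\mathbf z\|_\infty$.

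The main obstacle is getting the constant down to exactly $4$. The rough count (four entries per dyad, four arguments in each derivative, third cumulants bounded by $1/4$ or so) has to be combined carefully. Rather than bounding each third derivative separately, I will bound, for each dyad, the total directional derivative $\sum_{m}|\frac{d}{dt}\partial_{\lambda_k\lambda_m}\ell_{ij}|$. I will use that a third cumulant of binary variables in a direction $v$ with $\|v\|_\infty\le1$ is bounded by $\mathbb E|S-\mathbb ES|^3$, where $S$ is a linear combination of at most two indicators with coefficients bounded by $2$. Summing over the $n-1$ dyads then gives the factor $4(n-1)$, and I will check that the reciprocal coupling term $C_{ij}$ does not inflate the bound.
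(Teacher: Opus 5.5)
Your Steps 1--3 are the right argument, and they already prove the lemma with the stated constant. The ``main obstacle'' in your last paragraph is not real. Every third derivative of the dyadic log partition function $\ln\bigl(1+e^{B_{ij}}+e^{B_{ji}}+e^{B_{ij}+B_{ji}+C_{ij}}\bigr)$ with respect to $(B_{ij},B_{ji})$ is at most $1/4$ in absolute value, uniformly in $C_{ij}$. The pure ones equal $p(1-p)(1-2p)$. The mixed ones, e.g.\ $\mathbb E[(g_{ij}-p_{ij})^2(g_{ji}-p_{ji})]$, are bounded in absolute value by $\mathbb E[(g_{ij}-p_{ij})^2]\le 1/4$.

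A dyadic piece of a Hessian entry depends on $\lambda$ only through $B_{ij}$ and $B_{ji}$, i.e.\ through at most four fixed effects with unit coefficients. By the mean value theorem it therefore changes by at most $4\cdot\tfrac14\|\mathbf x-\mathbf y\|_\infty$. Dyad $\{i,j\}$ enters at most four entries of row $\alpha_i$ (columns $\alpha_i,\gamma_i,\alpha_j,\gamma_j$), and likewise of row $\gamma_i$. Its contribution to that row of $[H_{\lambda\lambda}(\theta,\mathbf x)-H_{\lambda\lambda}(\theta,\mathbf y)]\mathbf z$ is thus at most $4\|\mathbf x-\mathbf y\|_\infty\|\mathbf z\|_\infty$. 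Summing over the $n-1$ dyads containing node $i$ gives exactly $4(n-1)$; dyads involving node $n$ contribute fewer terms because $\alpha_n=\gamma_n=0$. Your rough count $4\times4\times\tfrac14$ is precisely this computation, and it is evidently where the constant in the statement comes from.

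The refinement you propose instead would not deliver $4$ as written. For a combination $S$ of $g_{ij},g_{ji}$ with coefficients up to $2$, $\mathbb E|S-\mathbb ES|^3$ is not bounded by $4$. Take $S=2(g_{ij}+g_{ji})$ and $B_{ij}=B_{ji}=-C_{ij}/2$. Then $\mathbb E|S-\mathbb ES|^3=8/(1+e^{-C_{ij}/2})$, which exceeds $4$ whenever $C_{ij}>0$ and approaches $8$ as $C_{ij}$ grows, so this route gives a constant up to $8(n-1)$.

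If you want a sharper constant, keep $\mathbf z$ inside the cumulant. Define $S_{\mathbf v}:=(v_{\alpha_i}+v_{\gamma_j})g_{ij}+(v_{\alpha_j}+v_{\gamma_i})g_{ji}$. The dyad-$\{i,j\}$ part of row $\alpha_i$ of $H_{\lambda\lambda}\mathbf z$ is $-\mathrm{Cov}(g_{ij},S_{\mathbf z})$. Its derivative along the segment from $\mathbf y$ to $\mathbf x$ is $-\kappa_3(g_{ij},S_{\mathbf z},S_{\mathbf x-\mathbf y})$. Three-factor H\"older plus Minkowski, with $\|g-\mathbb Eg\|_{L^3}\le 1/2$ for Bernoulli $g$, bounds this by $\tfrac12\cdot2\|\mathbf z\|_\infty\cdot2\|\mathbf x-\mathbf y\|_\infty$, which yields $2(n-1)$. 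Either way, replace Step 4 as currently phrased; the crude count you already had suffices.
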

		
		\begin{proof}[Proof of Theorem \ref{thm:lamba_consist}] 
			This proof contains three steps. In Step 1, we verify
			$\lVert\widehat{\lambda}(\widehat{\theta})-\bar{\lambda}(\widehat{\theta})\rVert_{\infty}=o_P(1)$ by assuming $\widehat{\theta}-\theta_0=o_P(1)$.
			When evaluating at $\theta_0$, 
			$\lVert\widehat{\lambda}(\theta_0)-\lambda_0\rVert_{\infty}=o_P(1)$ automatically holds. This feeds into Step 2 for the consistency of $\widehat{\theta}$.
			Building on previous steps, Step 3 finally gives $\lVert\widehat{\lambda}(\widehat{\theta})-\bar{\lambda}({\theta_0})\rVert_{\infty}=o_P(1)$.\footnote{Note that $\widehat{\lambda}=\widehat{\lambda}(\widehat{\theta})$ and $\bar{\lambda}({\theta_0})=\lambda_0$, by the concentrated/profile likelihood argument.}

			\#\textit{\underline{Step 1. Evaluating $\|\widehat{\lambda}(\widehat{\theta})-\bar\lambda(\widehat{\theta})\|_{\infty}$ given $\widehat{\theta}-\theta_0=o_P(1)$.} }
			
			Define the Newton iterates
			$
			\lambda^{(k+1)}(\widehat{\theta})
			=
			\lambda^{(k)}(\widehat{\theta})
			-
			H_{\lambda\lambda}^{-1}\big(\widehat{\theta},\lambda^{(k)}(\widehat{\theta})\big)\,
			\partial_\lambda \ell\big(\widehat{\theta},\lambda^{(k)}(\widehat{\theta})\big), 
			$ for $k\ge 0$.
			Define the set $\mathcal{B}_{2r_{\lambda(\widehat{\theta})}}\big(\bar\lambda(\widehat{\theta})\big)=\big\{\lambda:\ \|\lambda-\bar\lambda(\widehat{\theta})\|_{\infty}<2r_{\lambda(\widehat{\theta})}\big\}.$
			Let 
			$$
			\widetilde{\lambda}^{(k)}(\widehat{\theta}):=\lambda^{(k)}(\widehat{\theta})-\lambda^{(k-1)}(\widehat{\theta}), \text{ for any } k\ge 1.
			$$

			Let $\lambda^{(0)}(\widehat{\theta})=\bar{\lambda}(\widehat{\theta})$ and 
			$
			r_{\lambda(\widehat\theta)}
			:=
			\bigl\|
			-H_{\lambda\lambda}^{-1}\bigl(\widehat\theta,\bar\lambda(\widehat\theta)\bigr)
			\partial_\lambda\ell\bigl(\widehat\theta,\bar\lambda(\widehat\theta)\bigr)
			\bigr\|_\infty.
			$
			Then,
			$\lambda^{(1)}(\widehat{\theta})\in\mathcal{B}_{2r_{\lambda(\widehat{\theta})}}\big(\bar\lambda(\widehat{\theta})\big)$ 
			because
			$
			\|\widetilde{\lambda}^{(1)}(\widehat{\theta})\|_{\infty}
			=
			\|\lambda^{(1)}(\widehat{\theta})-\lambda^{(0)}(\widehat{\theta})\|_{\infty}
			=
			r_{\lambda(\widehat{\theta})}.
			$
			
			Assume that for any $s=1,\ldots,k$, $\lambda^{(s)}(\widehat{\theta})\in
			\mathcal{B}_{2r_{\lambda(\widehat{\theta})}}\big(\bar\lambda(\widehat{\theta})\big)$ holds.
			We will show that $\lambda^{(k+1)}(\widehat{\theta})\in
			\mathcal{B}_{2r_{\lambda(\widehat{\theta})}}\big(\bar\lambda(\widehat{\theta})\big)$.
			Consider
			\begin{align}
				\big\|\lambda^{(k+1)}(\widehat{\theta})-\lambda^{(k)}(\widehat{\theta})\big\|_{\infty}
				&=
				\big\|
				-H_{\lambda\lambda}^{-1}\big(\widehat{\theta},\lambda^{(k)}(\widehat{\theta})\big)\,
				\partial_\lambda \ell\big(\widehat{\theta},\lambda^{(k)}(\widehat{\theta})\big)
				\big\|_{\infty}.
				\label{eqn:thm2diedai}
			\end{align}
			Because $\lambda^{(k)}(\widehat\theta)$ is obtained from $\lambda^{(k-1)}(\widehat\theta)$ by one Newton step,
			\begin{align}
				[H_{\lambda\lambda}\bigl(\widehat\theta,\lambda^{(k-1)}(\widehat\theta)\bigr)]\widetilde{\lambda}^{(k)}(\widehat{\theta})
				+
				\partial_\lambda\ell\bigl(\widehat\theta,\lambda^{(k-1)}(\widehat\theta)\bigr)
				=0.
				\label{eqn:thm2onestep}
			\end{align}
			Hence
			\begin{align*}
				\partial_\lambda\ell\bigl(\widehat\theta,\lambda^{(k)}(\widehat\theta)\bigr)
				&=
				\partial_\lambda\ell\bigl(\widehat\theta,\lambda^{(k)}(\widehat\theta)\bigr)
				-
				\partial_\lambda\ell\bigl(\widehat\theta,\lambda^{(k-1)}(\widehat\theta)\bigr)
				-
				[H_{\lambda\lambda}\bigl(\widehat\theta,\lambda^{(k-1)}(\widehat\theta)\bigr)]\widetilde{\lambda}^{(k)}(\widehat{\theta}) \\
				&=
				\textstyle\int_0^1
				\bigl[
				H_{\lambda\lambda}\bigl(\widehat\theta,\lambda^{(k-1)}(\widehat\theta)+t\widetilde{\lambda}^{(k)}(\widehat{\theta})\bigr)
				-
				H_{\lambda\lambda}\bigl(\widehat\theta,\lambda^{(k-1)}(\widehat\theta)\bigr)
				\bigr]
				\widetilde{\lambda}^{(k)}(\widehat{\theta})\, \mathrm{d} t.
			\end{align*}
			By Lemma \ref{lemma1_thm:lamba_consist}, each row of the last equation is a sum of $\mathcal{O}(n)$ terms times $\|\widetilde{\lambda}^{(k)}(\widehat{\theta})\|_\infty^2$. Therefore,
			\begin{equation}
				\bigl\|
				\partial_\lambda\ell\bigl(\widehat\theta,\lambda^{(k)}(\widehat\theta)\bigr)
				\bigr\|_\infty
				=
				\mathcal{O}\bigl(n\|\widetilde{\lambda}^{(k)}(\widehat{\theta})\|_\infty^2\bigr).
				\label{eq:newton_score_bound}
			\end{equation}
			
			To apply Lemma \ref{prop:hybrid_ranktwo_simple} to \eqref{eqn:thm2diedai}, it remains to control the $u_-$-projection difference, i.e., 
			$
			u_-'  \partial_\lambda\ell\bigl(\widehat\theta,\lambda^{(k)}(\widehat\theta)\bigr)
			:=
			\sum_{i=1}^{n-1}\partial_{\alpha_i}\ell\bigl(\widehat\theta,\lambda^{(k)}(\widehat\theta)\bigr)
			-
			\sum_{i=1}^{n-1}\partial_{\gamma_i}\ell\bigl(\widehat\theta,\lambda^{(k)}(\widehat\theta)\bigr),
			$
			By calculations,\footnote{The detailed calculation of $u_-' \partial_{\lambda}\ell$ can be found in equation \eqref{eq:u_score} in the supplemental material.}
			\begin{align}
				u_-'  \partial_\lambda\ell\bigl(\widehat\theta,\lambda^{(k)}(\widehat\theta)\bigr)
				=
				C_n
				+
				\textstyle\sum_{j=1}^{n-1}
				\big(
				p_{jn}\bigl(\widehat\theta,\lambda^{(k)}(\widehat\theta)\bigr)-p_{nj}\bigl(\widehat\theta,\lambda^{(k)}(\widehat\theta)\bigr)
				\big), \label{eqn:projdiff}
			\end{align}
			where the first term
			$C_n=b_d - d_n$ does not depend on $\lambda$.
			By Assumption \ref{assumption2}, all second order derivatives of the second term in \eqref{eqn:projdiff} are uniformly bounded.
			Applying a mean value expansion around  
			$(\alpha_j^{(k-1)}(\widehat\theta),\gamma_j^{(k-1)}(\widehat\theta))$, we obtain
			$$
			u_-'  \partial_\lambda\ell\bigl(\widehat\theta,\lambda^{(k)}(\widehat\theta)\bigr)
			=
			u_-'  \partial_\lambda\ell\bigl(\widehat\theta,\lambda^{(k-1)}(\widehat\theta)\bigr)
			+
			[u_-'H_{\lambda\lambda}\bigl(\widehat\theta,\lambda^{(k-1)}(\widehat\theta)\bigr)]'[\widetilde{\lambda}^{(k)}(\widehat{\theta})]
			+
			R_k,
			$$
			where
			$
			R_k
			=
			\frac12\sum_{j=1}^{n-1}
			[\widetilde{\lambda}^{(k)}(\widehat{\theta})]_j'
			\,
			\frac{\partial^2\varphi_j(\alpha_{k,j}^*,\gamma_{k,j}^*)}
			{\partial(\alpha_j,\gamma_j)\partial(\alpha_j,\gamma_j)'}
			[\widetilde{\lambda}^{(k)}(\widehat{\theta})]_j,
			$
			with mean points $(\alpha_{k,j}^*,\lambda_{k,j}^*)$ on the line segment joining
			$(\alpha_j^{(k-1)}(\widehat\theta),\gamma_j^{(k-1)}(\widehat\theta))$ and
			$(\alpha_j^{(k)}(\widehat\theta),\gamma_j^{(k)}(\widehat\theta))$.
			The uniform bound on the second derivatives therefore gives
			$
			|R_k|
			\le
			\mathcal{O}\bigl(n\|\widetilde{\lambda}^{(k)}(\widehat{\theta})\|_\infty^2\bigr).
			$
			
			Using equation \eqref{eqn:thm2onestep} again and premultiplying by
			$u_-'$
			yields
			$$
			u_-'\partial_\lambda\ell\bigl(\widehat\theta,\lambda^{(k-1)}(\widehat\theta)\bigr)
			+
			u_-'H_{\lambda\lambda}\bigl(\widehat\theta,\lambda^{(k-1)}(\widehat\theta)\bigr)[\widetilde{\lambda}^{(k)}(\widehat{\theta})]
			=
			0.
			$$
			Therefore,
			$|u_-'H_{\lambda\lambda}\bigl(\widehat\theta,\lambda^{(k-1)}(\widehat\theta)\bigr)| =  \mathcal{O} \bigl(n\|\widetilde{\lambda}^{(k)}(\widehat{\theta})\|_\infty^2\bigr).
			$ Together with \eqref{eq:newton_score_bound} and the fact that 
			$\lambda^{(s)}(\widehat{\theta})\in\mathcal{B}_{2r_{\lambda(\widehat{\theta})}}\big(\bar\lambda(\widehat{\theta})\big)$ for any $s=1,\ldots,k$
			, applying Lemma  \ref{prop:hybrid_ranktwo_simple} gives 
			$
			\big\|   \widetilde{\lambda}^{(k+1)}(\widehat{\theta})   \big\|_{\infty}
			\leq
			\varphi(\widehat{\theta})
			\|\widetilde{\lambda}^{(k)}(\widehat{\theta})\|_\infty^2,
			$
			where $\varphi(\widehat{\theta})\le \mathcal{O}\big(n^{2\tau}  e^{4r_{\lambda(\widehat{\theta})}+2\|\bar\lambda(\widehat{\theta})-\lambda_0\|_\infty} \big)$.
			Repeating the above process generates the geometric series
			\begin{align*}
				\big\|   \widetilde{\lambda}^{(k+1)}(\widehat{\theta})   \big\|_{\infty}
				&\le
				\varphi(\widehat{\theta})^{1+2^1+\cdots+2^k}\,
				\big\|  \widetilde{\lambda}^{(1)}(\widehat{\theta})    \big\|_{\infty}^{2^{k+1}} 
				=
				\big[\varphi(\widehat{\theta})\,r_{\lambda(\widehat{\theta})}\big]^{2^{k+1}-1}\,r_{\lambda(\widehat{\theta})}.
			\end{align*}
			Thus for any $k\ge 0$,
			$
			\big\|   \widetilde{\lambda}^{(k+1)}(\widehat{\theta})  \big\|_{\infty}
			\le
			\sum_{s=0}^k \big\|   \widetilde{\lambda}^{(s+1)}(\widehat{\theta})    \big\|_{\infty} 
			=
			\sum_{s=0}^k \big[\varphi(\widehat{\theta})\,r_{\lambda(\widehat{\theta})}\big]^{\,2^s-1}\,r_{\lambda(\widehat{\theta})}.
			$
			
			By Lemma \ref{lemma:guiyi}(i),
			$
			r_{\lambda(\widehat\theta)}
			\le 
			\mathcal{O}_P \bigl(\sqrt{\ln n} \, n^{-1/2+2\tau}  
			e^{2\|\bar\lambda(\widehat{\theta})-\lambda_0\|_\infty  }\bigr),
			$
			and hence
			$$
			\varphi(\widehat{\theta})\,r_{\lambda(\widehat{\theta})}
			\le 
			\mathcal{O}_P \bigl(\sqrt{\ln n} \, n^{-1/2+4\tau}  
			e^{4r_{\lambda(\widehat{\theta})}+4\|\bar\lambda(\widehat{\theta})-\lambda_0\|_\infty  }\bigr).
			$$ 
			
			Now assume $\tau\in(0,1/8)$ and $\widehat{\theta}-\theta_0=o_P(1)$. We have 
			$r_{\lambda(\widehat\theta)}=o_P(1)$ and $\varphi(\widehat{\theta})\,r_{\lambda(\widehat{\theta})}=o_P(1)$. 
			Therefore, for large enough $n$, $\varphi(\widehat{\theta})\,r_{\lambda(\widehat{\theta})}<1/2$ with probability approaching one. 
			On this event, since $2^s-1\ge s$ for all $s\ge 0$ and $0<\varphi(\widehat{\theta})\,r_{\lambda(\widehat{\theta})}<1$, we obtain  
			$
			\big[\varphi(\widehat{\theta})\,r_{\lambda(\widehat{\theta})}\big]^{2^s-1}\le \big[\varphi(\widehat{\theta})\,r_{\lambda(\widehat{\theta})}\big]^s,
			$
			It follows that
			$
			\big\|\lambda^{(k+1)}(\widehat{\theta})-\bar{\lambda}(\widehat{\theta})\big\|_{\infty} 
			\le
			\tfrac{r_{\lambda(\widehat{\theta})}}{1-\varphi(\widehat{\theta})\,r_{\lambda(\widehat{\theta})}}
			\le 2r_{\lambda(\widehat{\theta})}.
			$. Hence,
			$
			\lambda^{(k+1)}(\widehat{\theta})
			\in
			\mathcal{B}_{2r_{\lambda(\widehat{\theta})}}
			\big(\bar{\lambda}(\widehat{\theta})\big)
			$
			for every $k\ge 0$. 
			Next, for any $m>k$,
			\begin{align*}
				\big\|\lambda^{(m)}(\widehat{\theta})-\lambda^{(k)}(\widehat{\theta})\big\|_{\infty}
				&\le
				\textstyle\sum_{s=k}^{m-1}\big\|\lambda^{(s+1)}(\widehat{\theta})-\lambda^{(s)}(\widehat{\theta})\big\|_{\infty}\\
				&\le
				\textstyle\sum_{s=k}^{\infty}\big[\varphi(\widehat{\theta})\,r_{\lambda(\widehat{\theta})}\big]^s\,r_{\lambda(\widehat{\theta})}
				=
				\tfrac{[\varphi(\widehat{\theta})\,r_{\lambda(\widehat{\theta})}]^k r_{\lambda(\widehat{\theta})}}{1-\varphi(\widehat{\theta})\,r_{\lambda(\widehat{\theta})}},
			\end{align*}
			which converges to zero as $k\to\infty$. Therefore $\{\lambda^{(k)}(\widehat{\theta})\}_{k=0}^{\infty}$ is a Cauchy sequence in $\mathbb{R}^{2n-2}$ with probability approaching one, and hence converges.
			Denote its limit by
			$
			\widehat{\lambda}(\widehat{\theta})
			:=
			\lim_{k\to\infty}\lambda^{(k)}(\widehat{\theta}).
			$
			By continuity of the iteration map, $\widehat{\lambda}(\widehat{\theta})$ is a fixed point of the iteration and thus satisfies
			$
			\partial_{\lambda}\ell\big(\widehat{\theta},\widehat{\lambda}(\widehat{\theta})\big)=0
			$
			wpa 1.
			Finally, letting $k\to\infty$ gives
			\begin{align}
				\big\|\widehat{\lambda}(\widehat{\theta})-\bar{\lambda}(\widehat{\theta})\big\|_{\infty}
				\le
				\mathcal{O}_P\big(\sqrt{\ln(n)}\,n^{2\tau-1/2}\big)=o_P(1).\label{eq:step1final}
			\end{align}

			\#\textit{\underline{Step 2. Consistency of $\widehat{\theta}$.} } 
			By manipulations, we express the log-likelihood function as
			\begin{eqnarray}
				&&\tfrac{2}{n(n-1)}\big(\ell(\theta,\lambda)-\mathbb{E}_0[\ell(\theta,\lambda)]\big) \notag\\
				&=&
				\tfrac{2}{n(n-1)}\textstyle\sum_{i}\textstyle\sum_{j\neq i}
				\big(g_{ij}g_{ji}-\mathbb{E}_0(g_{ij}g_{ji})\big)
				\big(X_{ij}'\beta+\alpha_i+\gamma_j+\tfrac{Z_{ij}'\rho}{2}\big) \notag\\
				&&+
				\tfrac{2}{n(n-1)}\textstyle\sum_{i}\textstyle\sum_{j\neq i}
				\big(g_{ij}(1-g_{ji})-\mathbb{E}_0(g_{ij}(1-g_{ji}))\big)
				\big(X_{ij}'\beta+\alpha_i+\gamma_j\big).
				\label{eqn:lminusElff}
			\end{eqnarray} 
			
			By Assumption \ref{assumption2}(ii) and (iii), $|X_{ij}'\beta|\le \kappa$ and $|Z_{ij}'\rho|\le \kappa$, where $\kappa<\infty$. If $\|\lambda\|_\infty\le \tau\ln n$ with $\tau\in(0,1/2)$, then $\|\lambda\|_\infty\le n^\tau$ for all large $n$. Hence every term in \eqref{eqn:lminusElff} are uniformly bounded by $C^{st}n^\tau$.
			Applying the triangle inequality yields
			\begin{eqnarray}
				&&\Big|\tfrac{2}{n(n-1)}\big(\ell(\theta,\lambda)-\mathbb{E}_0[\ell(\theta,\lambda)]\big)\Big| \notag\\
				&\le&
				\tfrac{3}{n}\textstyle\sum_i\Big|\tfrac{1}{n-1}\textstyle\sum_{j\neq i}
				\big(g_{ij}g_{ji}-\mathbb{E}_0(g_{ij}g_{ji})\big)\kappa\Big|
				+\tfrac{4}{n}\textstyle\sum_i\Big|\tfrac{1}{n-1}\textstyle\sum_{j\neq i}
				\big(g_{ij}g_{ji}-\mathbb{E}_0(g_{ij}g_{ji})\big)n^\tau\Big| \notag\\
				&&+
				\tfrac{2}{n}\textstyle\sum_i\Big|\tfrac{1}{n-1}\textstyle\sum_{j\neq i}
				\big(g_{ij}(1-g_{ji})-\mathbb{E}_0(g_{ij}(1-g_{ji}))\big)\kappa\Big|\notag\\
				&&+
				\tfrac{4}{n}\textstyle\sum_i\Big|\tfrac{1}{n-1}\textstyle\sum_{j\neq i}
				\big(g_{ij}(1-g_{ji})-\mathbb{E}_0(g_{ij}(1-g_{ji}))\big)n^\tau\Big|.
				\label{eqn:lminusElff2}
			\end{eqnarray} 
			For the first term, applying Hoeffding's (\citeyear{hoeffding1963}) inequality row by row gives  
			$$
			\Pr\!\Big(
			\Big|\tfrac{1}{n-1}\textstyle\sum_{j\neq i}\big(g_{ij}g_{ji}-\mathbb{E}_0(g_{ij}g_{ji})\big)\kappa\Big|
			\ge 2\kappa\sqrt{\tfrac{\ln(n-1)}{n-1}}
			\Big)
			\le 2\exp\!\big(-2\ln(n-1)\big)=\tfrac{2}{(n-1)^2}.
			$$ 
			Further, we apply Boole's inequality across $i=1,\ldots,n$ to obtain
			\begin{align*}
				&\Pr\!\Big(
				\tfrac{1}{n}\textstyle\sum_i
				\Big|\tfrac{1}{n-1}\textstyle\sum_{j\neq i}\big(g_{ij}g_{ji}-\mathbb{E}_0(g_{ij}g_{ji})\big)\kappa\Big|
				\ge 2\kappa\sqrt{\tfrac{\ln(n-1)}{n-1}}
				\Big)
				\le \tfrac{2n}{(n-1)^2}
				=\mathcal{O}(n^{-1}).
			\end{align*}
			
			As for the second term in (\ref{eqn:lminusElff2}), similarly,
			$$
			\Pr\!\Big(
			\tfrac{1}{n}\textstyle\sum_i
			\Big|\tfrac{1}{n-1}\textstyle\sum_{j\neq i}\big(g_{ij}g_{ji}-\mathbb{E}_0(g_{ij}g_{ji})\big)n^\tau\Big|
			\ge 2n^\tau\sqrt{\tfrac{\ln(n-1)}{n-1}}
			\Big)
			\le \mathcal{O}(n^{-1}).
			$$
			By the same arguments, the rest terms in (\ref{eqn:lminusElff2}) satisfies the same $ \mathcal{O}(n^{-1})$ bound.
			Substituting these bounds into \eqref{eqn:lminusElff2}, with probability at least $1- \mathcal{O}(n^{-1})$,
			\begin{equation*}
				\sup_{\theta\in\Theta,\ \|\lambda\|_\infty\le n^\tau}
				\left|
				\tfrac{2}{n(n-1)}\big(\ell(\theta,\lambda)-\mathbb{E}_0[\ell(\theta,\lambda)]\big)
				\right|
				=
				\mathcal{O}_P\big((\ln n)^{1/2}  n^{\tau-1/2}  \big).
				\label{eqn:theta_consist_pf2}
			\end{equation*}
			
			Now define 
			$
			L_n(\theta):=\tfrac{2}{n(n-1)}\ell(\theta,\widehat\lambda(\theta)),
			$
			and
			$
			Q_n(\theta):=\tfrac{2}{n(n-1)}\mathbb{E}_0[\ell(\theta,\bar\lambda(\theta))].
			$
			By Theorem \ref{thm:lamba_consist}(i), if $\tau\in(0,1/8)$,
			$\widehat\lambda(\theta_0)$ exists and 
			$
			\|\widehat\lambda(\theta_0)-\lambda_0\|_\infty
			\le
			\mathcal O_P \big(\sqrt{\ln n}\,n^{2\tau-1/2}\big),
			$ 
			so $\|\widehat{\lambda}(\theta_0)\|_\infty\le n^\tau$ with probability approaching one.  Hence 
			$
			L_n(\theta_0)
			=
			\frac{2}{n(n-1)}\mathbb E_0[\ell(\theta_0,\widehat\lambda(\theta_0))]
			+
			\mathcal{O}_P \bigl((\ln n)^{1/2}n^{\tau-1/2}\bigr).
			$
			
			Since $\bar\lambda(\theta_0)=\lambda_0$ and
			$
			\partial_\lambda \mathbb E_0[\ell(\theta_0,\lambda_0)]=0,
			$
			a second-order expansion of $\mathbb E_0[\ell(\theta_0,\lambda)]$ around $\lambda_0$ gives
			$
			\tfrac{2}{n(n-1)}
			\Bigl(
			\mathbb E_0[\ell(\theta_0,\widehat\lambda(\theta_0))]
			-
			\mathbb E_0[\ell(\theta_0,\lambda_0)]
			\Bigr)
			=
			\mathcal{O}_P  \bigl(\|\widehat\lambda(\theta_0)-\lambda_0\|_\infty^2\bigr)
			=
			o_P\bigl((\ln n)^{1/2}n^{\tau-1/2}\bigr).
			$ Therefore
			\begin{equation}
				L_n(\theta_0)
				=
				Q_n(\theta_0)
				+
				\mathcal{O}_P \bigl((\ln n)^{1/2}n^{\tau-1/2}\bigr).
				\label{eq:A8}
			\end{equation}
			Combined with the uniform consistency of $\widehat\lambda(\theta)-\bar\lambda(\theta)$ yields
			\begin{equation}
				\sup_{\theta\in\Theta}|L_n(\theta)-Q_n(\theta)|
				=
				\mathcal{O}_P\bigl((\ln n)^{1/2}n^{\tau-1/2}\bigr).
				\label{eq:A9}
			\end{equation}
			
			Define
			$
			\textsf{dis}_n(\nu)
			:=
			Q_n(\theta_0)-\sup_{\theta\in\mathcal B_\nu^c(\theta_0)}Q_n(\theta).
			$
			By the  identification of $Q_n(\theta)$ around $\theta_0$ (Assumption \ref{assumption2}(iv)), there exist constants $c>0$ and $\nu_0>0$ such that
			$
			\textsf{dis}_n(\nu)\ge c\nu^2,
			$
			and
			$  0<\nu\le \nu_0.$
			Now set
			$
			\nu_n:=M(\ln n)^{1/4}n^{\tau/2-1/4},
			$
			where $M>0$ is a sufficiently large constant. Then,
			$
			\textsf{dis}_n(\nu_n)\ge cM^2(\ln n)^{1/2}n^{\tau-1/2}.
			$
			
			Choosing $M$ large enough, \eqref{eq:A8} and \eqref{eq:A9} imply
			$
			\sup_{\theta\in\mathcal B_{\nu_n}^c(\theta_0)}L_n(\theta)
			<
			L_n(\theta_0)
			$
			with probability approaching one. Therefore,
			$
			\widehat\theta\in\mathcal B_{\nu_n}(\theta_0)
			$
			wpa 1, and thus
			$
			\|\widehat{\theta}-\theta_0\|
			=
			\mathcal{O}_P \big((\ln n)^{1/4} n^{\tau/2-1/4} \big) = o_P(1).
			$

			\#\textit{\underline{Step 3. Asymptotic Existence and Consistency of $\widehat{\lambda}$.} } 
			
			Write
			$\widehat{\lambda}-\lambda_0 = 
			\big(\widehat{\lambda}(\widehat{\theta})-\bar{\lambda}(\widehat{\theta})\big)
			+\big(\bar{\lambda}(\widehat{\theta})-\lambda_0\big).$ By the triangular inequality,
			$$
			\lVert\widehat{\lambda}-\lambda_0\rVert_{\infty}
			\le
			\lVert\widehat{\lambda}(\widehat{\theta})-\bar{\lambda}(\widehat{\theta})\rVert_{\infty}
			+\lVert\bar{\lambda}(\widehat{\theta})-\lambda_0\rVert_{\infty}.
			$$
			
			From Steps 1 and 2, we have 
			$
			\big\|\widehat{\lambda}(\widehat{\theta})-\bar{\lambda}(\widehat{\theta})\big\|_{\infty}
			\le      \mathcal{O}_P\big(\sqrt{\ln(n)}\,n^{2\tau-1/2}\big)=o_P(1),
			$
			if $\tau\in(0,1/8)$
			Therefore, it remains to evaluate $\|\bar{\lambda}(\widehat{\theta})-\lambda_0\|_\infty$. 
			The mean value theorem gives
			$
			\bar{\lambda}(\widehat{\theta})-\lambda_0
			= \partial_{\theta'}\bar{\lambda}(\theta^*)\,(\widehat{\theta}-\theta_0),
			$
			where $\theta^*$ is between $\widehat{\theta}$ and $\theta_0$. 
			Since $\bar{\lambda}(\theta)$ is the root of  
			$
			\partial_{\lambda}\mathbb{E}_0[\ell(\theta,\bar{\lambda}(\theta))]=0,
			$ differentiating $\partial_{\lambda}\mathbb{E}_0[\ell(\theta,\bar{\lambda}(\theta))]$ with respect to $\theta$
			gives
			$
			H_{\lambda\theta}(\theta,\bar{\lambda}(\theta))
			+
			\left[H_{\lambda\lambda}(\theta,\bar{\lambda}(\theta))\right]
			\,\partial_{\theta'}\bar{\lambda}(\theta)
			=0.
			$
			Thus
			\begin{equation}
				\|\bar{\lambda}(\widehat{\theta})-\lambda_0\|_\infty
				=
				\big\|
				\left[-H_{\lambda\lambda}^{-1}(\theta^*,\bar{\lambda}(\theta^*))\right]
				\left[H_{\lambda\theta}(\theta^*,\bar{\lambda}(\theta^*))\right] (\widehat{\theta}-\theta_0)\big\|_\infty.
				\label{eq:barlambda_derivative}
			\end{equation}
			Using the decomposition
			$
			-H_{\lambda\lambda}^{-1}=S+\bigl(-H_{\lambda\lambda}^{-1}-S\bigr),
			$
			together with Lemma \ref{lemma:guiyi}(ii),
			we obtain
			$
			\|\bar{\lambda}(\widehat{\theta})-\lambda_0\|_\infty
			=
			\mathcal{O}_P\big(
			n^{2\tau}
			e^{2\|\bar{\lambda}(\theta^*)-\lambda_0\|_\infty}
			\big)\|\widehat{\theta}-\theta_0\|.
			$
			In particular, if $\bar{\lambda}(\theta)$ is continuous at $\theta_0$, then by Step 2, $\|\bar{\lambda}(\theta^*)-\lambda_0\|_\infty=o_P(1)$, and hence
			$
			\|\bar{\lambda}(\widehat{\theta})-\lambda_0\|_\infty
			=
			\mathcal{O}_P\big(n^{2\tau}\big)\|\widehat{\theta}-\theta_0\|.
			$
			This completes the proof.
		\end{proof}

		Before proving the rest results,  we  establish the following intermediate lemma.

		\begin{lemma}\label{lemma2_thm:lamba_asynormal}
			Suppose Assumption \ref{assumption2} holds. Assume $\|\lambda_0\|_{\infty} \leq \tau \ln (n)$ with  $0<\tau<1/12$.
			Then,  for sufficiently large $n$
			\begin{enumerate}[label=(\roman*)]
				\item $-\tfrac{1}{N}\partial_{\theta\theta'}\ell \stackrel{p}{\rightarrow} \mathcal{I}_{\infty}+  o_P(1) $.
				\item $\tfrac{1}{\sqrt{N}}\big(\partial_\theta\ell - H_{\theta\lambda}H_{\lambda\lambda}^{-1}\partial_\lambda\ell\big)   \stackrel{d}{\rightarrow}  \mathcal N(0,\mathcal I_\infty)$.
				\item For any fixed length $L \geq 1$, 
				$\sqrt{n}[S\partial_\lambda\ell\bigr]_{1:L} \stackrel{d}{\rightarrow} \mathcal{N}\big(0, [\Omega_{\infty}]_{1: L, 1: L}\big)$.
			\end{enumerate} 
		\end{lemma}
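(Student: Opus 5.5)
We prove the three parts separately. Throughout, all quantities are evaluated at $(\theta_0,\lambda_0)$. Independence across unordered dyads (Proposition \ref{thm:micro}) and bounded covariates (Assumption \ref{assumption2}) are the basic tools. The rates rely on the logarithmic bound $\|\lambda_0\|_\infty\le\tau\ln n$, which implies $p_{ij}(1-p_{ij})\gtrsim n^{-2\tau}$ uniformly.

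\textbf{Part (i).} The Hessian $\partial_{\theta\theta'}\ell=-\sum_{i<j}\mathrm{Var}_{ij}(T_{ij})$ is deterministic given covariates, where $T_{ij}$ collects the dyadic sufficient statistics $(g_{ij}X_{ij}+g_{ji}X_{ji},\,g_{ij}g_{ji}Z_{ij})$. This holds because the model is an exponential family with $\theta$ entering linearly in the natural parameter. Hence $-N^{-1}\partial_{\theta\theta'}\ell$ is a bounded average of dyadic covariance matrices, and it converges to its limit by definition. The interpretation $\mathcal I_\infty$ requires adding the profiling term $N^{-1}H_{\theta\lambda}SH_{\lambda\theta}$. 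I will show that $N^{-1}H_{\theta\lambda}(-H_{\lambda\lambda}^{-1}-S)H_{\lambda\theta}=o(1)$ using Lemma \ref{prop:hybrid_ranktwo_simple}(i) with $\mathbf f$ equal to a column of $H_{\lambda\theta}$. Each entry of that column is $O(n)$, so $a_n=1$. Its $u_-$ projection is also $O(n)$, since sender and receiver contributions cancel except for node-$n$ terms. This yields an entrywise error of $O(n^{6\tau-1/2})$, and after summing $2n$ coordinates against $H_{\theta\lambda}$ and dividing by $N$ the total is $O(n^{6\tau-1/2})=o(1)$ for $\tau<1/12$. I read the statement as meaning that the Hessian of the concentrated likelihood converges to $\mathcal I_\infty$.

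\textbf{Part (ii).} Write $\partial_\theta\ell-H_{\theta\lambda}H_{\lambda\lambda}^{-1}\partial_\lambda\ell=\sum_{i<j}\xi_{ij}$, where $\xi_{ij}=(\partial_\theta\ell_{ij})+H_{\theta\lambda}(-H_{\lambda\lambda}^{-1})\partial_\lambda\ell_{ij}$. This works because $H_{\theta\lambda}$ and $H_{\lambda\lambda}$ are nonrandom given covariates, so the expression is linear in the centered dyadic scores. The summands $\xi_{ij}$ are independent, mean zero vectors. Their total variance equals the efficient information $-(H_{\theta\theta}-H_{\theta\lambda}H_{\lambda\lambda}^{-1}H_{\lambda\theta})$, which by the part (i) argument is $N(\mathcal I_\infty+o(1))$. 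I will then apply the Lyapunov CLT with a Cramér--Wold device. The key step is showing $\max_{ij}\|\xi_{ij}\|=o(\sqrt N)$. Using $\|-H_{\lambda\lambda}^{-1}H_{\lambda\theta}\|_\infty=O(n^{2\tau})$ from Lemma \ref{lemma:guiyi}(ii) (the $S$ bound together with the error bound), and noting that each $\partial_\lambda\ell_{ij}$ has only four bounded nonzero entries, we get $\|\xi_{ij}\|=O(n^{2\tau})=o(n)$. Summing third moments then gives a Lyapunov ratio of $O(n^{2\tau}/n)\to0$.

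\textbf{Part (iii).} For fixed $L$, $[S\partial_\lambda\ell]_{1:L}=[D^{-1}\partial_\lambda\ell]_{1:L}+[U(U'(-H_{\lambda\lambda})U)^{-1}U'\partial_\lambda\ell]_{1:L}$. The low-rank term is negligible. Indeed, $U'\partial_\lambda\ell$ is a sum of $O(n^2)$ independent bounded terms, so it is $O_P(n)$. Meanwhile $U'(-H_{\lambda\lambda})U$ has eigenvalues of order at least $n^{2-2\tau}$ by Assumption \ref{assumption:spectral_hybrid_simple}(i). So this term is $O_P(n^{-1+2\tau})=o_P(n^{-1/2})$. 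The leading term $[D^{-1}\partial_\lambda\ell]_{2i-1:2i}=D_i^{-1}(d_i-\mathbb E d_i,\,b_i-\mathbb E b_i)'$ for the first $L/2$ nodes. Each component is a sum over $j$ of independent dyadic terms. I will handle the finitely many dyads shared among nodes $1,\dots,L/2$: these contribute $O_P(1)$ to sums of size $n$, so after scaling by $\sqrt n\,D_i^{-1}=O(n^{-1/2+2\tau})$ they are negligible. A Lindeberg CLT applied to the remaining independent blocks then gives $\mathcal N(0,\lim nD^{-1})$ on the $1{:}L$ block. This limit agrees with $[\Omega_\infty]_{1:L,1:L}$ because the low-rank part of $nS$ vanishes entrywise.

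\textbf{Main obstacle.} The hardest step is the variance identification in (ii) and (iii) under sparsity. When $p_{ij}\sim n^{-2\tau}$, the scaling $N^{-1}$ is not the natural information rate. The limits must therefore be read as in the statement, assuming $\mathcal I_\infty$ and $\Omega_\infty$ exist and are positive definite. Lindeberg must be verified using the lower bound $D_i\gtrsim n^{1-2\tau}$ rather than compactness. I would first establish these variance bounds explicitly, then feed them into the Lyapunov ratio.
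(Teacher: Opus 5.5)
Your parts (i) and (ii) are sound, reading (i) as the profiled Hessian at the true values with $\mathcal I_\infty$ positive definite. The genuine gap is in part (iii), where you discard the low-rank part of $S\partial_\lambda\ell$.

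Assumption \ref{assumption:spectral_hybrid_simple}(i) says that $\widetilde U=D^{1/2}U$ spans the \emph{weakest} directions of $\widetilde V$. It therefore cannot supply a lower bound of order $n^{2-2\tau}$ on the eigenvalues of $M:=U'(-H_{\lambda\lambda})U$, and for the $u_-$ direction such a bound is false. Shifting $\alpha_i\mapsto\alpha_i+t$ and $\gamma_i\mapsto\gamma_i-t$ for all $i<n$ leaves every $B_{ij}$ with $i,j<n$ unchanged. Correspondingly, $u_-'\partial_\lambda\ell=(b_n-\mathbb E_0 b_n)-(d_n-\mathbb E_0 d_n)$ and $u_-'(-H_{\lambda\lambda})u_-=\operatorname{Var}_0(b_n-d_n)=\mathcal O(n)$, because only the $n-1$ dyads containing node $n$ contribute. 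This is exactly why Lemma \ref{prop:hybrid_ranktwo_simple} carries the separate hypothesis $|u_-'\mathbf f|=\mathcal O_P(na_n)$.

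As a result, $(M^{-1})_{22}$ is of the same order as the entries of $D_i^{-1}$ (order $n^{-1}$ in the dense case), while $u_-'\partial_\lambda\ell=\mathcal O_P(\sqrt n)$. So $\sqrt n\,[UM^{-1}U'\partial_\lambda\ell]_k$ contains the term $\pm\sqrt n\,(M^{-1})_{22}\,u_-'\partial_\lambda\ell$, which is of exactly the same order as your leading term $\sqrt n\,[D^{-1}\partial_\lambda\ell]_k$. This is the usual contribution of the reference node's estimation error under $\alpha_n=\gamma_n=0$. For the same reason $n\,UM^{-1}U'$ has entries of order one, so the low-rank part of $nS$ does not vanish entrywise. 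Your argument would therefore deliver the limit $\lim nD^{-1}$, which differs from $[\Omega_\infty]_{1:L,1:L}$ by a rank-one term. Only the $u_+$ piece and the cross pieces are negligible, since $u_+'(-H_{\lambda\lambda})u_+\gtrsim n^{2-2\tau}$ and $u_+'(-H_{\lambda\lambda})u_-=\mathcal O(n)$.

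To repair (iii), keep the $u_-$ term and prove a joint CLT. Because $u_-'\partial_\lambda\ell$ is a sum over the $n-1$ dyads $\{j,n\}$, it shares only $O(L)$ dyads with the sums $D_i^{-1}(\partial_{\alpha_i}\ell,\partial_{\gamma_i}\ell)'$, $i\le\lceil L/2\rceil$. Apply Cram\'er--Wold and Lindeberg to the vector formed by these coordinates together with $\sqrt n\,(M^{-1})_{22}u_-'\partial_\lambda\ell$. The components are asymptotically independent Gaussians, since their cross-covariances are $O(n^{-1})$. The resulting covariance $\lim n\bigl[D^{-1}+(M^{-1})_{22}u_-u_-'\bigr]_{1:L,1:L}$ coincides with $[\Omega_\infty]_{1:L,1:L}$.

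Separately, the paper invokes (i) at $(\theta^*,\widehat\lambda(\theta^*))$ rather than at $(\theta_0,\lambda_0)$. Your deterministic version therefore needs an additional continuity step in $(\theta,\lambda)$ before it can serve that purpose.
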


		\begin{proof}[Proof of Theorem \ref{thm:lambda_asynormal}]
			\underline{\#\textit{Part (i)}}:
			Recall that $\widehat{\theta}:=\argmax_{\theta\in\Theta}\ell(\theta, \widehat{\lambda}(\theta))$. A mean value expansion of the first-order condition $\partial_\theta\ell(\widehat{\theta}, \widehat{\lambda}(\widehat{\theta}))=0$ at $\widehat{\theta}=\theta_0$ gives that
			$
			\sqrt{N} (\widehat{\theta}-\theta_0) = 
			-\Big[\tfrac{1}{N}\partial_{\theta\theta'}\ell(\theta^*, \widehat{\lambda}(\theta^*))\Big]^{-1}
			\tfrac{1}{\sqrt{N}}\partial_\theta\ell(\theta_0, \widehat{\lambda}(\theta_0)) 
			$ where $\theta^*$ is between $\widehat{\theta}$ and $\theta_0$.  
			Applying Lemma \ref{lemma2_thm:lamba_asynormal}(i)  gives
			$
			\sqrt{N}\,(\widehat{\theta}-\theta_0)
			=\big(\mathcal{I}_{\infty}^{-1}  +  o_P(1)  \big) \cdot
			\tfrac{1}{\sqrt{N}}  \partial_\theta\ell\big(\theta_0,\widehat{\lambda}(\theta_0)\big).
			$
			
			By a high-order expansion, 
			$
			\tfrac{1}{\sqrt{N}}\partial_\theta\ell\big(\theta_0,\widehat{\lambda}(\theta_0)\big)
			=\mathcal{T}_n^{(1)}+\mathcal{T}_n^{(2)}+\mathcal{T}_n^{(3)},
			$
			where
			\begin{align*}
				\mathcal{T}_n^{(1)}
				&=\tfrac{1}{\sqrt{N}}\big[\partial_\theta\ell+H_{\theta\lambda}\big(\widehat{\lambda}(\theta_0)-\lambda_0\big)\big],\\
				\mathcal{T}_n^{(2)}
				&=\tfrac{1}{2\sqrt{N}}
				\big[\textstyle\sum_{i=1}^{2n-2}\big[\widehat{\lambda}(\theta_0)-\lambda_0\big]_i\,\partial_{\theta\lambda'\lambda_i}\ell\big]
				\big(\widehat{\lambda}(\theta_0)-\lambda_0\big),\\
				\mathcal{T}_n^{(3)}
				&=\tfrac{1}{6\sqrt{N}}
				\big[\textstyle\sum_{i,j=1}^{2n-2}
				\big[\widehat{\lambda}(\theta_0)-\lambda_0\big]_i\big[\widehat{\lambda}(\theta_0)-\lambda_0\big]_j\,
				\partial_{\theta\lambda'\lambda_i\lambda_j}\ell(\theta_0,\breve{\lambda}(\theta_0)) \big]
				\big(\widehat{\lambda}(\theta_0)-\lambda_0\big),
			\end{align*}
			with $\breve{\lambda}(\theta_0)$ between $\widehat{\lambda}(\theta_0)$ and $\lambda_0$. 
			
			$\mathcal{T}_n^{(3)}$ is a negligible remainder term. Using Theorem \ref{thm:lamba_consist}(i) and the sparsity of  $   \partial_{\theta\lambda'\lambda_i\lambda_j}\ell $, we have
			$
			\mathcal T_n^{(3)}
			=
			\mathcal{O}_P\bigl(n\|  \widehat{\lambda}(\theta_0)-\lambda_0  \|_\infty^3\bigr)
			=
			\mathcal{O}_P\bigl((\ln n)^{3/2}n^{6\tau-1/2}\bigr).
			$

			For $\mathcal{T}_n^{(1)}$, we may replace $\widehat{\lambda}(\theta_0)-\lambda_0$ by its asymptotic representation. 
			A Taylor expansion of the first-order condition $0=\partial_\lambda \ell(\theta_0,\widehat\lambda(\theta_0))$ around $\widehat\lambda(\theta_0)=\lambda_0$ gives
			\begin{align}
				\widehat{\lambda}(\theta_0)-\lambda_0
				=
				-H_{\lambda\lambda}^{-1}  \partial_\lambda \ell 
				-\tfrac{1}{2} H_{\lambda\lambda}^{-1}  \big[\textstyle\sum_{i=1}^{2n-2}\partial_{\lambda\lambda'\lambda_i}\ell (\widetilde{\lambda}_{\theta_0})_i\big] \widetilde{\lambda}_{\theta_0} - R_{\lambda,n} 
				\label{eq:lambda_second_order_valid}
			\end{align}
			where
			$\widetilde{\lambda}_{\theta_0}:=\widehat\lambda(\theta_0)-\lambda_0$, 
			$R_{\lambda,n}=\tfrac{1}{6}H_{\lambda\lambda}^{-1}   
			\big[
			\textstyle\sum_{i,j=1}^{2n-2}
			\partial_{\lambda\lambda'\lambda_i\lambda_j}\ell(\theta_0,\lambda^*)
			(\widetilde{\lambda}_{\theta_0})_i(\widetilde{\lambda}_{\theta_0})_j
			\big]\widetilde{\lambda}_{\theta_0}$,
			with $\lambda^*$ between $\widehat{\lambda}$ and $\lambda_0$. 
			Using Lemma \ref{lemma:guiyi}(iii) and Theorem \ref{thm:lamba_consist}(i) yields
			$
			\|R_{\lambda,n}\|_\infty
			=
			\mathcal{O}_P \big(
			n^{2\tau}\|\widehat\lambda(\theta_0)-\lambda_0\|_\infty^3
			\big)
			=
			\mathcal{O}_P  \big(
			(\ln n)^{3/2} n^{8\tau-3/2}
			\big).
			$
			Using Lemma \ref{lemma:guiyi}(i) and (viii),
			replacing $\widetilde{\lambda}_{\theta_0}$ by $S \partial_\lambda\ell$ in the quadratic term in \eqref{eq:lambda_second_order_valid} yields\footnote{
				By Lemma \ref{lemma:guiyi}(i), 
				$
				\|\widetilde{\lambda}_{\theta_0}-S\partial_\lambda\ell\|_\infty
				=
				\mathcal{O}_P\bigl(\sqrt{\ln n} n^{6\tau-1}\bigr),
				$
				$
				\|\widetilde{\lambda}_{\theta_0}\|_\infty+\|S\partial_\lambda\ell\|_\infty
				=
				\mathcal{O}_P\bigl(\sqrt{\ln n}\,n^{2\tau-1/2}\bigr).
				$
				Hence
				$
				\big\|
				H_{\lambda\lambda}^{-1}
				[
				\sum_k\partial_{\lambda\lambda'\lambda_k}\ell\,
				\{(\widetilde{\lambda}_{\theta_0})_k-(S\partial_\lambda\ell)_k\}
				]\widetilde{\lambda}_{\theta_0}
				\big\|_\infty
				+
				\big\|
				H_{\lambda\lambda}^{-1}
				[
				\sum_k\partial_{\lambda\lambda'\lambda_k}\ell\,
				(S\partial_\lambda\ell)_k
				](\widetilde{\lambda}_{\theta_0}-S\partial_\lambda\ell)
				\big\|_\infty
				\leq 
				\mathcal{O}_P\big( 
				n^{2\tau}
				\|\widetilde{\lambda}_{\theta_0}-S\partial_\lambda\ell\|_\infty
				\bigl(\|\widetilde{\lambda}_{\theta_0}\|_\infty+\|S\partial_\lambda\ell\|_\infty\bigr)\big)=\mathcal{O}_P\big( (\ln n)^{3/2}n^{10\tau-3/2}   \big).
				$
			}
			\begin{align}
				\widehat{\lambda}(\theta_0)-\lambda_0
				&=
				-H_{\lambda\lambda}^{-1}\partial_\lambda\ell
				-\tfrac12
				H_{\lambda\lambda}^{-1}
				\big[
				\textstyle\sum_{k=1}^{2n-2}
				\partial_{\lambda\lambda'\lambda_k}\ell\,
				(S\partial_\lambda\ell)_k
				\big](S\partial_\lambda\ell)
				+\widetilde R_{\lambda,n}, \notag \\
				&=
				-H_{\lambda\lambda}^{-1}\partial_\lambda\ell
				+ H_{\lambda\lambda}^{-1}  \partial_{\lambda}\eta
				+\widetilde R_{\lambda,n}, 
				\label{eq:lambda_second_order_with_S}
			\end{align}
			with
			$
			\|\widetilde R_{\lambda,n}\|_\infty
			=
			\mathcal{O}_P\big( (\ln n)^{3/2}n^{10\tau-3/2}  \big).
			$ Since $\|H_{\theta\lambda}\|_{\infty} = \mathcal{O}(n^2)$, we obtain 
			\begin{align}
				\mathcal{T}_n^{(1)} &=\tfrac{1}{\sqrt{N}}\big(\partial_\theta\ell - H_{\theta\lambda} H_{\lambda\lambda}^{-1}\partial_\lambda\ell\big)
				+ \tfrac{1}{\sqrt{N}}H_{\theta\lambda}H_{\lambda\lambda}^{-1}\partial_{\lambda}\eta 
				+ \mathcal{O}_P\big( (\ln n)^{3/2}n^{10\tau-1/2}   \big).
			\end{align}
			
			It remains to evaluate $\mathcal{T}_n^{(2)}$. Since $\|\widetilde{\lambda}_{\theta_0}-S\partial_\lambda\ell\|_\infty = \mathcal{O}_P\bigl(\sqrt{\ln n} n^{6\tau-1}\bigr)$, we have
			$
			\mathcal{T}_n^{(2)}
			=\tfrac{1}{2\sqrt{N}}
			\big[\textstyle\sum_{i=1}^{2n-2}(S\partial_{\lambda}\ell)_i\,\partial_{\theta\lambda'\lambda_i}\ell\big] (S\partial_{\lambda}\ell)
			+
			\mathcal{O}_P\big(  \ln n \, n^{12\tau-1}    \big).
			$
			Applying Lemma \ref{lemma:guiyi}(vi),
			\begin{align}
				\mathcal{T}_n^{(2)}
				=-\tfrac{1}{2\sqrt{N}}\partial_{\theta}\eta
				+
				\mathcal{O}_P\big(  \ln n \, n^{12\tau-1}    \big)+
				\mathcal{O}_P\big( n^{2\tau-1/2}    \big). \label{eqn:A14}
			\end{align}
			
			Combing foregoing results, we have
			\begin{align}
				\sqrt{N}\,(\widehat{\theta}-\theta_0)
				=&
				\mathcal{I}_{\infty}^{-1}  \tfrac{1}{\sqrt{N}}\big(\partial_\theta\ell - H_{\theta\lambda} H_{\lambda\lambda}^{-1}\partial_\lambda\ell\big) 
				-
				\mathcal{I}_{\infty}^{-1}  \tfrac{1}{\sqrt{N}} \big(\partial_{\theta}\eta - H_{\theta\lambda}H_{\lambda\lambda}^{-1}\partial_{\lambda}\eta \big) \notag \\
				&+ \mathcal{O}_P\big( (\ln n)^{3/2}n^{10\tau-1/2}  \big) +\mathcal{O}_P\big(  \ln n \, n^{12\tau-1}   \big)+ \mathcal{O}_P\big( n^{2\tau-1/2}   \big) \label{eq:A15}
			\end{align}
			Assuming $\tau\in(0,1/12)$, above $\mathcal{O}_P$ terms are all $o_P(1)$. The limiting distribution of $\sqrt{N}\,(\widehat{\theta}-\theta_0)$ follows from the application of Lemma \ref{lemma2_thm:lamba_asynormal}(ii) and Slutsky's theorem.

			\underline{\#\textit{Part (ii)}}:
			By the first-order condition for $\widehat\lambda$, i.e.,
			$0=\partial_\lambda \ell(\widehat\theta,\widehat\lambda).$
			Using the decomposition
			$
			-H_{\lambda\lambda}^{-1}=S+\bigl(-H_{\lambda\lambda}^{-1}-S\bigr)
			$, and mean value expansion around $(\theta_0,\lambda_0)$ yields
			\begin{eqnarray}
				\widehat{\lambda}-\lambda_0
				&=&
				S \partial_\lambda \ell  
				+
				\underbrace{
					S H_{\lambda\theta}(\widehat{\theta}-\theta_0)}_{\|.\|_{\infty} = \mathcal{O}_P(n^{2\tau-1})}
				+  
				\underbrace{
					(-H_{\lambda\lambda}^{-1}-S)  \partial_\lambda \ell  }_{\|.\|_{\infty} = \mathcal{O}_P(\sqrt{\ln n} n^{6\tau-1})}
				+ 
				\underbrace{
					(-H_{\lambda\lambda}^{-1}-S)  H_{\lambda\theta}(\widehat{\theta}-\theta_0) }_{\|.\|_{\infty} = \mathcal{O}_P(n^{6\tau-3/2})} \notag \\
				&&
				\underbrace{
					-\tfrac{1}{2}H_{\lambda\lambda}^{-1}  \big[\textstyle\sum_{k=1}^{2n-2}\partial_{\lambda\lambda'\lambda_k}\ell  (\widehat\lambda-\lambda_0)_k\big] (\widehat\lambda-\lambda_0)}_{
					= H_{\lambda\lambda}^{-1}\partial_{\lambda}\eta + \widehat{R}_{\lambda,n} \text{ by \eqref{eq:lambda_second_order_with_S} from part(i)}
				} \notag \\
				&& 
				\underbrace{
					-H_{\lambda\lambda}^{-1}   
					\big[
					\textstyle\sum_{k=1}^{2n-2}\partial_{\lambda\theta'\lambda_k}\ell(\theta_0,\lambda^*)
					(\widehat\lambda-\lambda_0)_k
					\big]
					(\widehat\theta-\theta_0)    }_{\|.\|_{\infty} = \mathcal{O}_P((\ln n)^{1/2} n^{4\tau-3/2})} \notag \\
				&& 
				\underbrace{
					-H_{\lambda\lambda}^{-1}    
					\big[
					\textstyle\sum_{k=1}^{\dim\theta}\partial_{\lambda\theta'\theta_k}\ell(\theta^*,\lambda_0)
					(\widehat\theta-\theta_0)_k
					\big]
					(\widehat\theta-\theta_0)   }_{\|.\|_{\infty} = \mathcal{O}_P(n^{2\tau-2})} \notag \\
				&& 
				\underbrace{
					-\tfrac{1}{6}H_{\lambda\lambda}^{-1}   
					\big[
					\textstyle\sum_{k=1}^{2n-2}\sum_{j=1}^{2n-2}
					\partial_{\lambda\lambda'\lambda_k\lambda_j}\ell(\theta_0,\check\lambda)
					(\widehat\lambda-\lambda_0)_k(\widehat\lambda-\lambda_0)_j
					\big]
					(\widehat\lambda-\lambda_0)}_{\|.\|_{\infty} = \mathcal{O}_P((\ln n)^{3/2} n^{6\tau-3/2})}
				,
				\label{eq:lambda_foc_full} 
			\end{eqnarray}
			where  $\theta^*$ is between $\widehat{\theta}$ and $\theta_0$,
			and $\lambda^*$ and $\check\lambda$ are between $\widehat{\lambda}$ and $\lambda_0$.
			
			Note that $ \|\widehat{\theta}-\theta_0\|=\mathcal{O}_P(n^{-1})$ from part (i), and  $\|\widehat{\lambda}-\lambda_0\|_{\infty}=\mathcal{O}_P(\sqrt{\ln n} n^{2\tau-1/2})$ from Theorem \ref{thm:lamba_consist}(iii). 
			By Lemma \ref{lemma:guiyi}(iii),  terms  with underbrace are all bound by $o_P(n^{-1/2})$ in the sense of $\ell_{\infty}$ norm, if $\tau\in(0,1/12)$.
			Equivalently,
			$\| \widehat{\lambda}-\lambda_0 - S\partial_{\lambda}\ell\|_{\infty} = o_P(n^{-1/2})$.
			Finally, by Lemma \ref{lemma2_thm:lamba_asynormal}(iii),
			$\sqrt{n}(\widehat{\lambda}-\lambda_0)_{1:L} = \sqrt{n}(S \partial_\lambda \ell)_{1:L} + o_P(1) \stackrel{d}{\rightarrow} \mathcal{N}\big(0, [\Omega_{\infty}]_{1: L, 1: L}\big),$ for any fixed integer $L\geq 1$.
		\end{proof}

		\begin{proof}[Proof of Corollary \ref{thm:thetaBC_asynormal}]
			The proof uses the same arguments as Theorems \ref{thm:lamba_consist} and \ref{thm:lambda_asynormal}. For the expansion as Theorem \ref{thm:lambda_asynormal}, the only difference is that the concentrated score contains the additional term $\partial_\theta\eta$ in ${\mathcal T}_n^{(2)} $.
			We show only the different steps. 
			
			First, the consistency argument is unchanged. 
			$\|\partial_\lambda\eta\|_\infty$ are of smaller order than the score. $\|\partial_\lambda\eta\|_\infty=\mathcal{O}(1)$, whereas
			$\|\partial_\lambda\ell\|_\infty=\mathcal{O}_P(\sqrt{n\log n})$. The Newton argument for  $\widehat\lambda(\theta)$ therefore applies to $\widehat\lambda_{\mathrm{PL}}(\theta)$. 
			On the $n$ dependent neighborhood used in the Step 2 of the proof of Theorem \ref{thm:lamba_consist},
			$
			\sup_{\theta,\lambda}
			\big|
			\frac{2}{n(n-1)}\eta(\theta,\lambda)
			\big|
			=
			\mathcal{O} \left({\ln n} \, {n^{-1}}\right)
			=
			o(1).
			$
			So the penalty term does not affect the consistency of $\widehat\theta_{\mathrm{PL}}$, i.e., $\widehat\theta_{\mathrm{PL}}  \stackrel{p}{\rightarrow} \theta_0$ holds.
			
			Now use the first-order condition for the concentrated penalized objective function 
			$
			0
			=
			\partial_\theta\ell(\widehat\theta_{\mathrm{PL}},
			\widehat\lambda_{\mathrm{PL}}(\widehat\theta_{\mathrm{PL}}))
			+
			\partial_\theta\eta(\widehat\theta_{\mathrm{PL}},
			\widehat\lambda_{\mathrm{PL}}(\widehat\theta_{\mathrm{PL}})).
			$
			A mean-value expansion around $\theta_0$ gives
			$$
			\sqrt N(\widehat\theta_{\mathrm{PL}}-\theta_0)
			=
			\big[
			-\tfrac{1}{N}
			\partial_{\theta\theta'}
			\ell_{\mathrm{PL}}(\theta^*,
			\widehat\lambda_{\mathrm{PL}}(\theta^*))
			\big]^{-1}
			\tfrac{1}{\sqrt N}
			\big[
			\partial_\theta\ell(\theta_0,\widehat\lambda_{\mathrm{PL}}(\theta_0))
			+
			\partial_\theta\eta(\theta_0,\widehat\lambda_{\mathrm{PL}}(\theta_0))
			\big],
			$$
			where $\theta^*$ lies between $\widehat\theta_{\mathrm{PL}}$ and $\theta_0$.
			The Hessian contribution of the penalty is negligible. Indeed,
			$\partial_{\theta\theta'}\eta=\mathcal{O}(n)$, and the mixed terms generated by profiling out
			$\lambda$ are also $\mathcal{O}(n)$. Hence by  Lemma \ref{lemma2_thm:lamba_asynormal}(i),
			$
			-\frac{1}{N}
			\partial_{\theta\theta'}
			\ell_{\mathrm{PL}}(\theta^*,
			\widehat\lambda_{\mathrm{PL}}(\theta^*)) 
			=
			\mathcal I_\infty+o_P(1).
			$
			Therefore
			$$
			\sqrt N(\widehat\theta_{\mathrm{PL}}-\theta_0)
			=
			\mathcal I_\infty^{-1}
			\tfrac{1}{\sqrt N}
			\big[
			\partial_\theta\ell(\theta_0,\widehat\lambda_{\mathrm{PL}}(\theta_0))
			+
			\partial_\theta\eta(\theta_0,\widehat\lambda_{\mathrm{PL}}(\theta_0))
			\big]
			+o_P(1).
			\label{eq:PL_theta_expansion_core}
			$$

			We next expand the score terms  in the bracket.  
			For the penalty score, by $
			\|\widehat\lambda_{\mathrm{PL}}(\theta_0)-\lambda_0\|_\infty=o_P(1),
			$ we have
			$
			\frac{1}{\sqrt N}
			\partial_\theta\eta(\theta_0,\widehat\lambda_{\mathrm{PL}}(\theta_0))
			=
			\frac{1}{\sqrt N}\partial_\theta\eta
			+o_P(1),
			$
			and note $\frac{1}{\sqrt N}\partial_\theta\eta = \mathcal{O}(1)$.
			Similar to the part (i) of the proof of Theorem \ref{thm:lambda_asynormal}, a Taylor expansion of the score terms   around $\lambda_0$ gives
			$
			\frac{1}{\sqrt N}
			\partial_\theta\ell(\theta_0,\widehat\lambda_{\mathrm{PL}}(\theta_0))
			+ \frac{1}{\sqrt N}\partial_\theta\eta(\theta_0,\widehat\lambda_{\mathrm{PL}}(\theta_0))
			=
			\widetilde{\mathcal T}_n^{(1)}
			+
			\widetilde{\mathcal T}_n^{(2)}
			+
			\widetilde{\mathcal T}_n^{(3)}.
			$

			For $\widetilde{\mathcal T}_n^{(1)}
			=
			\frac{1}{\sqrt N}
			\big[
			\partial_\theta\ell
			+
			H_{\theta\lambda}
			(\widehat\lambda_{\mathrm{PL}}(\theta_0)-\lambda_0) 
			\big] $, similar to \eqref{eq:lambda_second_order_with_S}, the asymptotic representation of $\widehat\lambda_{\mathrm{PL}}(\theta_0)-\lambda_0$ becomes
			\begin{align}
				\widehat\lambda_{\mathrm{PL}}(\theta_0)-\lambda_0 
				&=
				-H_{\lambda\lambda}^{-1}\partial_\lambda\ell
				+ H_{\lambda\lambda}^{-1}  (\partial_{\lambda}\eta-\partial_{\lambda}\eta)
				+\widetilde R_{\lambda,n}
				=
				-H_{\lambda\lambda}^{-1}\partial_\lambda\ell 
				+\widetilde R_{\lambda,n}, 
				\label{eq:lambda_second_order_with_S22}
			\end{align}
			where $\widetilde R_{\lambda,n}=o_P(n^{-1/2})$ whenever  $\tau\in(0,1/12)$.
			Hence, by Lemma \ref{lemma2_thm:lamba_asynormal}(ii), we obtain that
			$
			\widetilde{\mathcal T}_n^{(1)}
			=
			\frac{1}{\sqrt N}
			\left[
			\partial_\theta\ell
			-
			H_{\theta\lambda}H_{\lambda\lambda}^{-1}\partial_\lambda\ell
			\right]
			+o_P(1) \stackrel{d}{\rightarrow}
			\mathcal N(0,\mathcal I_\infty).
			$
			
			The quadratic term $\widetilde{\mathcal T}_n^{(2)}$ is the same leading term as in Theorem \ref{thm:lambda_asynormal}. The PL analogous version to \eqref{eqn:A14} is 
			$
			\widetilde{\mathcal T}_n^{(2)} = \tfrac{1}{2\sqrt{N}}(-\partial_{\theta}\eta+\partial_{\theta}\eta) +o_P(1)  = o_P(1)
			$
			if $\tau\in(0,1/12)$. 
			The penalty score exactly offset the leading bias in $\widetilde{\mathcal T}_n^{(2)}$.
			The cubic remainder is again negligible and satisfies $\widetilde{\mathcal T}_n^{(3)}=o_P(1)$ .  
			
			Combining the foregoing results, the limiting distribution follows: 
			$
			\sqrt N(\widehat\theta_{\mathrm{PL}}-\theta_0)
			\stackrel{d}{\rightarrow}
			\mathcal N(0,\mathcal I_\infty^{-1}).
			$
			This proves the corollary. 
		\end{proof}

		\begin{proof}[Proof of Corollary \ref{cor:ape_pl}] 
			Let $\widetilde{\theta}_{\mathrm{PL}} :=\widehat\theta_{\mathrm{PL}}-\theta_0$
			and
			$\widetilde{\lambda}_{\mathrm{PL}} :=\widehat\lambda_{\mathrm{PL}}-\lambda_0$.
			As a bookkeeping from the proof of Corollary \ref{thm:thetaBC_asynormal}, \eqref{eq:lambda_foc_full} in part (ii) of the proof of  Theorem \ref{thm:lambda_asynormal} implies  when $\tau\in(0,1/12)$, 
			$\|\widetilde{\lambda}_{\mathrm{PL}}\|\leq n^{1/2}\| \widetilde{\lambda}_{\mathrm{PL}}  \|_{\infty} = \mathcal{O}_P(n^{-1/2})$,
			$\| \widetilde{\lambda}_{\mathrm{PL}} - S\partial_{\lambda}\ell \| \leq n^{1/2} \| \widetilde{\lambda}_{\mathrm{PL}} - S\partial_{\lambda}\ell \|_{\infty} = o_P(1)$
			, and 
			$\| \widetilde{\lambda}_{\mathrm{PL}} - (-H_{\lambda\lambda}^{-1})\partial_\lambda \ell - S H_{\lambda\theta} (\widetilde{\theta}_{\mathrm{PL}}) \|\leq n^{1/2}\| \widetilde{\lambda}_{\mathrm{PL}} - (-H_{\lambda\lambda}^{-1}) \partial_\lambda \ell  - S H_{\lambda\theta} (\widetilde{\theta}_{\mathrm{PL}}) \|_{\infty} = o_P(n^{-1/2})$.
			
			A Taylor expansion of 
			$\Delta(\widehat\theta_{\mathrm{PL}},\widehat\lambda_{\mathrm{PL}})$ around $(\theta_0, \lambda_0)$ yields 
			\begin{align}
				\Delta(\widehat\theta_{\mathrm{PL}},\widehat\lambda_{\mathrm{PL}}) - \Delta = 
				[\partial_{\theta'}\Delta] \widetilde{\theta}_{\mathrm{PL}}
				+ [ \partial_{\lambda'}\Delta ] \widetilde{\lambda}_{\mathrm{PL}}
				+ \tfrac{1}{2}\widetilde{\lambda}_{\mathrm{PL}}'   [\partial_{\lambda\lambda'}\Delta]  \widetilde{\lambda}_{\mathrm{PL}}
				+ R_{\Delta,n} \label{eqn:Deltaexpand}
			\end{align}
			where 
			\begin{eqnarray*}
				R_{\Delta,n} &=&
				\tfrac{1}{2}\widetilde{\theta}_{\mathrm{PL}}'  [\partial_{\theta\theta'}\Delta(\beta^*,\lambda_0)]   \widetilde{\theta}_{\mathrm{PL}}
				+ \widetilde{\theta}_{\mathrm{PL}}'   [\partial_{\theta\lambda'}\Delta(\beta_0,\lambda^*)]        \widetilde{\lambda}_{\mathrm{PL}} \\
				&&+ \tfrac{1}{6} \textstyle\sum_{i=1}^{2n-2} \widetilde{\lambda}_{\mathrm{PL}}' [\partial_{\lambda\lambda'\lambda_i}\Delta(\beta_0,\check\lambda)] \widetilde{\lambda}_{\mathrm{PL}}(\widetilde{\lambda}_{\mathrm{PL}})_i,
			\end{eqnarray*}
			with $\theta^*$ between $\widehat\theta_{\mathrm{PL}}$ and $\theta_0$, and  $\lambda^*$ and $\check\lambda$ between $\widehat\lambda_{\mathrm{PL}}$ and $\lambda_0$.
			
			We verify the remainder $R_{\Delta,n}$. Note that from Corollary \ref{thm:thetaBC_asynormal}, we have $\|\widetilde{\theta}_{\mathrm{PL}}\|=\mathcal{O}_P(n^{-1})$, 
			and $\|\widetilde{\lambda}_{\mathrm{PL}}\|=\mathcal{O}_P(1)$. 
			Since $\|\partial_{\theta\theta'}\Delta\|=\mathcal{O}(1)$, the first term in $R_{\Delta,n}$ is $\mathcal{O}_P(n^{-2})$. 
			Because $\|\partial_{\theta\lambda'}\Delta\|\ =\mathcal{O}(n^{-1/2})$, the second mixed term is $\mathcal{O}_P(n^{-3/2})$.
			The third order term in  $\lambda$ is again $\mathcal{O}_P(n^{-3/2})$.\footnote{
				Since  $\Delta_{ij}$ depends on fixed effects only through the two  blocks
				$(\lambda_i,\lambda_j)$, dyadic level sparsity of $\partial_{\lambda\lambda\lambda}\Delta$ gives
				$
				\big|
				\sum_{i}
				\widetilde{\lambda}_{\mathrm{PL}}'
				[\partial_{\lambda\lambda'\lambda_i}\Delta(\beta_0,\check\lambda)]
				\widetilde{\lambda}_{\mathrm{PL}}
				(\widetilde{\lambda}_{\mathrm{PL}})_i
				\big|
				\le
				\frac{C}{n(n-1)}
				\sum_{i\neq j}
				(\|\widetilde{\lambda}_{\mathrm{PL},i}\|+\|\widetilde{\lambda}_{\mathrm{PL},j}\|)^3 .
				$
				Using
				$
				(x+y)^3\le C\|\widetilde{\lambda}_{\mathrm{PL}}\|_\infty(x^2+y^2),
				$
				we obtain
				$
				\left|
				\sum_{i}
				\widetilde{\lambda}_{\mathrm{PL}}'
				\bigl[\partial_{\lambda\lambda'\lambda_r}\Delta(\beta_0,\check\lambda)\bigr]
				\widetilde{\lambda}_{\mathrm{PL}}
				(\widetilde{\lambda}_{\mathrm{PL}})_i
				\right|
				\le
				\mathcal{O}_P \big(n^{-1}   \|\widetilde{\lambda}_{\mathrm{PL}}\|_\infty\|\widetilde{\lambda}_{\mathrm{PL}}\|^2
				\big)
				= \mathcal{O}_P(n^{-3/2}),
				$
				where we used  
				$\|\widetilde{\lambda}_{\mathrm{PL}}\|_\infty=\mathcal{O}_P(n^{-1/2}),$ and $\|\widetilde{\lambda}_{\mathrm{PL}}\|=n^{1/2} \|\widetilde{\lambda}_{\mathrm{PL}}\|_\infty= \mathcal{O}_P(1)$.
			}
			Thus the remainder term satisfies   $  |R_{\Delta,n}| \leq o_P(n^{-1}).$
			
			We next evaluate the linear terms in \eqref{eqn:Deltaexpand}. 
			Using  
			$\| \widetilde{\lambda}_{\mathrm{PL}} - (-H_{\lambda\lambda}^{-1})\partial_\lambda \ell - S H_{\lambda\theta} (\widetilde{\theta}_{\mathrm{PL}}) \|= o_P(n^{-1/2})$, 
			and
			$ \widetilde{\theta}_{\mathrm{PL}}  = \mathcal{I}_{\infty}^{-1}  \tfrac{1}{{N}}\big(\partial_\theta\ell - H_{\theta\lambda} H_{\lambda\lambda}^{-1}\partial_\lambda\ell\big) + o_P(n^{-1})$,
			we have
			\begin{align}
				\sqrt{N}\big(    [\partial_{\theta'}\Delta] \widetilde{\theta}_{\mathrm{PL}} + [ \partial_{\lambda'}\Delta ] \widetilde{\lambda}_{\mathrm{PL}}    \big)
				=U_{\Delta}  + o_P(1), \label{eq:ape_linear_part}
			\end{align}
			where $U_{\Delta}:=  ([\partial_{\theta'}\Delta ]+ [\partial_{\lambda'}\Delta]  S H_{\lambda\theta} ) \mathcal{I}_{\infty}^{-1}  \tfrac{1}{{\sqrt{N}}}\big(\partial_\theta\ell - H_{\theta\lambda} H_{\lambda\lambda}^{-1}\partial_\lambda\ell\big)  +   \sqrt{N} [\partial_{\lambda'}\Delta ]  (-H_{\lambda\lambda}^{-1}) \partial_\lambda \ell.$
			
			We now treat the quadratic term in \eqref{eqn:Deltaexpand}. 
			Since  
			$\| \widetilde{\lambda}_{\mathrm{PL}} - S \partial_\lambda \ell  \|= o_P(1)$ and 
			$
			\|\partial_{\lambda\lambda'}\Delta\|=\mathcal{O}(n^{-1}),
			$ 
			we have
			$
			\widetilde{\lambda}_{\mathrm{PL}}'   [\partial_{\lambda\lambda'}\Delta]   \widetilde{\lambda}_{\mathrm{PL}}
			= (S \partial_\lambda \ell)' [\partial_{\lambda\lambda'}\Delta]  (S \partial_\lambda \ell)  
			+o_P(n^{-1}).
			$
			By the information identify $\mathbb{E}_0(\partial_{\lambda}\ell\partial_{\lambda'}\ell)= -H_{\lambda\lambda} $ and Lemma \ref{lemma:guiyi}(i),
			$$
			\mathbb{E}_0\big[  S \partial_\lambda \ell)' [\partial_{\lambda\lambda'}\Delta]  (S \partial_\lambda \ell)   \big]
			= 
			\operatorname{tr}\big([\partial_{\lambda\lambda'}\Delta] S(-H_{\lambda\lambda}) S \big)
			= 
			\operatorname{tr}\big([\partial_{\lambda\lambda'}\Delta] S \big) + o(n^{-1}).
			$$
			Therefore, 
			$
			\sqrt N \big(\tfrac{1}{2} \widetilde{\lambda}_{\mathrm{PL}}'   [\partial_{\lambda\lambda'}\Delta]   \widetilde{\lambda}_{\mathrm{PL}}  - \tfrac{1}{2}\operatorname{tr}\big([\partial_{\lambda\lambda'}\Delta] S \big)   \big) =   o_P(n^{-1}). 
			$
			Together with \eqref{eqn:Deltaexpand} and \eqref{eq:ape_linear_part}, 
			$$
			\sqrt N
			\big(
			\Delta(\widehat\theta_{\mathrm{PL}},\widehat\lambda_{\mathrm{PL}})
			-
			\Delta(\theta_0,\lambda_0)
			-
			\tfrac12\operatorname{tr}(\partial_{\lambda\lambda'}\Delta\,S)
			\big)
			=
			U_{\Delta}+o_P(1).
			$$
			
			To complete the proof, it remains to analyze the asymptotic distribution of $U_{\Delta}$.
			Note that $U_{\Delta}$ is a linear combination of mean zero scores. By Lemma \ref{lemma2_thm:lamba_asynormal}(ii) and (iii), the asymptotic distribution of $U_{\Delta}$ follows. For the asymptotic variance, we note that two score components (i.e., $\tfrac{1}{{\sqrt{N}}}\big(\partial_\theta\ell - H_{\theta\lambda} H_{\lambda\lambda}^{-1}\partial_\lambda\ell\big)  $ and $ \partial_{\lambda}\ell$) in $U_{\Delta}$ are  orthogonal:
			$
			\mathbb{E}_0\big[  \tfrac{1}{{\sqrt{N}}}\big(\partial_\theta\ell - H_{\theta\lambda} H_{\lambda\lambda}^{-1}\partial_\lambda\ell\big)  \partial_{\lambda'}\ell   \big] = 0. 
			$
			Thus, the asymptotic variance has the form of
			\begin{align}
				\mathrm{Var}_0(U_{\Delta}) = \mathbb{E}_0(U_{\Delta}U_{\Delta}')
				=& ([\partial_{\theta'}\Delta ]+ [\partial_{\lambda'}\Delta]  S H_{\lambda\theta} )
				\mathcal{I}_{\infty}^{-1}
				([\partial_{\theta}\Delta ]+   H_{\theta\lambda}S [\partial_{\lambda}\Delta]) \notag \\
				&+ N   [\partial_{\lambda'}\Delta ]  S  [\partial_{\lambda}\Delta ]
				=: V_{\Delta,n}  \label{eq:APEvariance}
			\end{align}
			Thus, 
			$
			\sqrt N
			\big(
			\Delta(\widehat\theta_{\mathrm{PL}},\widehat\lambda_{\mathrm{PL}})
			-
			\Delta(\theta_0,\lambda_0)
			-
			\tfrac12\operatorname{tr}(\partial_{\lambda\lambda'}\Delta\,S)
			\big)
			=
			U_{\Delta}+o_P(1)  \overset{d}{\rightarrow} \mathcal N(0, V_{\Delta,\infty}).
			$
			This completes the proof.
		\end{proof}

	\end{appendix}

	%%%%%%%%%%%%%%%%%%%%%%%%%%%%%%%%%%%%%%%%%%%%%%
	%% Bibliography:                            %%
	%%%%%%%%%%%%%%%%%%%%%%%%%%%%%%%%%%%%%%%%%%%%%%
	%% IMPORTANT: References in the bibliography should be complete, 
	%% including the first and last names, and date of publication.
	
	%% If your bibliography is in bibtex format, uncomment commands:
	%\bibliographystyle{qe} % Style BST file for qe
	\bibliographystyle{ecta-fullname} % Style BST file for ecta
	\bibliography{reference}  % Bibliography file (usually '*.bib')

\end{document}